\newtheorem{theorem}{Theorem}
\newtheorem{observation}{Observation}
\newtheorem{lemma}{Lemma}
\newtheorem{definition}{Definition}
\newtheorem{claim}{Claim}
\newtheorem{corollary}{Corollary}
\newtheorem{proposition}{Proposition}
\newtheorem{assumption}{Assumption}
\newenvironment{proof}[1][Proof]{\noindent\textbf{#1.} }{\ \rule{0.5em}{0.5em}}
\newcolumntype{L}[1]{>{\raggedright\let\newline\\arraybackslash\hspace{0pt}}m{#1}}
\newcolumntype{C}[1]{>{\centering\let\newline\\arraybackslash\hspace{0pt}}m{#1}}
\newcolumntype{R}[1]{>{\raggedleft\let\newline\\arraybackslash\hspace{0pt}}m{#1}}
\title{The Limits of Search Algorithms\thanks{We are deeply grateful to Wei Zhao for his support and guidance throughout this work. For helpful comments, we thank Zhonghong Kuang, Sanxi Li, Yichuan Lou, Daisuke Oyama, Takuo Sugaya, Jidong Zhou, the participants of the RUC Theory Reading Group, and the attendees at GAMES 2024.}
\vspace{-0.5em}}
\author{
    Xiaoyu Chen\thanks{Graduate School of Economics, University of Tokyo. Email: lzdcxy@gmail.com}, \ \
    Jingmin Huang\thanks{School of Economics, Renmin University of China. Email: jingmin.huang@ruc.edu.cn},  \ \
    Yibo Lian\thanks{Department of Economics, Pennsylvania State University. Email: yilian.econ@gmail.com}
}
\date{\today\vspace{-0.5em}}
\begin{document}
\maketitle
\begin{abstract}
A platform commits to a search algorithm that maps prices to search order. Given this algorithm, sellers set prices, and consumers engage in sequential search. This framework generalizes the ordered search literature. We introduce a special class of search algorithms, termed “contracts,” show that they implement all possible equilibrium prices and then characterize the set of implementable prices. Within this set, we identify the seller-optimal contract, whose first-best outcome remains an open problem for a multiproduct seller. Our findings highlight the conditions under which the platform favors price dispersion or price symmetry. Furthermore, we characterize the consumer-optimal and socially optimal contracts, which exert opposing forces to the seller-optimal contract: while the seller-optimal contract promotes higher prices, the consumer-optimal and socially optimal contracts favor lower prices.

\end{abstract}

\section{Introduction}

Consumers shop and sequentially search for the best alternative on platforms such as Amazon or eBay. These platforms can influence consumers’ search behavior—and consequently, sellers’ sales and welfare distribution—by designing product rankings through their search algorithms, as the order in which products are visited affects sales.

The ordered search literature examines market equilibrium under a given search algorithm, which can be classified into two categories. The first category assumes an exogenous search order, as in \citet{wolinsky_true_1986}, where searches are random, and \citet{armstrong_prominence_2009}, where one seller is prominent. The second category consists of price-directed search, building on the optimal search rule of \citet{weitzman_optimal_1979}, as seen in \citet{HAAN2018223} and \citet{choi_consumer_2018}. Specifically, under symmetric sellers, the optimal search rule dictates that consumers inspect products in ascending order of prices.

We generalize these two categories by modeling the search algorithm as a function that maps prices to search orders, an approach also adopted by \citet{BarIsaac2023Monetizing}. In particular, we consider a model with one consumer, one monopolistic platform, and two sellers. The platform first commits to a search algorithm, after which sellers set prices and are ranked accordingly. Finally, the consumer begins her shopping and search process. Within this framework, we investigate: \textit{(i) which prices and search orders are implementable, and (ii) among the set of implementable prices and search orders, which ones are optimal under different objective functions.}




We address question (i) by introducing a special class of search algorithms, which we term “contracts.” A contract is a tuple that specifies sellers’ prices and search orders while penalizing non-compliant sellers by ranking them second. We demonstrate that any equilibrium in the economy can be implemented by a contract, reminiscent of the revelation principle. Consequently, we focus on contracts without loss of generality and characterize the set of implementable prices by deriving sellers’ incentive compatibility constraints for contract acceptance. Specifically, for a contract to be accepted, no seller should find unilateral deviation profitable. We show that the set of implementable prices is compact, convex, and exchangeable. Regarding comparative statics, higher search costs increase the penalty of being ranked last, strengthening sellers’ incentives to adhere to the platform’s suggested prices, thereby expanding the set of implementable prices.

For question (ii), we first examine industry profit maximization, which corresponds to platforms’ common proportional commission fee revenue model. We also explore an open question in the literature: how should a platform price and position its products if it operates as a multiproduct seller, i.e., when it is integrated with sellers? We show that in this case, the platform optimally makes one product prominent and sets higher prices for it. When the platform and sellers are not integrated, the maximization problem is subject to sellers’ incentive compatibility constraints. The optimal search algorithm depends on the magnitude of search costs: if search costs exceed a certain threshold, the optimal search algorithm resembles the first-best outcome, featuring asymmetric prices and a higher probability of ranking the higher-priced seller first. However, if search costs fall below the threshold, the optimal search algorithm instead induces the highest possible price for each seller while implementing random search.

Next, we identify the optimal contracts for maximizing (or minimizing) total demand, social welfare, and consumer surplus. The contract that sets the lowest symmetric price and implements random search maximizes total demand, social welfare, and consumer surplus, while simultaneously minimizing industry profit. In other words, under the proportional commission fee revenue model, the platform’s incentives are misaligned with those of society: the platform’s optimal contract requires high prices, whereas low prices are necessary to enhance social welfare. This result also suggests that the misalignment can be resolved if the platform’s revenue is based on sellers’ sales rather than their profits.



\paragraph{Related Literature}As mentioned above, the ordered search literature examines market equilibrium under a specific search algorithm, where the search order is either exogenous \citep{wolinsky_true_1986, arbatskaya_ordered_2007, armstrong_prominence_2009} or determined by \citet{weitzman_optimal_1979}’s optimal search rule \citep{HAAN2018223, choi_consumer_2018}. Our paper develops a more general framework for search algorithms, incorporating these existing models as special cases.

Among studies on search algorithms, \citet{BarIsaac2023Monetizing} is the most closely related to our work. They consider the same formulation of a search algorithm as a mapping from prices to search orders and analyze the optimal steering method and fee structure, comparing ad auctions with search algorithms. However, their model assumes that consumers observe only the prominent seller rather than engaging in sequential search. In contrast, our paper adopts a sequential search framework, enabling a more nuanced analysis and direct comparison with the sequential search literature.

Our work also relates to the platform design literature, including \citet{hagiu2011intermediaries} and \citet{zhong_platform_2022}. Studies in this area typically assume that platforms possess superior information compared to consumers and examine the platform’s role and incentives in improving matching efficiency. In contrast, our paper assumes that the platform does not have more information than consumers. Instead, we focus on the platform’s role in allocating prominence among sellers.

In Section \ref{sec: model}, we describe the setup and compare contracts with search algorithms used in the sequential search literature. In Section \ref{sec:implementable prices}, we address question (i) by characterizing the set of implementable prices. In Section \ref{sec: optimization}, we answer question (ii) by analyzing optimal contracts under different objective functions. In Section \ref{sec: discussion}, we discuss corner solutions.

\section{The Model}
\label{sec: model}
The environment consists of two sellers \footnote{Like \citet{BarIsaac2023Monetizing}, this paper adopts a parsimonious model that considers only two sellers. Explicitly modeling additional sellers would significantly increase the complexity of the problem, as a search algorithm would need to specify the probability of each possible permutation of the  $n$ -seller rankings.}, labeled $i = 1,2$, one platform, and one unit mass of consumers, where each consumer demands one unit product. Each seller \(i\) sells a single product on the platform, with a price \(p_{i}\) and a constant marginal cost \(c_{i}\) normalized to zero. Consumers have idiosyncratic match utilities from seller $i$'s product, denoted by $u_i$, which is not observed by sellers. A particular consumer anticipates that seller $i$'s match utility is independently and identically drawn from a common distribution with CDF $F(u)$ and PDF $f(u)$, whose support is normalized to $[0,1]$. 
Therefore, a consumer who eventually buys from seller $i$ obtains payoff $u_i - p_i$. We assume \(p_{i}\) is observed by consumers before they search, and to know $u_i$, consumers can incur a search cost $s$ to inspect seller $i$. After the search, consumers purchase the product with the highest payoff $u_i - p_i$ among all inspected products or take the outside option, which is normalized to zero. We assume free recall so consumers have no return cost to a previously inspected product. The first search is assumed to be free, so consumers search at least one product.

The platform determines the search order by committing to a ``search algorithm" $\alpha_{i}: \mathbb{R}^2 \to \mathbb{R}, (p_1,p_2)\mapsto \alpha_i(p_1,p_2)$ for each seller. A search algorithm maps prices to the probability that each seller is searched first. \footnote{Another interpretation of $\alpha_i$ is the proportion of consumers that first inspect seller $i$, which is implicitly adopted by the literature of ordered search, e.g. \citet{armstrong_ordered_2017}.} So \(\alpha_i \in [0,1]\) and \(\sum_i{\alpha_i} =1 \). When there are only two sellers, we denote \(\alpha_1\) by \(\alpha\) and $\alpha_2$ by $1-\alpha$ to simplify our notations. 


Different ranks affect sellers’ demand (elaborated further in the next subsection) and their pricing decisions. Let $D_i^n$ represent seller $i$'s demand when ranked in the $n$-th position. Each seller's profit under search algorithm $\alpha(p_1,p_2)$ is the weighted sum of profits when ranked first and second:
\begin{equation}
    \begin{aligned}
    &\Pi_1(p_1,p_2) = \alpha(p_1,p_2) p_1D_1^1(p_1,p_2) + (1-\alpha(p_1,p_2) ) p_2 D_1^2(p_1,p_2)\\
    &\Pi_2(p_1,p_2) =  \alpha(p_1,p_2)  p_2D_2^2(p_1,p_2) + (1-\alpha(p_1,p_2) ) p_2D_2^1(p_1,p_2)
\end{aligned} 
\label{eq: profits}
\end{equation}

The game unfolds as follows: First, the platform commits to a ranking function $\alpha$. Next, the sellers set their prices to maximize their respective profits, $\Pi_1$ and $\Pi_2$. Finally, consumers engage in a sequential search process. We focus on pure strategy Nash equilibria of sellers' pricing decisions.




\subsection{Sequential Search}
For a sequential search problem, \citet{weitzman_optimal_1979}'s seminal results outline the \textit{optimal selection rule} and the \textit{optimal stopping rule}. In our context, although we let the platform determine the search order, the optimal selection rule is not violated since the platform also makes the first search free. So consumers are willing to follow the search order prescribed by the platform. 

The optimal stopping rule features a cutoff structure. Suppose consumers first search seller $i$ and learn $u_i$, if \( u_{i} \leq p_{i} \), the consumer never buys from seller $i$. If she takes the outside option, her surplus is zero. If she incurs a search cost \( s \) to visit seller \(j\),  she gains \(u_{j}-p_{j}\) when \(u_{j}-p_{j} \geq 0\) and takes the outside option when \(u_{j}-p_{j}<0\). Define 
\[V(p) \equiv \mathbb{E}_{u}[\max \{0, u-p\}],\nonumber\]
so that \( V(p_{j})-s \) is the expected surplus of the search.

Define \(V(A) \equiv s\), so the consumer will never search if and only if the expected surplus of search is negative, i.e., \(V(p_{j})-s<0\) or seller \(j\) charges \(p_{j}>A\)\footnote{\(V(p) = \int_p^1 (u-p) dF(u)\) is strictly decreasing since \(V^{\prime}(p) =F(p)-1 \leq 0,\) and equality holds if and only if \(p=1\).}. Therefore, \(A\) is the threshold price that induces no search when the consumer cannot benefit from the first seller. The seller ranked second will never charge a price above \(A\).

When \( u_{i}>p_{i} \), the consumer's surplus from seller \(i\) is \( u_{i}-p_{i} \). If she searches for seller \( j \), her expected surplus is
\[\mathbb{E}_{u_{j}}[\max \{u_{i}-p_{i}, u_{j}-p_{j}\}]-s=u_{i}-p_{i}+V(p_{j}+u_{i}-p_{i})-s,\]
so the consumer will buy from seller $i$ immediately without sampling seller $j$ if and only if \(V(p_{j}+u_{i}-p_{i})-s < 0\), i.e., \( u_{i}-p_{i} > A-p_{j} \). Here, \(A\) is the threshold match utility that induces immediate purchase when the two sellers offer the same price. Specifically, when the second seller charges above \(A\), the consumer will buy from the first seller immediately.

Figure \ref{Demand} shows the demand distribution when seller \(i\) is ranked first. The closed form of demand can be found in the Appendix \ref{appendix: search problem}.  

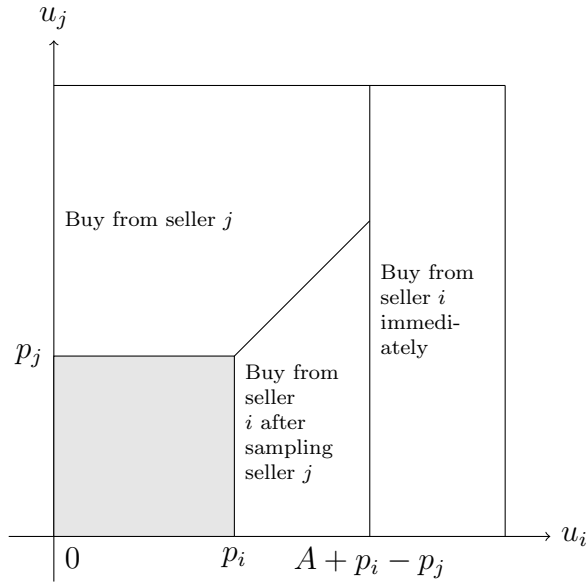
\begin{figure}[h]
\centering
\caption{Distribution of demand when seller $i$ is ranked first.}
   \begin{tikzpicture}[scale=6]
    \draw[->] (-0.1,0) -- (1.1,0) node[right] {$u_i$} node[below right] at (0,0) {$0$} ;
    \draw[->] (0,-0.1) -- (0,1.1) node[above] {$u_j$} node[right, font = \scriptsize] at (0,0.7) {Buy from seller $j$} node[left] at (0,0.4) {$p_j$} ;
    \draw (1,0) -- (1,1);
    \draw (0,1) -- (1,1);
    \filldraw[fill=gray!20!white] (0,0) rectangle (0.4,0.4) node[right,text width = 1.3cm, font = \scriptsize] at (0.4,0.25){Buy from seller $i$ after sampling seller $j$};
    \draw  (0.7,1) -- (0.7,0) node[below]{$A + p_i - p_j$} node[midway,right,text width = 1.3cm, font = \scriptsize]{Buy from seller $i$ immediately};
    \draw (0.4,0.4)  -- (0.7,0.7);
    \draw (0.4,0) node[below] {$p_i$};
    \end{tikzpicture}
    \label{Demand}
\end{figure} 

The distribution of demand reveals that the first-searched (prominent) seller benefits from search frictions. In the literature, this advantage is attributed to differences in the price quotes received by consumers \citep{bergemann_search_2021}. Without search frictions ($s = 0$), consumers can compare $u_1 -p_1$ and $u_2-p_2$ before making a purchase, as every consumer receives two price quotes. With search frictions $(s>0)$, those who do not search only receive price quotes from the prominent seller. When search frictions are substantial such that $A > p_j$, if we suppose $i$ is prominent, our previous analysis suggests that all consumers receive price quotes only from seller $i$, which effectively makes seller $i$ a monopolist. This advantage is represented by higher demand for the prominent seller, which in our notations is $D_i^1 - D_j^2 \geq 0$ when $p_i = p_j$.

Nevertheless, since we allow searches to be randomized ($\alpha \in [0,1]$), a seller can be both ranked first and second with positive probability under the same prices. It motivates us to consider another form of advantage, $D_i^1 - D_i^2$. We define a new function $B_i(p_1,p_2):= D_i^1(p_1,p_2)-D_i^2(p_1,p_2)$ as the difference of demand between seller $i$ being ranked first and second under price $(p_1,p_2)$. The following lemma shows this advantage always exists and is symmetric between two sellers; neither property depends on prices. Also, intuitively, the advantage increases with $s$. 

\begin{lemma}[``Bonus'']
    For any $p_1,p_2$, 
    \begin{itemize}
        \item $B(p_1,p_2): =B_1(p_1,p_2)=B_2(p_1,p_2)\ge 0$
        \item $B(p_1,p_2)$ strictly increases with $s$.
    \end{itemize}
    \label{lem: bonus}
\end{lemma}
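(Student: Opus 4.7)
The plan is to work directly with the closed-form demand expressions implied by Figure \ref{Demand} (detailed in the Appendix), focusing on the economically relevant regime $p_1, p_2 \le A$ (beyond which the second-ranked seller earns zero demand and the claim reduces to a direct check on a degenerate case). The identity $B_1 = B_2$ will follow from the simple rearrangement
\[
B_1 - B_2 \;=\; \bigl(D^1_1 + D^2_2\bigr) - \bigl(D^1_2 + D^2_1\bigr) \;=\; T_1 - T_2,
\]
where $T_k$ denotes total industry demand when seller $k$ is ranked first. Under $p_i \le A$, both sellers are ``searchable'': the first search is free and the second carries positive expected value, so a consumer ends at the outside option if and only if $u_1 < p_1$ and $u_2 < p_2$, an event with probability $F(p_1)F(p_2)$ that is manifestly invariant to the ordering. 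Hence $T_1 = T_2 = 1 - F(p_1)F(p_2)$ and $B_1 = B_2$.

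For the inequality $B \ge 0$, a pointwise coupling on the utility draws is cleaner than brute-force algebra. Fix $(u_1, u_2)$ and let $\chi_k(u_1, u_2)$ indicate that seller $1$ makes the sale under the ordering with seller $k$ first. I would show $\chi_1 \ge \chi_2$ pointwise. If $\chi_2 = 1$, then even under the disadvantageous ``$2$ first'' ordering the consumer bought from seller $1$, which forces $u_1 \ge p_1$ and $u_1 - p_1 \ge \max(0, u_2 - p_2)$. Under ``$1$ first'' she then either buys $1$ immediately (when $u_1 - p_1 \ge A - p_2$, the ``captured'' region) or samples $2$ and still prefers $1$ since $u_1 - p_1 \ge u_2 - p_2$. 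In either case $\chi_1 = 1$, so $B = \mathbb{E}[\chi_1 - \chi_2] \ge 0$.

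For the monotonicity, $s$ enters $B$ only through $A$, and $V(A) = s$ together with $V'(A) = F(A) - 1 < 0$ yields $dA/ds = -1/(1 - F(A)) < 0$. Differentiating $D^1_1$ and $D^2_1$ via Leibniz's rule, the boundary contributions at $u = A + p_1 - p_2$ and $u = A + p_2 - p_1$ combine to
\[
\frac{\partial B}{\partial A} \;=\; -\bigl[1 - F(A)\bigr]\bigl(f(A + p_1 - p_2) + f(A + p_2 - p_1)\bigr) \;\le\; 0,
\]
with strict inequality whenever $s > 0$. The chain rule then gives $dB/ds > 0$.

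The main obstacle is the bookkeeping in the coupling step: the event $\{\chi_2 = 1\}$ fragments according to whether $u_2 \ge p_2$ (consumer chose to sample $1$ before being ``captured'' by $2$) or $u_2 < p_2$ (consumer sampled $1$ from scratch after a bad draw), and each branch must be matched carefully to the corresponding node of the ``$1$ first'' decision tree to conclude $\chi_1 = 1$. A secondary subtlety is being explicit about the restriction to $p_i \le A$, since outside that regime the demand formulas change qualitatively and the symmetry $B_1 = B_2$ can fail; fortunately this regime is irrelevant because, as observed just before the lemma, no seller charges above $A$.
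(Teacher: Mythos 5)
Your proposal is correct. The symmetry step is the paper's own argument: the Appendix proves $B_1=B_2$ exactly from $D_i^1+D_j^2=D_i^2+D_j^1=1-F(p_i)F(p_j)$, and the monotonicity step is also identical (the paper computes $\partial B_i/\partial A=-[1-F(A)][f(A+\Delta p)+f(A-\Delta p)]\le 0$ and chains through $A(s)$). Where you genuinely depart is the non-negativity of $B$: the paper establishes $B\ge 0$ by algebraically regrouping the closed-form expression into $(1-F(A+\Delta p))(1-F(A-\Delta p))+\int_{A-\Delta p}^{A}(F(u)-F(A-\Delta p))f(u+\Delta p)\,du$, a sum of manifestly non-negative terms, whereas you couple the two orderings realization-by-realization and show $\chi_1\ge\chi_2$ pointwise. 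Your coupling is valid (the tie event $u_1-p_1=u_2-p_2$ is null, and both branches of $\{\chi_2=1\}$ — $u_2\ge p_2$ uncaptured versus $u_2<p_2$ — do force $u_1-p_1\ge\max\{0,u_2-p_2\}$ and hence a sale for seller 1 under the reversed order). It buys a distribution-free, economically transparent reason why prominence weakly helps, at the cost of the case bookkeeping you flag; the paper's algebra is shorter once the closed forms are in hand and also delivers the exact characterization of when $B=0$ (namely $A=1$, i.e.\ $s=0$), which your coupling gives only as the non-strict inequality. Your explicit restriction to $p_i\le A$ is appropriate and consistent with Assumption \ref{assumption: A large}.
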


Proof is in Appendix \ref{appendix: search problem}.

$B$'s symmetry and non-negative properties, which do not depend on prices, are the building blocks of the following analysis of this paper, while $B$'s dependence on $s$ manifests in properties of the implementable price set. We will first see how those properties allow the platform to punish sellers by simply ranking them behind, based on which we write down a special class of search algorithms and obtain a result analogous to the ``revelation principle''.

\subsection{Search Algorithms and ``Contracts''}

As previously defined, a search algorithm maps sellers' prices to the probability each seller is searched first. Here we get a closer look at those search algorithms adopted in the literature.

\noindent\textbf{Prominence \citep{armstrong_prominence_2009}} Prominence means one seller will always be ranked first regardless of prices. If seller 1 is prominent, the search algorithm is $\alpha(p_1,p_2) = 1$. Given this search algorithm, the only difference between our model and \citet{armstrong_prominence_2009} is that in their paper, the prices are unobservable before searches. A unique equilibrium $(p_1^*,p_2^*)$ ($p_1^* < p_2^*$) exists in their paper, i.e., the prominent seller sets lower prices. While in our model, the unique equilibrium is $(\hat{p}_1^*, \hat{p}_2^*)$, where $\hat{p}_1^* > \hat{p}_2^*$.  


\noindent \textbf{Random Search \citep{wolinsky_true_1986}}
Random search means every seller has the same probability to be prominent, i.e. $\alpha(p_1,p_2) = 1/2$. In this case, a symmetric equilibrium exists.

\noindent \textbf{Price-Directed Search}. When prices are advertised, if the search costs for every seller are the same and products are ex-ante symmetric, consumers search in the ascending order of prices according to the optimal selection rule \citep{weitzman_optimal_1979}.

\[\alpha(p_1,p_2) = \begin{cases}
    1, &p_1<p_2 \\
    0, &p_1>p_2\\
    \alpha \in [0,1], &p_1 = p_2
\end{cases}\]
Because slightly undercutting a rival’s price causes a discrete jump in profit, prices will be chosen according to mixed strategies in equilibrium (see, for instance, \citet{choi_consumer_2018}). 

\noindent \textbf{``Contract"}. We introduce a special class of search algorithms as follows: 
\[\alpha(p_1, p_2) = \begin{cases}
    \alpha^*,& p_1 = p_1^*, p_2 = p_2^*\\
    1, & p_1 = p_1^*, p_2 \neq p_2^*\\
    0, &p_1 \neq p_1^*, p_2 = p_2^*\\
    \alpha \in [0,1], &p_1 \neq p_1^*, p_2 \neq p_2^*
\end{cases}\]

This search algorithm can be interpreted as the platform recommending a ``contract'' $(p_1^*, p_2^*, \alpha^*)$ to sellers. If a seller does not accept the ``contract'', it will be punished by being ranked second with probability one. Since in Nash equilibrium, we care about unilateral deviation, the value of $\alpha(p_1, p_2)$ when $p_1 \neq p_1^*$ and $p_2 \neq p_2^*$ is inconsequential, provided it remains within $[0,1]$. In the next section, we will leverage contracts to characterize the set of implementable prices. 

For sharper and tractable results, we introduce the following assumptions. 

\begin{assumption}
    $u_i$ ($i = 1,2$) is uniformly distributed.
    \label{assumption: uniform}
\end{assumption}

\begin{assumption}
    $A$ is large enough, i.e., the search cost $s$ is small enough\footnote{The exact $A$ will be specified when the assumption is applied.}.
    \label{assumption: A large}
\end{assumption}

Assumption \ref{assumption: uniform} is needed since the standard sequential search model formulated as ours is barely tractable under general assumptions about $u_i$'s distribution. A recently developed tool leverages the fact that the demand system of sequential search can be derived from a discrete choice problem \citep{armstrong_which_2015}. This allows applying results from \citet{quint_imperfect_2014} to ensure log-concavity and log-supermodularity of demand under the assumption that the CDF and survival function of a new random variable are log-concave. While \citet{choi_consumer_2018} extend this method to obtain general results for a slightly modified search problem, their approach is not applicable to our model, as the new random variable arising in our corresponding discrete choice problem does not have a log-concave CDF in generic \footnote{In short, the random variable $w_i = \min\{v_i, A\}$ does not have a log-concave CDF in general because it has a mass point. }.

Assumption $\ref{assumption: A large}$ is used to focus our main analysis on the nondegenerate case $A \geq p_i^*, i=1,2$, where $p_i^*$ is the equilibrium price of seller $i$. This case captures the key insights of the consumer search model. When $A < p_i^*, i=1\text{ or 2}$, as previously discussed, the prominent seller effectively acts as a monopolist, requiring a new demand system to address the discontinuity between these cases; under this case, $p_i^*$ could no longer be an equilibrium price. While this discontinuity is acknowledged in the literature, it is generally not a major concern, as it does not arise in equilibrium under the special search algorithms discussed above. However, for “contracts,” this issue becomes significant because, as we will explain, “contracts” implement the largest set of equilibria. Further details are provided in Section \ref{sec: discussion}.

\section{Implementable Prices}
\label{sec:implementable prices}
Given the model, the first question we are concerned about is to what extent can a platform influence the market by committing to search algorithms. Formally, we are interested in which prices are implementable by search algorithms. The following result, analogous to the ``revelation principle'', shows that it is without loss of generality to focus on ``contracts''.

\begin{proposition}
    Any pure strategy equilibrium of the game 
can be implemented by a contract.
\end{proposition}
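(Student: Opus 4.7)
The plan is to take any pure-strategy equilibrium $(\alpha, p_1^*, p_2^*)$ of the original game and show that the induced contract $(p_1^*, p_2^*, \alpha^*)$ with $\alpha^* := \alpha(p_1^*, p_2^*)$ sustains the same equilibrium prices. Concretely, I define the contract search algorithm $\tilde{\alpha}$ as in the displayed definition in the paper, using $\alpha^*$ for the on-path probability. I then verify that $(p_1^*, p_2^*)$ is a Nash equilibrium under $\tilde{\alpha}$.

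The argument splits into two observations. First, on the equilibrium path, $\tilde{\alpha}(p_1^*, p_2^*) = \alpha(p_1^*, p_2^*)$, so each seller's equilibrium profit under the contract coincides with that under $\alpha$ — in particular, the expressions given by \eqref{eq: profits} agree at $(p_1^*, p_2^*)$. Second, for any unilateral deviation by seller $i$ to some $p_i \neq p_i^*$ (holding the other seller's price fixed at $p_{-i}^*$), the contract ranks the deviator second with probability one, so the deviation profit under $\tilde{\alpha}$ is $p_i D_i^2(p_i, p_{-i}^*)$. Under the original $\alpha$, the deviation profit is the convex combination $\alpha(p_i, p_{-i}^*)\, p_i D_i^1 + (1 - \alpha(p_i, p_{-i}^*))\, p_i D_i^2$. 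The key step is to invoke Lemma \ref{lem: bonus}: since $B(p_i, p_{-i}^*) = D_i^1 - D_i^2 \geq 0$ regardless of prices, the contract's deviation profit is weakly smaller than the deviation profit under $\alpha$. Chaining these observations gives, for each $i$,
\[
\Pi_i^{\tilde{\alpha}}(p_1^*, p_2^*) = \Pi_i^{\alpha}(p_1^*, p_2^*) \;\geq\; \Pi_i^{\alpha}(p_i, p_{-i}^*) \;\geq\; \Pi_i^{\tilde{\alpha}}(p_i, p_{-i}^*),
\]
where the middle inequality is the equilibrium condition under $\alpha$. Hence no deviation is profitable under the contract.

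I expect no serious obstacle here; the proposition is essentially a corollary of Lemma \ref{lem: bonus}. The one subtlety worth flagging explicitly is that the deviation payoff under the contract must use the \emph{correct} demand $D_i^2$ evaluated at the post-deviation price pair $(p_i, p_{-i}^*)$, not at $(p_1^*, p_2^*)$; this is handled automatically by the fact that Lemma \ref{lem: bonus} gives non-negativity of the bonus $B$ uniformly in prices. A second minor point is that the value of $\tilde{\alpha}$ when both sellers deviate simultaneously is irrelevant, since Nash equilibrium only requires ruling out unilateral deviations, as already noted in the text.
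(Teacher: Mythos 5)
Your proposal is correct and follows essentially the same route as the paper: construct the corresponding contract $(p_1^*,p_2^*,\alpha(p_1^*,p_2^*))$, note that on-path profits coincide, and use Lemma \ref{lem: bonus} to bound the deviation profit under the contract by the deviation profit under the original algorithm, chaining with the Nash condition. No gaps.
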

\begin{proof}
    Consider a search algorithm $\alpha$ and suppose $(p_1,p_2)$ is a Nash equilibrium under $\alpha$. Given $\alpha$ and $(p_1,p_2)$, there is a well-defined \textbf{corresponding contract} $\alpha^*$, expressed as $(p_1, p_2, \alpha(p_1, p_2))$. To prove the proposition, we only need to show that $(p_1,p_2)$ is a Nash equilibrium under $\alpha^*$. 
    
    Note that $0\leq\alpha\leq1$, thus by the definition of Nash equilibrium, for seller $i = 1,2$ we have: for any $p'_1$, 
    $$
    \begin{aligned}
        &\quad \alpha(p_1,p_2)\pi_1^1(p_1,p_2) + (1-\alpha(p_1,p_2))\pi_1^2(p_1,p_2) \\
        &\geq \alpha(p'_1,p_2)\pi_1^1(p_1',p_2) + (1-\alpha(p'_1,p_2))\pi_1^2(p_1',p_2) &\quad \text{(Definition of Nash Equilibrium)}\\ 
        &\geq \pi_1^2(p_1',p_2) &\quad \text{(Lemma \ref{lem: bonus})}\\ 
        &= \alpha^*(p'_1,p_2)\pi_1^1(p_1',p_2) + (1-\alpha^*(p'_1,p_2))\pi_1^2(p_1',p_2) &\quad \text{(Definition of $\alpha^*$)}
    \end{aligned}
    $$
    It means if seller 1 does not deviate from $(p_1, p_2)$ under search algorithm $\alpha$, it will not deviate from $(p_1, p_2)$ under contract $\alpha^*$. Similarly, we can show there is no profitable deviation for seller 2 under $\alpha^*$. Therefore, $(p_1,p_2)$ is a Nash equilibrium under $\alpha^*$.
\end{proof}

In words, since the profit is lower when ranked second (Lemma \ref{lem: bonus}), a corresponding contract $\alpha^*$ defined in the proof is the harshest punishment\footnote{A possibly harsher punishment is removing sellers from the platform. Then the platform can implement any prices that bring positive profits to the prominent seller \citep{BarIsaac2023Monetizing}. Our discussion about first best in Section \ref{subsec: first best} corresponds to this case. }. So if any unilateral deviation is unprofitable under a search algorithm $\alpha$, it implies any unilateral deviation is unprofitable under its corresponding contract. Hence if prices $(p_1,p_2)$ constitute a Nash equilibrium under a search algorithm $\alpha$ (no profitable unilateral deviation), then it is a Nash equilibrium under the corresponding contract $(p_1,p_2,\alpha(p_1,p_2))$.

This proposition enables us to focus on contracts without loss of generality. Since a contract can be expressed as a tuple $(p_1,p_2,\alpha)$, as we said, it can be viewed as an action recommendation from the platform. Thus, a price vector $(p_1,p_2)$ is implementable if sellers are willing to accept the recommendation. 

\begin{definition}[Implementable Prices]
    Suppose that Assumption \ref{assumption: A large} holds. Price vector $(p_1, p_2)$ is implementable if for some $\alpha \in [0,1]$, the following two IC constraints hold \footnote{Rigorously, we should express the left-hand side of the IC constraint as $\mathop{\max}_{p_i' \neq p_i} \pi_i^2(p_i', p_{-i})$ given prices $(p_i, p_j)$ since seller $i$ is only penalized when it sets a price different from $p_i$. However, we omit this notation for convenience. In fact, the IC constraints written in these two ways are equivalent. The possible discrepancy arises only when $\arg\max_{p_i'} \pi_i^2 = p_i$. In this case, we have $\max_{p_i'} \pi_i^2 = \pi_i^2(p_i, p_j)$ and $\max_{p_i' \neq p_i} \pi_i^2 \leq \pi_i^2(p_i, p_j)$, so for any $(p_i, p_j)$, the IC constraint must hold. Therefore, the two formulations do not change the IC constraints. }
\begin{equation}
    \begin{aligned}
   \text{IC1}: \mathop{\max}_{p_{1}^{\prime}} \pi_{1}^{2}(p_{1}^{\prime},p_{2})  & \leq \Pi_{1}(p_{1},p_{2})  = \pi_{1}^{2}(p_{1},p_{2})+ \alpha p_{1}B(p_1,p_2),    \\ 
   \text{IC2}: \mathop{\max}_{p_{2}^{\prime}} \pi_{2}^{2}(p_{1},p_{2}')  & \leq \Pi_{2}(p_{2},p_{1}) = \pi_{2}^{2}(p_{1},p_{2})+ (1-\alpha) p_{2}B(p_1,p_2).   \nonumber \\
   \end{aligned}
   \label{IC1&IC2}
\end{equation}
\label{def: implementable price}
\end{definition}

The IC constraints reflect the sellers’ trade-off between accepting a contract and opting for the outside option. For instance, if firm 1 rejects the contract, it can maximize \(\pi_1^2\), which could be larger than the original \(\pi_1^2\), but it forfeits the $\alpha p_1 B$ component. To ensure that sellers accept the contract, both \(\alpha p_1 B\) and \((1-\alpha) p_2 B\) must provide sufficient benefits. This implies that \(\alpha\) must be set neither too high (close to one) nor too low (close to zero).

To figure out the set of implementable prices, we define a correspondence of \textbf{implementable search orders} $\varphi: \mathbb{R}^2 \rightrightarrows [0,1]$, which maps prices to the set of feasible values of $\alpha$.
\[\varphi(p_1,p_2):=\left\{\alpha \in [0,1]: \text{IC1 } \text{ and } \text{ IC2}\right\}\]

By Definition \ref{def: implementable price}, as long as $\varphi(p_1, p_2)$ is non-empty, prices $(p_1, p_2)$ are implementable. 

\begin{corollary}
Prices $(p_1, p_2)$ are implementable if and only if $\varphi(p_1, p_2) \neq \emptyset$.
\label{cor: implementable}
\end{corollary}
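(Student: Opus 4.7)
The plan is to observe that this corollary is essentially a tautological restatement of Definition \ref{def: implementable price} once the correspondence $\varphi$ is introduced, so there is no substantive content to prove beyond unpacking notation. I would simply chain the three definitions together.

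First I would recall that, by Definition \ref{def: implementable price}, $(p_1,p_2)$ is implementable exactly when there exists some $\alpha \in [0,1]$ such that both IC1 and IC2 hold at $(p_1,p_2,\alpha)$. Next I would note that, by the construction of $\varphi$, the set $\varphi(p_1,p_2)$ collects precisely those $\alpha \in [0,1]$ for which IC1 and IC2 hold. Hence "there exists an $\alpha \in [0,1]$ satisfying IC1 and IC2" is logically identical to "$\varphi(p_1,p_2) \neq \emptyset$," which gives both directions of the biconditional simultaneously.

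There is no real obstacle; the only care needed is to make sure the quantifier in the definition of implementability (existence of some $\alpha$) matches the set-builder in the definition of $\varphi$ (the collection of all such $\alpha$), and that the domain constraint $\alpha \in [0,1]$ is identical in both places. Since Definition \ref{def: implementable price} implicitly invokes Assumption \ref{assumption: A large}, I would state the corollary (and its one-line proof) under that same standing assumption, so that the IC expressions involving $B(p_1,p_2)$ and $\pi_i^2$ are well-defined in the nondegenerate regime discussed in the text. The substantive work — characterizing \emph{when} $\varphi(p_1,p_2) \neq \emptyset$ — is deferred to the subsequent analysis of the IC constraints; this corollary merely provides the convenient reformulation that will be used there.
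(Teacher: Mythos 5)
Your proposal is correct and matches the paper's treatment: the paper offers no separate argument either, simply noting that the corollary follows immediately from Definition \ref{def: implementable price} together with the set-builder definition of $\varphi$. Your added care about the matching quantifiers and the standing Assumption \ref{assumption: A large} is consistent with the paper's framing.
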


With Corollary \ref{cor: implementable}, we need to figure out the behavior of $\varphi(p_1,p_2)$. Recall that the IC constraints show how the platform balances the bonus $B$ between two sellers. Since $\Pi_1(p_1,p_2)$ increases with $\alpha$, for any prices $(p_1,p_2)$, there is a lowest $\alpha$ determined by the equality of IC1, which we denote by $\underline{\alpha}$, such that IC1 holds. IC2 decreases with $\alpha$ so there is a greatest $\alpha$ determined by the equality of IC2, denoted by $\bar{\alpha}$ such that IC2 holds. Therefore, for any $(p_1,p_2)$, $\varphi(p_1,p_2) = [\underline{\alpha}, \bar{\alpha}] \cap [0,1]$. Select $(p_1,p_2)$ such that $\varphi(p_1,p_2)$ is non-empty, we obtain the set of implementable prices. 

\begin{proposition}
        Suppose that Assumption \ref{assumption: A large} holds. The set of implementable prices $P$ can be described as: 
        \begin{equation}
        P:=\left\{(p_1, p_2):\frac{\mathop{\max}\limits_{p_1^\prime} \pi_{1}^2(p_{1}^{\prime},p_{2})}{p_1}  + \frac{\mathop{\max} \limits_{p_2^\prime}\pi_{2}^2(p_{2}^{\prime},p_{1})}{p_2} + F(p_{1})F(p_{2}) \leq 1\right\}
        \end{equation}
    \label{prop: P's discrip}
\end{proposition}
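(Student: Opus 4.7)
The plan is to use Corollary~1 to reduce implementability to a single scalar inequality on $\alpha$, and then evaluate that inequality via a clean total-demand decomposition that holds in the regime carved out by Assumption~2. Concretely, I rewrite the two IC constraints in Definition~1 as bounds on $\alpha$: since Lemma~1 gives $B(p_1,p_2) > 0$ for $s > 0$, IC1 is equivalent to $\alpha \geq \underline{\alpha}$ with
$$\underline{\alpha} := \frac{\max_{p_1'} \pi_1^2(p_1', p_2) - \pi_1^2(p_1, p_2)}{p_1 \, B(p_1,p_2)},$$
and IC2 is equivalent to $\alpha \leq \bar{\alpha}$ with
$$\bar{\alpha} := 1 - \frac{\max_{p_2'} \pi_2^2(p_2', p_1) - \pi_2^2(p_1, p_2)}{p_2 \, B(p_1,p_2)}.$$
Each maximum dominates the value at the recommended price, so $\underline{\alpha} \geq 0$ and $\bar{\alpha} \leq 1$ hold automatically; hence $\varphi(p_1,p_2) = [\underline{\alpha}, \bar{\alpha}] \cap [0,1]$ is nonempty iff $\underline{\alpha} \leq \bar{\alpha}$, and by Corollary~1 this single inequality is exactly the implementability condition.

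Next, I would rearrange $\underline{\alpha} \leq \bar{\alpha}$. Multiplying through by $B$ and using $\pi_i^2(p_1,p_2) = p_i D_i^2(p_1,p_2)$ gives
$$\frac{\max_{p_1'} \pi_1^2(p_1', p_2)}{p_1} + \frac{\max_{p_2'} \pi_2^2(p_2', p_1)}{p_2} \leq B(p_1,p_2) + D_1^2(p_1,p_2) + D_2^2(p_1,p_2).$$
Lemma~1 gives $B = D_1^1 - D_1^2$, so the right-hand side collapses to $D_1^1(p_1,p_2) + D_2^2(p_1,p_2)$, i.e., the total mass of consumers served when seller~1 is ranked first and seller~2 second.

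The real content, and the main potential obstacle, is to show that this total demand equals $1 - F(p_1)F(p_2)$. Assumption~2 ensures $A \geq p_1, p_2$, so a consumer inspecting seller~1 first always proceeds to inspect seller~2 whenever $u_1 < p_1$ (because the continuation value $V(p_2) - s$ is nonnegative when $p_2 \leq A$); a consumer with $u_1 > p_1$ either buys seller~1 immediately or searches further. In every branch the consumer takes the outside option only when $u_1 < p_1$ and $u_2 < p_2$, an event of probability $F(p_1)F(p_2)$ by independence. Substituting $D_1^1 + D_2^2 = 1 - F(p_1)F(p_2)$ and moving the product term to the left delivers exactly the characterization in the proposition. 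The delicate step is this demand identity: outside the regime $A \geq \max\{p_1, p_2\}$ the prominent seller becomes an effective monopolist and the clean $1 - F(p_1)F(p_2)$ bookkeeping breaks down, which is precisely the discontinuity flagged in the paper's discussion of corner cases.
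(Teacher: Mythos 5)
Your proposal is correct and follows essentially the same route as the paper: reduce each IC to a one-sided bound on $\alpha$, note $\underline{\alpha}\ge 0$ and $\bar{\alpha}\le 1$ automatically, and reduce nonemptiness of $\varphi$ to $\underline{\alpha}\le\bar{\alpha}$, which rearranges via $B=D_1^1-D_1^2$ and the identity $D_1^1+D_2^2=1-F(p_1)F(p_2)$ into the stated inequality. Your verbal justification of that demand identity (the outside option is taken exactly when $u_1<p_1$ and $u_2<p_2$, valid because Assumption 2 keeps both prices below $A$) matches what the paper verifies algebraically in its appendix.
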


\vspace{-25pt}

\begin{proof}
    As analyzed above, $$\underline{\alpha}(p_1,p_2)\triangleq\frac{\mathop{\max}\limits_{p_1^\prime} \pi_{1}^2-\pi_{1}^2}{p_1 B}\le \alpha \le 1-\frac{\mathop{\max}\limits_{p_2^\prime} \pi_{2}^2-\pi_{2}^2}{p_2 B}\triangleq \bar{\alpha}(p_1,p_2)$$

    Since $\mathop{\max}_{p_i^\prime} \pi_{i}^2\ge \pi_{i}^2$ for $i=1,2$, $\underline{\alpha}(p_1,p_2)\ge0$ and $\bar{\alpha}(p_1,p_2)\le 1$. Thus, $\varphi(p_1,p_2)\ne \emptyset$ if and only if $\underline{\alpha}(p_1,p_2)\le \bar{\alpha}(p_1,p_2)$, which can be reduced to set $P$ by noticing that $D_1^1+D^2_2=1-F(p_1)F(p_2)$.
\end{proof}

Given a price vector $(p_1,p_2)$, we interpret $\phi_i(p_i,p_{-i}) := \frac{\max_{p_i'}\pi_i^2}{p_i}$ as seller $i$'s ``virtual demand'', which is the demand needed for seller $i$ to obtain the maximum profit after deviation given prices $(p_i,p_j)$. Then Proposition \ref{prop: P's discrip} suggests that the two IC constraints can be combined into a single constraint, 
\begin{equation}
    \phi_1(p_1,p_2) + \phi_2(p_1,p_2) \leq 1 - F(p_1)F(p_2),
    \label{eq: virtual demand}
\end{equation}
which states that the total virtual demand does not exceed the total demand $1 - F(p_1)F(p_2)$. To have a further look, we state the following observation.

\begin{observation}
    For $i = 1 \text{ or 2}$, $\phi_i(p_i,p_{-i}) \leq D_i$; the equality holds if and only if $IC_i$ is binding.
    \label{cor:virtual demand}
\end{observation}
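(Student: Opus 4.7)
The plan is to recognize that this observation is essentially just IC$_i$ divided through by $p_i$, and so the proof should be a short one- or two-line algebraic rearrangement. The key identity I want to surface first is that total profit factors as price times total expected demand: since $\alpha_1 = \alpha$ and $\alpha_2 = 1-\alpha$, equation~(\ref{eq: profits}) (correcting the evident typo $p_2 \to p_1$ in the second summand of $\Pi_1$) gives $\Pi_i(p_1,p_2) = p_i\bigl[\alpha_i D_i^1(p_1,p_2) + (1-\alpha_i) D_i^2(p_1,p_2)\bigr] = p_i\, D_i(p_1,p_2)$, where $D_i$ denotes seller $i$'s equilibrium demand under the search algorithm in question.

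With that in hand, I would simply rewrite IC$_i$ from Definition~\ref{def: implementable price} as $\max_{p_i'}\pi_i^2(p_i',p_{-i}) \le \Pi_i(p_1,p_2) = p_i D_i$. Dividing both sides by $p_i > 0$ converts the left-hand side into $\phi_i(p_i,p_{-i})$ by definition, yielding $\phi_i(p_i,p_{-i}) \le D_i$. The equivalence of equality with bindingness of IC$_i$ is then immediate from the fact that this chain of implications runs through strict identities except at the IC inequality itself: equality in $\phi_i \le D_i$ holds if and only if $\max_{p_i'}\pi_i^2 = \Pi_i$, which is exactly the statement that IC$_i$ binds.

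There is no real obstacle here; the only point requiring attention is making clear that $D_i$ (and hence the inequality) is defined with respect to a specific $\alpha \in \varphi(p_1,p_2)$, so the observation applies to any implementable price vector together with any $\alpha$ witnessing its implementability. Since Proposition~\ref{prop: P's discrip} already guarantees the existence of such an $\alpha$ for $(p_1,p_2) \in P$, no additional work is needed, and the observation follows directly.
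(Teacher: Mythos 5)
Your proof is correct and takes essentially the same route as the paper: both simply rewrite IC$_i$ as $\max_{p_i'}\pi_i^2 \le \Pi_i = p_i D_i$ (using $\Pi_i = p_i(D_i^2 + \alpha_i B) = p_i D_i$) and divide by $p_i$, with equality holding exactly when IC$_i$ binds. Your note about the typo in the second summand of $\Pi_1$ and the dependence of $D_i$ on the chosen $\alpha \in \varphi(p_1,p_2)$ is a correct reading of the paper's intent.
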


Notice that the IC constraints state for $i = 1,2$, $\max_{p_i}\pi_i^2 \leq \Pi_i$, observation $\ref{cor:virtual demand}$ is obtained immediately. When the equality holds for $i = 1 \text{ or }2$, i.e., $\max_{p_i}\pi_i^2 = \Pi_i$, we have $\phi_1(p_1,p_2) = D_1^2 + \alpha B =D_1$ or $\phi_2(p_1,p_2) = D_2^2 + (1-\alpha) B = D_2$: the virtual demand becomes the actual demand. Therefore, inequality $\ref{eq: virtual demand}$ is equivalent to requiring both IC constraints hold and the equality holds if and only if both IC constraints are binding.

With the Euclidean topology, let \(\partial P\) denote the boundary of \(P\) and \(P^{\circ}\) denote its interior. Note that \(P\) also depends on the search cost \(s\), as demand varies with \(s\). The following proposition outlines the properties of the set \(P\) and the correspondence \(\varphi\), some of which are already discussed above.

\begin{proposition}
Suppose that Assumption \ref{assumption: A large} holds. $P$ and $\varphi$ satisfy the following properties: 
    \begin{enumerate}
        \item $P$ is compact and symmetric about the line $p_1 = p_2$
        \item $P$ is convex if Assumption \ref{assumption: uniform} also holds. 
        \item For any $(p_1,p_2) \in P$, $\varphi(p_1,p_2)$ is a compact set $[\underline{\alpha}(p_1,p_2), \bar{\alpha}(p_1,p_2)]$. $\varphi(p_1,p_2)$ reduces to a singleton if and only if $(p_1,p_2) \in \partial P$. 
        \item For search costs $s_1 \leq s_2$, i.e., $A(s_1) \geq A(s_2)$, we have $P_{s_1} \subseteq P_{s_2}$. 
    \end{enumerate}
    \label{prop: P's property}
\end{proposition}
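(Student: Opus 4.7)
The plan is to verify the four claims in sequence, using the characterization $P = \{(p_1,p_2) : g(p_1,p_2) \leq 1\}$ with $g(p_1,p_2) := \phi_1(p_1,p_2) + \phi_2(p_1,p_2) + F(p_1)F(p_2)$ established in Proposition \ref{prop: P's discrip}. For part 1, symmetry is immediate because both $\phi_1 + \phi_2$ and $F(p_1)F(p_2)$ are invariant under swapping $p_1$ and $p_2$ (the underlying demand system is symmetric across sellers). Closedness follows from continuity of $\phi_i$, which I would obtain via Berge's maximum theorem applied to the continuous objective $\pi_i^2(p_i', p_{-i})$ over the compact range of admissible deviations. For boundedness I restrict attention to $[0,1]^2$: any $p_i > 1$ yields zero demand for seller $i$ and is dominated, so without loss of generality $(p_1,p_2) \in [0,1]^2$, which is a compact square in which $P$ sits as a closed subset.

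Part 2 is the main obstacle. Under uniform $F$ and Assumption \ref{assumption: A large}, I would compute $D_i^2(p_i,p_j)$ in closed form, solve the first-order condition for $\max_{p_i'} p_i' D_i^2(p_i', p_j)$ to obtain the maximizer $p_i^*(p_j)$, and thereby express $\phi_i$ as an explicit rational function of $(p_1,p_2)$. The defining inequality then becomes a concrete polynomial-rational inequality, and convexity of $P$ reduces to quasi-convexity of $g$. I expect to exploit the $p_1 \leftrightarrow p_2$ symmetry to restrict to the half-plane $p_1 \leq p_2$ and verify directly that the boundary curve $\{g = 1\}$ is convex, for instance via the sign of its implicit second derivative. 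The difficulty is that the $F(p_1)F(p_2) = p_1 p_2$ term has non-convex sublevel sets on $[0,1]^2$, so the convex-in-$p_i$ structure of $\phi_i$ must dominate in the right direction; this is precisely where the uniform-distribution algebra earns its keep.

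Part 3 follows from the proof of Proposition \ref{prop: P's discrip}: since $\underline{\alpha}(p_1,p_2) \geq 0$ and $\bar{\alpha}(p_1,p_2) \leq 1$ hold unconditionally, $\varphi(p_1,p_2) = [\underline{\alpha},\bar{\alpha}]$ is a closed bounded interval and therefore compact. For the singleton characterization, I would rearrange $\underline{\alpha} = \bar{\alpha}$ using $D_i^1 - D_i^2 = B$ from Lemma \ref{lem: bonus} together with the identity $D_1 + D_2 = 1 - F(p_1)F(p_2)$ (which holds for every $\alpha$ because the non-purchase event reduces to $\{u_1 \leq p_1, u_2 \leq p_2\}$). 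This yields $\underline{\alpha} = \bar{\alpha} \iff g(p_1,p_2) = 1$, which is exactly $(p_1,p_2) \in \partial P$.

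For part 4, I show $D_i^2(p_i,p_j;s)$ is pointwise decreasing in $s$. The only $s$-dependence enters through $A = V^{-1}(s)$, which is decreasing in $s$ since $V$ is decreasing. A lower $A$ shrinks the event $\{u_j \leq A + p_j - p_i\}$ in which the consumer continues from the first-searched seller $j$ to the second-searched seller $i$, so fewer consumers reach $i$ and $D_i^2$ falls. Consequently $\pi_i^2(p_i';s) = p_i' D_i^2(p_i', p_j; s)$ is pointwise decreasing in $s$; its maximum over $p_i'$ is therefore decreasing in $s$, and so is $\phi_i$. Because $F(p_1)F(p_2)$ is independent of $s$, the function $g$ decreases pointwise in $s$, which gives $P_{s_1} \subseteq P_{s_2}$ whenever $s_1 \leq s_2$.
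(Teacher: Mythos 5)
Parts 1, 3 and 4 of your proposal track the paper's argument closely and are correct: symmetry of the defining inequality under $p_1\leftrightarrow p_2$, compactness of $\varphi=[\underline{\alpha},\bar{\alpha}]$ with the singleton case corresponding to equality in the combined constraint, and the monotonicity $\partial D_i^2/\partial A=[1-F(A)]f(A-\Delta p)>0$ driving the nesting $P_{s_1}\subseteq P_{s_2}$ are all exactly what the paper does (the paper phrases everything through $H(p_1,p_2)=p_1p_2\bigl(g(p_1,p_2)-1\bigr)$ rather than $g$ itself, but that is cosmetic for these three parts).

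Part 2, however, is a genuine gap: you correctly flag it as the main obstacle but then only state a plan ("verify quasi-convexity of $g$" or "check the sign of the implicit second derivative of the boundary"), without carrying it out, and the plan as stated is missing the ingredients that make the computation close. The paper's resolution is to clear denominators and work with $H(p_1,p_2)= p_2\max_{p_1'}\pi_1^2 + p_1\max_{p_2'}\pi_2^2 + p_1p_2F(p_1)F(p_2)-p_1p_2$, and to show $H$ itself is convex on $P$, which immediately gives convexity of the sublevel set $\{H\le 0\}=P$. This requires (i) an envelope-theorem computation of $[\max\pi_2^2]'_{p_1}=\hat p_2(1-\hat p_2)$ and $[\max\pi_2^2]''_{p_1}=(1-2\hat p_2)\,d\hat p_2/dp_1$, together with the second-order-condition bound $2+2p_1-3\hat p_2>0$ to sign these, yielding $\partial^2H/\partial p_1^2>2p_2^2$ and $\partial^2H/\partial p_2^2>2p_1^2$; and (ii) crucially, control of the cross-partial $\partial^2H/\partial p_1\partial p_2=4p_1p_2-1$, which is handled by invoking Assumption \ref{assumption: A large}: for $s$ small, $P$ shrinks so that $1/6<p_1p_2<1/2$ on $P$, whence the Hessian determinant exceeds $(1-2p_1p_2)(6p_1p_2-1)>0$. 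Your outline never identifies where Assumption \ref{assumption: A large} enters the convexity claim, yet without the resulting bound on $p_1p_2$ the Hessian argument fails; and working with $g$ directly (rather than $p_1p_2(g-1)$) leaves you with rational terms $\max\pi_i^2/p_i$ whose Hessian is substantially messier. As written, part 2 is an announcement of intent rather than a proof.
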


Proof is in the Appendix \ref{appendix: set P}.

When Assumption \ref{assumption: A large} holds, The above proposition characterizes the set of implementable prices $P$ and implementable search orders for each $(p_1,p_2) \in P$. Property 1 is about the geometry of set $P$. Property 3 is obtained directly from Proposition \ref{prop: P's discrip}. This is because on the boundary, two IC constraints are both required to be binding, hence $\underline{\alpha}(p_1,p_2)$ and $\bar{\alpha}(p_1,p_2)$ coincide. Regarding Property 4, the intuition is that when the search cost is low, the prominent seller’s advantage diminishes, which in turn mitigates the platform’s punishment. Thus, the platform can implement smaller sets of prices. 

Figure \ref{fig:boundary} shows an example of $P$ under uniform distribution. The blue curve represents $\partial P$. For comparison, the figure also displays equilibrium prices for prominence and random search search algorithms, i.e., $\alpha = 0,1,\frac{1}{2}$.

\begin{figure}[h]
    \centering    \includegraphics[scale=0.55]{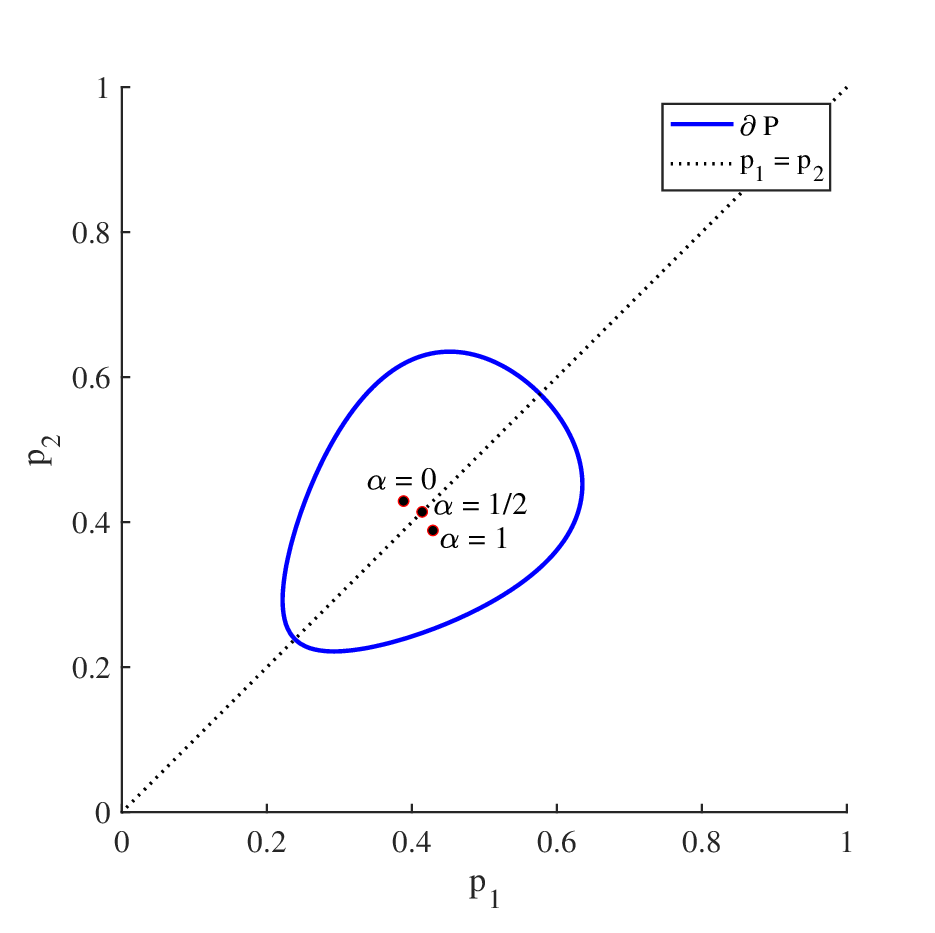}
    \caption{Boundary of $P$ and Points with Several Rank Funtions ($A = 0.7$)}
    \label{fig:boundary}
\end{figure}

We now illustrate how points in $\partial P$ are determined by the two IC constraints. Instead of considering $\alpha$ after fixing $(p_1, p_2)$ as we did previously, we now focus on the behabior of $(p_1,p_2)$ when $\alpha$ is fixed. Given $\alpha$, some $(p_1, p_2)$ are implementable. If we first consider $\alpha = 0$, the IC constraints become:
\begin{equation}
    \begin{aligned}
    \text{IC1: }&\max_{p_1^\prime}\pi_1^2(p_1^\prime,p_2)\leq \pi_1^2(p_1,p_2)\\
    \text{IC2: }& \max_{p_2^\prime} \pi_2^2(p_2^\prime,p_1) \leq \pi_2^2(p_1,p_2)+p_2 B(p_1,p_2) = \pi_2^1(p_1,p_2) 
    \end{aligned}
\end{equation}

As shown in Figure \ref{fig:formation} (a), for a given \( p_2 \), only \( p_1 = p'_1 \) satisfies IC1. Meanwhile, for a given \( p_1 \), a set of \( p_2 \)-values satisfies IC2. The intuition here is that when \( \alpha = 0 \), seller 1 will always be searched second, making it likely to deviate from any contract unless \( p_1 \) equals \( p'_1 \). However, since seller 2 is always searched first and benefits from the bonus, it is willing to accept a broader set of \( p_2 \)-values, given \( p_1 \).

As we increase \( \alpha \), seller 1 starts to receive some portion of the bonus and will, therefore, accept a wider range of \( p_1 \)-values. Conversely, as \( 1 - \alpha \) decreases with the increasing \( \alpha \), seller 2 will accept a narrower range of \( p_2 \)-values, as depicted in Figure \ref{fig:boundary} (b). When \( \alpha \) reaches 1, the path traced by the intersection of IC1 and IC2 determines the entire boundary of \( P \).

\begin{figure}[!h]
\centering
\subfigure[IC1 and IC2 when $\alpha = 0$]{
  \includegraphics[width=0.31\textwidth]{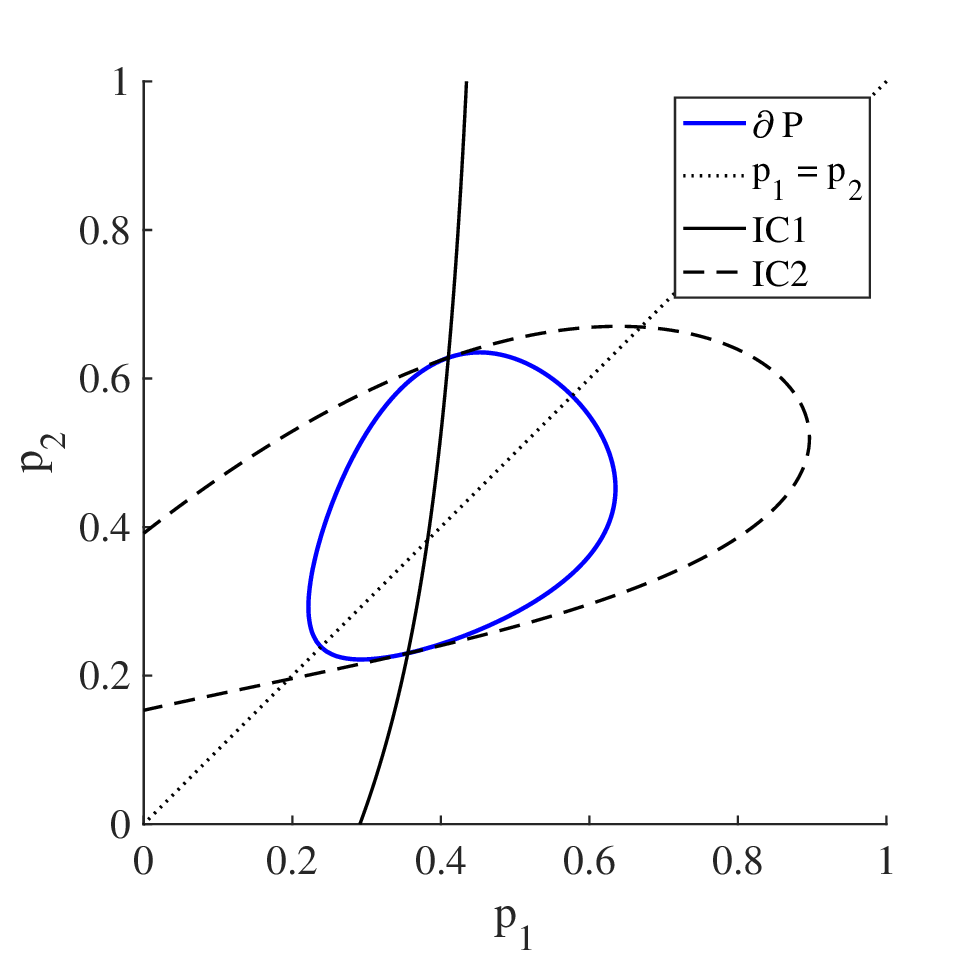}
}
\hfill
\subfigure[IC1 and IC2 when $\alpha = 0.5$]{\includegraphics[width = 0.31\textwidth]{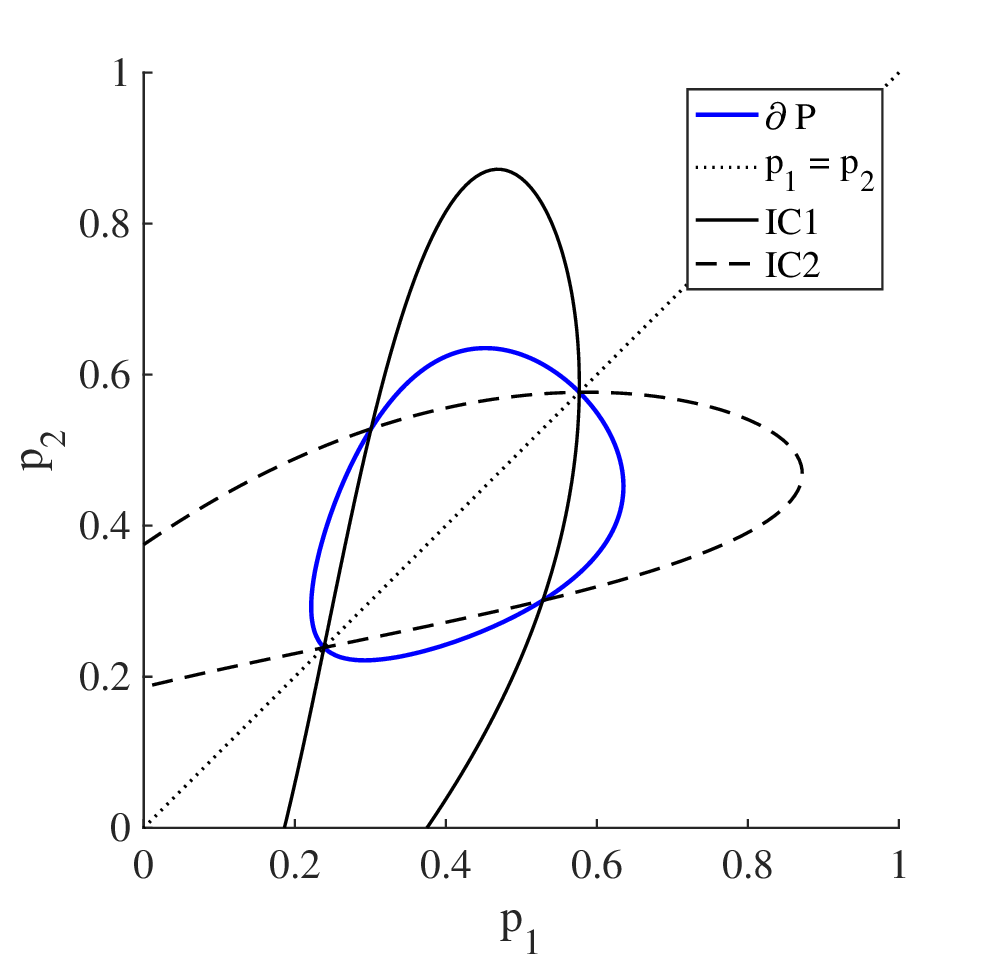}}
\hfill
\subfigure[IC1 and IC2 when $\alpha =1$]{\includegraphics[width = 0.31\textwidth]{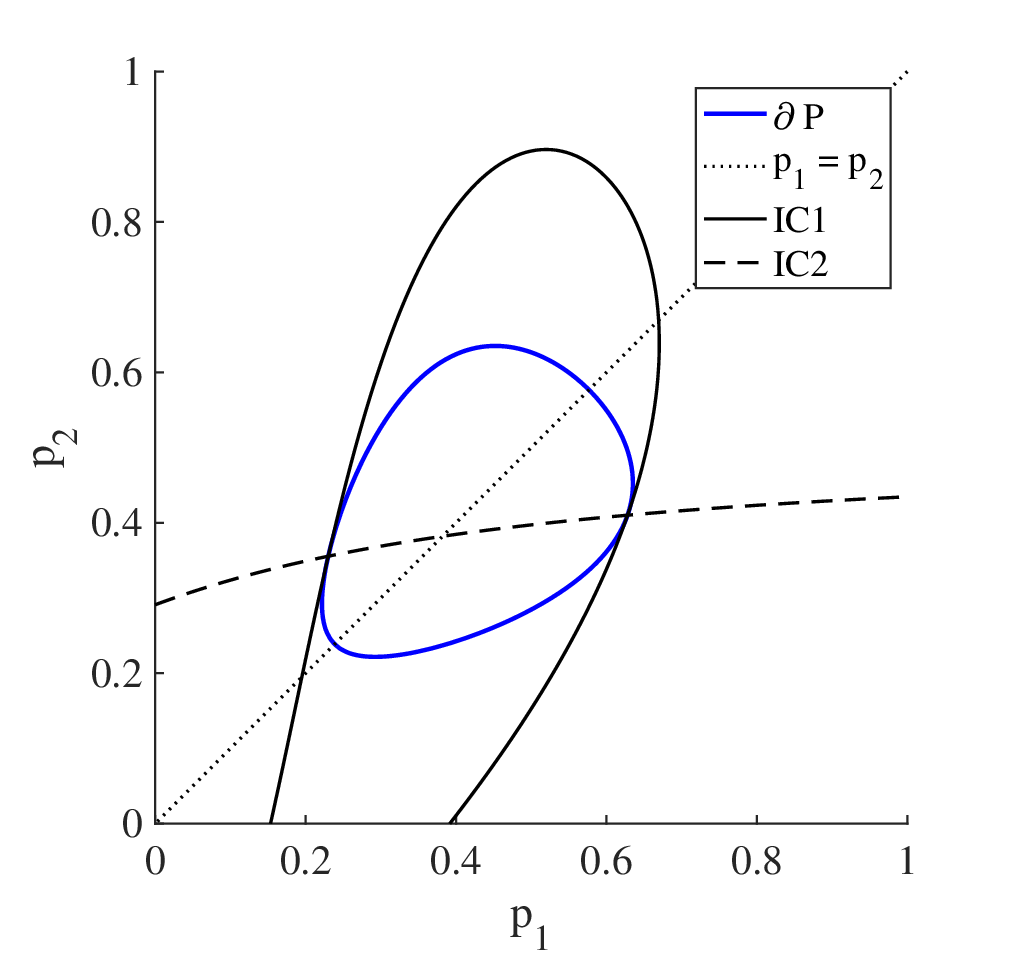}}

\caption{The Formation of set $P$ ($A = 0.7$)}
\label{fig:formation}
\end{figure}

\section{Optimal Contract}
\label{sec: optimization}

After characterizing the set of implementable prices $P$, and consequently the set of implementable contracts, the next natural question arises: which contracts are optimal given different platform objectives? Imagine that a platform charges sellers proportional commission fees, then the objective is to maximize the industry profit. Common objectives besides industry profit include total demand, consumer surplus and social welfare.   

We first formalize the optimization problems. The platform has three decision variables: $p_1$, $p_2$, and $\alpha$. An objective function is thus defined as a function of $p_1, p_2$ and $\alpha$. Specifically, we denote the objective function by \( J \).
$$
J: \mathbb{R}^2 \times [0,1] \to \mathbb{R}
$$

The platform's optimization problem is 
\[
\begin{aligned}
    &\mathop{\max(\min)}_{p_1,p_2,\alpha} \quad J(p_1, p_2, \alpha)\\
    & \text{subject to} \quad (p_1, p_2) \in P,\ \ \alpha \in \varphi(p_1, p_2)
\end{aligned}
\]

We simplify the optimization problem by first considering how a platform determines $\alpha$ given prices $(p_1,p_2)$, i.e., given different objective functions, what is the optimal way for a platform to allocate its traffic to sellers. Formally, given any $(p_1,p_2)\in P$, we want to find the optimal rule for choosing the search algorithm $\hat{\alpha}(p_1,p_2):P\to \varphi(p_1,p_2)$ to get the maximized value function $\bar{\mu}(p_1,p_2)\triangleq \max_{\alpha\in \varphi(p_1,p_2)} J(p_1,p_2,\alpha)=J(p_1,p_2,\hat{\alpha}(p_1,p_2))$ or the minimized value function $\underline{\mu}(p_1,p_2)\triangleq \min_{\alpha\in \varphi(p_1,p_2)} J(p_1,p_2,\alpha)=J(p_1,p_2,\hat{\alpha}(p_1,p_2))$. 

The above steps reduce the optimization problem to the space of $P$. Next, we study how should a platform determine the prices of contracts. Formally, an optimization problem becomes: 
$\max_{(p_1,p_2)\in P} \bar{\mu}(p_1,p_2)$ or $\min_{(p_1,p_2)\in P}\underline{\mu}(p_1,p_2)$.

We use \(J^{(\cdot)}\) and \(\mu^{(\cdot)}\) with different superscripts to denote various objective functions and their corresponding optimal value functions. We use \(\Pi\), \(SW\), \(TP\), and \(CS\) to represent total industry profit, social welfare, trade probability, and consumer surplus, respectively. Depending on \(J\), we state the following lemma:

\begin{lemma}[Optimal Traffic Allocation]
For $(p_1,p_2)\in P$, \\
When $p_1 > p_2$, 
    \begin{itemize}
        \item $\bar{\mu}^{\Pi}(p_1,p_2) = J^{\Pi}(p_1, p_2, \bar{\alpha}), \underline{\mu}^{\Pi}(p_1,p_2) = J^{\Pi}(p_1, p_2, \underline{\alpha} )$
        \item $\bar{\mu}^{SW}(p_1,p_2) = J^{SW}(p_1, p_2, \underline{\alpha} ), \underline{\mu}^{SW}(p_1,p_2) = J^{SW}(p_1, p_2, \bar{\alpha} )$
    
    \end{itemize}
When $p_1 < p_2$
    \begin{itemize}
        \item $\bar{\mu}^{\Pi}(p_1,p_2) = J^{\Pi}(p_1, p_2, \underline{\alpha} ), \underline{\mu}^{\Pi}(p_1,p_2) = J^{\Pi}(p_1, p_2, \bar{\alpha} )$
        \item $\bar{\mu}^{SW}(p_1,p_2) = J^{SW}(p_1, p_2, \bar{\alpha} ), \underline{\mu}^{SW}(p_1,p_2) = J^{SW}(p_1, p_2, \underline{\alpha} )$
    \end{itemize}
When $p_1 = p_2$, for all $\alpha \in [0,1]$, 
\begin{itemize}
    \item $\bar{\mu}^{\Pi}(p_1,p_2) = \underline{\mu}^{\Pi}(p_1,p_2) = J^{\Pi}(p_1, p_2, \alpha )$
    \item $\bar{\mu}^{SW}(p_1,p_2) = \underline{\mu}^{SW}(p_1,p_2) = J^{SW}(p_1, p_2, \alpha )$
\end{itemize}
\label{lemma: pin down alpha}
\end{lemma}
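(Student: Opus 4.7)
My plan is to exploit the fact that, for any fixed $(p_1, p_2) \in P$, both $J^{\Pi}$ and $J^{SW}$ are \emph{affine in $\alpha$}: since $\alpha$ enters only as the mixing weight between ``seller~1 is searched first'' and ``seller~2 is searched first'', one has $J(p_1, p_2, \alpha) = \alpha\, J(p_1, p_2, 1) + (1 - \alpha)\, J(p_1, p_2, 0)$ for each objective. Because $\varphi(p_1, p_2) = [\underline{\alpha}, \bar{\alpha}]$ is a compact interval (Proposition \ref{prop: P's property}), the extrema of a linear function on it are attained at the endpoints; a positive slope makes $\bar{\alpha}$ a maximizer and $\underline{\alpha}$ a minimizer, a negative slope reverses these, and a zero slope makes the objective independent of $\alpha$. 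The lemma therefore reduces to signing $J(p_1, p_2, 1) - J(p_1, p_2, 0)$ in each case. For industry profit, expanding \eqref{eq: profits} and using Lemma \ref{lem: bonus} ($B_1 = B_2 = B \geq 0$) gives
\[
J^{\Pi}(p_1, p_2, 1) - J^{\Pi}(p_1, p_2, 0) = p_1 (D_1^1 - D_1^2) + p_2 (D_2^2 - D_2^1) = (p_1 - p_2)\, B(p_1, p_2),
\]
whose sign matches that of $p_1 - p_2$. This immediately delivers the industry-profit statements, including constancy in $\alpha$ when $p_1 = p_2$.

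For social welfare, I would denote by $W_i(p_1, p_2)$ the social welfare conditional on seller $i$ being searched first, so that $J^{SW}(p_1, p_2, 1) - J^{SW}(p_1, p_2, 0) = W_1(p_1, p_2) - W_2(p_1, p_2)$. Decompose $W_i = \mathbb{E}[u_{\text{bought}} \mid i\text{ first}] - s\cdot F(p_i + A - p_j)$, where Assumption \ref{assumption: A large} ensures $A \geq p_j$ so that the second search is indeed undertaken whenever the stopping threshold fails. The search-cost contribution to $W_1 - W_2$ is $s\bigl[F(p_2 + A - p_1) - F(p_1 + A - p_2)\bigr]$, which has the sign of $p_2 - p_1$: ranking the higher-priced seller first induces strictly more second searches. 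Under Assumption \ref{assumption: uniform}, the gross-utility term $\mathbb{E}[u_{\text{bought}} \mid i\text{ first}]$ splits into three closed-form integrals (immediate purchase, search-then-buy, search-then-abstain); combining them with the search-cost term should yield $W_1 - W_2 = (p_2 - p_1)\, Q(p_1, p_2)$ for some strictly positive $Q$, which gives the social-welfare statements. The case $p_1 = p_2$ follows immediately from the symmetry of the two sellers.

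The main obstacle is signing the social-welfare slope. The profit argument is purely algebraic via Lemma \ref{lem: bonus}, but the social-welfare case layers a clearly signed saving in expected search costs against a subtler redistribution of which product is ultimately purchased, and both effects must be combined to produce the claimed sign. A tempting shortcut is Weitzman's theorem, which guarantees that consumer surplus is maximized under the Weitzman-optimal (lower-price-first) order; however, converting this into a statement about $SW = CS + \Pi$ requires dominating the opposing industry-profit term of magnitude $(p_2 - p_1) B$, so the inequality must still be verified. I expect the cleanest route is to carry out the uniform-distribution integrals enabled by Assumption \ref{assumption: uniform} and check the resulting expression for $W_1 - W_2$ directly.
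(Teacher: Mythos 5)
Your strategy coincides with the paper's: both objectives are affine in $\alpha$, so on the compact interval $\varphi(p_1,p_2)=[\underline{\alpha},\bar{\alpha}]$ the extrema sit at the endpoints, and everything reduces to signing the slope. The industry-profit half is complete and identical to the paper's argument: $\partial J^{\Pi}/\partial\alpha=(p_1-p_2)B$ with $B\geq 0$ from Lemma \ref{lem: bonus}, and nothing more is needed.

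The social-welfare half, however, is exactly where you stop: you conjecture $W_1-W_2=(p_2-p_1)\,Q$ with $Q$ strictly positive ``after combining the integrals,'' but that combination is the entire content of the claim, and it hides a cancellation your decomposition does not anticipate. Under Assumption \ref{assumption: uniform} the gross-utility (pre-search-cost) difference works out to
\[
V_1-V_2=-\tfrac{1}{3}(p_1-p_2)^3+(1-A)^2(p_1-p_2),
\]
so for $p_1>p_2$ the purchase-redistribution term actually \emph{favors} ranking the expensive seller first (more consumers take the second search and end up with better matches); it is only after subtracting the expected search costs, whose difference is $s\bigl[F(A+\Delta p)-F(A-\Delta p)\bigr]=2s\,\Delta p=(1-A)^2\Delta p$, that the linear terms cancel exactly and one is left with $SW_1-SW_2=-\frac{(p_1-p_2)^3}{3}$. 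The sign conclusion you want is therefore true, but your heuristic of a ``clearly signed'' search-cost saving layered on a ``subtler redistribution'' is misleading: the redistribution term goes the wrong way and is precisely offset, and the surviving factor $Q=\frac{(p_1-p_2)^2}{3}$ is only weakly positive (it vanishes at $p_1=p_2$, which is harmless for the lemma but contradicts your ``strictly positive $Q$''). To close the proof you must actually carry out those uniform integrals and obtain the cubic identity; once you have it, the endpoint argument you set up goes through verbatim.
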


The proof is in Appendix \ref{appendix: proof of main results}

In the above results about $J^{\Pi}$, Lemma \ref{lem: bonus} again plays an important role: notice that the original form of $J^{\Pi}$ is
$J^{\Pi} =  \pi_{1}^{2}+\pi_{2}^{2}+ \alpha p_{1}B_1+(1-\alpha) p_{2}B_2$, and we have $\frac{\partial J^{\Pi}}{\partial \alpha} = p_1B_1 - p_2B_2$. According to Lemma $\ref{lem: bonus}$'s result about $B = B_1 = B_2$, $\frac{\partial J^{\Pi}}{\partial \alpha} = (p_1-p_2)B$. The non-negativity of $B$ ensures that the sign of $\frac{\partial J^{\Pi}}{\partial \alpha}$ only depends on the sign of $p_1 -  p_2$. The above results about $J^{\Pi}$ thus follows. Intuitively, given the products' prices, to maximize the total profit, the platform wants to allocate more traffic to the seller with the higher price, i.e., to make the expensive product more likely to be prominent. 

In terms of social welfare $J^{SW}$, we show in the Appendix \ref{appendix: set P} that the sign of $\frac{\partial J^{SW}}{\partial \alpha}$ is the inverse of the sign of $p_1 - p_2$. It means there are conflicts between social welfare and sellers' welfare. To maximize social welfare, the platform should instead allocate more traffic to the cheaper product. Intuitively, this traffic allocation policy benefits consumers since it is consistent with the optimal selection rule, which says the consumer should search from cheap products to expensive products, despite that $\alpha$ might not be able to take $0$ or $1$.

Finally, when $p_1 = p_2$, both $J^{\Pi}$ and $J^{SW}$ do not depend on $\alpha$. To see why, consider social welfare as an example. Use $SW_1(p_1,p_2)$ and $SW_2(p_1,p_2)$ to denote the social welfare given $(p_1,p_2)$ when seller $1$ and $2$ is searched first, respectively. Thus, $J^{SW} = \alpha SW_1 + (1-\alpha) SW_2$. Since $\alpha$ is just a weight for $SW_1$ and $SW_2$, when $p_1 = p_2$, $SW(p) = SW_1(p) = SW_2(p)$ since $SW_2(p_1,p_2) = SW_1(p_2,p_1)$. Thus, $J^{SW} = SW(p)$ which is irrelevant to $\alpha$. The logic is the same for $J^{\Pi}$. It means when sellers' prices are the same, the platform does not care how the traffic is allocated since whatever $\alpha$ does not change the sum of social welfare or industry profit. 

\subsection{Seller Optimal Contract: First Best}\label{subsec: first best}

Among the possible objectives of a platform, we first consider how a platform maximizes $J^{\Pi} = \Pi_1 + \Pi_2$ subject $(p_1,p_2) \in P$ and $\alpha \in \varphi(p_1,p_2)$. Notably, by removing the constraints we obtain the first best for this problem, and the platform effectively becomes a multiproduct seller. How should this multiproduct seller determine prices and search order in a sequential search environment to maximize its total profit is an open question in the literature \citep[page 1275]{choi_consumer_2018}. Our results in the following analysis suggest that for ex-ante symmetric sellers, ordered search (where $\alpha = 0$ or $\alpha = 1$) is optimal for the multiproduct seller.

Formally, the multiproduct seller's problem is 
\[
\begin{aligned}
   &\max_{p_1,p_2,\alpha}J^{\Pi} =  \pi_{1}^{2}+\pi_{2}^{2}+ \alpha p_{1}B+(1-\alpha) p_{2}B\\
   &s.t. \quad 0 \leq \alpha \leq 1
\end{aligned}
\]

The following proposition says that (at least) under uniform distributions, ordered search is optimal for a multiproduct seller though products are ex-ante symmetric for consumers.

\begin{proposition}[First Best]
       Suppose that Assumption \ref{assumption: uniform} holds. For a multiproduct seller, the optimal contract takes the form $(p_1^*, p_2^*, 1)$ or $(p_2^*, p_1^*, 0)$, where \(p_1^* > p_2^*\).
       \label{prop:first best}
\end{proposition}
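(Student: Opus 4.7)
My plan is first to collapse the choice of $\alpha$ to a corner by inspecting $\partial_\alpha J^\Pi$, and then to rule out symmetric prices via a first-order perturbation argument. Writing $J^\Pi = \pi_1^2 + \pi_2^2 + \alpha p_1 B_1 + (1-\alpha) p_2 B_2$ and invoking Lemma \ref{lem: bonus} (so $B_1 = B_2 = B \ge 0$), I get $\partial_\alpha J^\Pi = (p_1 - p_2) B$. Hence for any fixed prices the optimal $\alpha$ is $1$ if $p_1 > p_2$, $0$ if $p_1 < p_2$, and immaterial on $p_1 = p_2$. By the relabeling symmetry of the two-seller environment, it suffices to search for an optimum in the half-space $p_1 \ge p_2$ with $\alpha = 1$; any such optimum $(p_1^*, p_2^*, 1)$ automatically yields a mirror optimum $(p_2^*, p_1^*, 0)$.

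The reduced problem is to maximize $g(p_1, p_2) := p_1 D_1^1(p_1, p_2) + p_2 D_2^2(p_1, p_2)$ over the compact set $\{(p_1, p_2) \in [0, A]^2 : p_1 \ge p_2\}$; existence of a maximizer follows from continuity. The crucial step is to show no symmetric point $(p, p)$ is optimal. I would perturb along $(p_1, p_2) = (p + \delta, p - \delta)$ and compute
\begin{equation*}
\left.\frac{d g}{d \delta}\right|_{\delta = 0} = (D_1^1 - D_2^2)\Big|_{p_1 = p_2 = p} \; + \; p \cdot \left.\frac{d(D_1^1 + D_2^2)}{d \delta}\right|_{\delta = 0}.
\end{equation*}
The first term equals $B(p,p) > 0$ by Lemma \ref{lem: bonus} (with $s > 0$ forcing $A < 1$); indeed, under Assumption \ref{assumption: uniform} an explicit calculation gives $B(p,p) = (1-A)^2$. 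The second term vanishes: the outside option is chosen iff $u_1 < p_1$ and $u_2 < p_2$ (using $p_i \le A$ so searching always proceeds when the first inspection fails), so $D_1^1 + D_2^2 = 1 - F(p_1) F(p_2) = 1 - p_1 p_2$ under uniformity, and the derivative of $1 - (p + \delta)(p - \delta)$ at $\delta = 0$ is zero. Hence $\frac{d g}{d \delta}|_{\delta = 0} = B > 0$, and every symmetric price can be strictly improved by tilting prices apart.

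Combining the two steps, any optimum satisfies $\alpha \in \{0, 1\}$ and $p_1 \ne p_2$; relabeling to make seller $1$ the prominent one gives $p_1^* > p_2^*$, and the mirror contract $(p_2^*, p_1^*, 0)$ is equally optimal. The main obstacle I anticipate is the bookkeeping needed to verify the two clean identities --- $B_1 = B_2$ and $D_1^1 + D_2^2 = 1 - p_1 p_2$ --- from the explicit demand expressions in Appendix \ref{appendix: search problem}, together with confirming interiority (since $g$ vanishes on the axes and at $p_i = 1$, the maximum lies strictly inside $(0,A)^2$ where the perturbation applies). Once Lemma \ref{lem: bonus} is invoked to pool the two bonuses into a single $B$, the perturbation calculation itself is essentially a one-liner.
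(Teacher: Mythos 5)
Your proposal is correct, and while the first step (collapsing $\alpha$ to a corner via $\partial_\alpha J^{\Pi}=(p_1-p_2)B$) coincides with the paper's use of Lemma \ref{lemma: pin down alpha}, your second step is genuinely different. The paper writes down the two first-order conditions for $\pi_1^1+\pi_2^2$ under uniformity, sums them to get $3p_1p_2=1$, proves the resulting equation $\gamma(p_2)=0$ has a unique root with $p_2^*<\sqrt{3}/3<p_1^*$, and then argues the best symmetric point $p^*=\sqrt{3}/3$ is dominated because it fails those first-order conditions. You instead compute the directional derivative of $g=p_1D_1^1+p_2D_2^2$ along the anti-diagonal at an \emph{arbitrary} symmetric point and show it equals $B(p,p)=(1-A)^2>0$, using the accounting identity $D_1^1+D_2^2=1-F(p_1)F(p_2)$ to kill the quantity term. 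Your calculation is confirmed by the paper's own displayed expressions: subtracting the paper's two partial derivatives at $p_1=p_2=p$ gives exactly $A^2-2A+1=(1-A)^2$. What each route buys: the paper's explicit FOC analysis characterizes the asymmetric optimum and produces the benchmark $p^*=\sqrt{3}/3$ that is reused in Theorem \ref{thm1}, whereas your perturbation is shorter for the qualitative claim and is in fact distribution-free at that step, since $\tfrac{d}{d\delta}\bigl[F(p+\delta)F(p-\delta)\bigr]\big|_{\delta=0}=0$ for any differentiable $F$, so only Lemma \ref{lem: bonus} is really needed to rule out symmetry. Two small loose ends in your write-up, neither fatal: $g$ does not actually vanish on the axes ($g(0,p_2)=p_2D_2^2(0,p_2)$ can be positive), and restricting the domain to $[0,A]^2$ is somewhat arbitrary --- the nondegenerate demand formulas remain valid for the prominent seller up to $p_1\le p_2+1-A$, which exceeds $A$. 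But all your argument requires is existence of a maximizer on a compact set plus non-optimality of every symmetric point, and the paper's own proof is no more careful about the degenerate-demand region.
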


From the analysis of Lemma \ref{lemma: pin down alpha}, when $p_1 > p_2$, the multiproduct seller should allocate more traffic to seller 1, i.e., to choose $\alpha = 1$. When $p_1< p_2$, it is optimal to choose $\alpha = 0$ and when $p_1 = p_2$, the total profit is irrelevant to $\alpha$. In the above three cases, we can calculate the optimal prices and the corresponding total profits. To show that ordered search is optimal, It then suffices to show the profit when $p_1 = p_2$ is lower than the profit when $p_1 > p_2$ or $p_1 < p_2$. When $p_1 = p_2$, we denote the optimal price as $p^*$. Under the uniform distribution, $p^* = \frac{\sqrt{3}}{3}$. Since in this case, the total profit does not depend on $\alpha$, we can hypothetically set $\alpha = 1$ and will find that $p^*$ does not satisfy the first order condition for optimality. Details can be found in Appendix \ref{appendix: proof of main results}. 

Despite the asymmetric structure of the optimal contract, there is a connection between the multiproduct seller’s problem and the platform’s problem. Specifically, $p^*$ serves as a critical value in determining the symmetry of optimal contracts for the platform's problem.  

\subsection{Seller Optimal Contract}
Now we formally consider the platform's problem of maximizing the industry profit. This optimization problem yields the seller-optimal contract. We define the hihest and lowest point on the diagonal of $\partial P$ by $(\bar{p}, \bar{p}), (\underline{p}, \underline{p})$,
$$
\{(\bar{p}, \bar{p}), (\underline{p}, \underline{p})\} = \{(p_1, p_2) \in \partial P : p_1 = p_2\}, \bar{p} \geq \underline{p}
$$

\begin{theorem}
    Suppose that Assumption \ref{assumption: uniform} and \ref{assumption: A large} hold. The seller-optimal contract is $(\bar{p},\bar{p},1/2)$ when $\bar{p} \leq p^* $. When $\bar{p} > p^* $, the seller-optimal contract is asymmetric ($p_1\ne p_2$ and $\alpha \ne 1/2$). The first best is unattainable under both circumstances.
    \label{thm1}
\end{theorem}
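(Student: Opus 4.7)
The plan is to eliminate $\alpha$ via Lemma \ref{lemma: pin down alpha}, split $P$ into its symmetric diagonal and the off-diagonal region (restricting to $p_1 > p_2$ by the symmetry of $P$ from Proposition \ref{prop: P's property}), compute the maxima in each separately, and then compare.

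On the diagonal $p_1 = p_2 = p$, Lemma \ref{lemma: pin down alpha} gives $J^\Pi$ independent of $\alpha$, and the identity $D_1^1 + D_2^2 = 1 - F(p_1)F(p_2)$ from the proof of Proposition \ref{prop: P's discrip}, under Assumption \ref{assumption: uniform}, collapses industry profit to $p(1-p^2)$, a strictly concave function with maximizer $p^* = 1/\sqrt{3}$. Convexity and symmetry of $P$ (Proposition \ref{prop: P's property}) force the diagonal slice to be a compact interval $[\underline{p}, \bar{p}]$, and by the symmetry of $\varphi$ the unique rank at the endpoint $(\bar{p}, \bar{p}) \in \partial P$ is $\alpha = 1/2$. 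Therefore the diagonal maximum equals $\bar{p}(1-\bar{p}^2)$, attained at $(\bar{p}, \bar{p}, 1/2)$ when $\bar{p} \leq p^*$, and $p^*(1 - (p^*)^2)$, attained at $(p^*, p^*, 1/2)$ when $\bar{p} > p^*$.

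In the asymmetric region, Lemma \ref{lemma: pin down alpha} selects $\alpha = \bar{\alpha}(p_1, p_2)$ (IC2 binding) for $p_1 > p_2$. I expect the resulting two-variable problem to have its optimum on $\partial P$, where IC1 also binds; this reduces the feasible set to the single curve $\phi_1(p_1,p_2) + \phi_2(p_1,p_2) + F(p_1)F(p_2) = 1$ from Proposition \ref{prop: P's discrip}, which I would parametrize by $p_1$ to obtain a one-variable problem. The comparison with the diagonal maximum then turns on a local analysis at the symmetric endpoint $(\bar{p}, \bar{p})$: since $p(1-p^2)$ is strictly increasing on $[0, p^*]$, when $\bar{p} \leq p^*$ the diagonal endpoint is a best response to the constraint (no profitable asymmetric move exists), while when $\bar{p} > p^*$ the direction pointing toward Proposition \ref{prop:first best}'s first best is feasible and profitable, so asymmetric strictly dominates $(p^*, p^*, 1/2)$; moreover $\alpha = \bar{\alpha} \neq 1/2$ at such an asymmetric optimum because binding IC2 with $p_1 \neq p_2$ rules out the balanced allocation.

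For first-best unattainability: when $\bar{p} \leq p^*$ the optimum is symmetric while Proposition \ref{prop:first best}'s first best is asymmetric, so the claim is immediate. When $\bar{p} > p^*$, the asymmetric optimum on $\partial P$ has both ICs binding, which forces $\alpha < 1$ strictly (IC2 binding with $p_2 > 0$ and generic asymmetric prices prevents $\alpha = 1$), contradicting the first-best requirement. The main obstacle is the one-dimensional comparison along $\partial P$ under Assumption \ref{assumption: uniform}: the key ingredients are the explicit forms of $\pi_i^2$ and $B$ from Appendix \ref{appendix: search problem} and implicit differentiation of the binding identity to evaluate the profit derivative at $(\bar{p}, \bar{p})$, and isolating the sign change at the threshold $\bar{p} = p^*$ is the algebraic crux, particularly because the tangent to $\partial P$ at the symmetric endpoint is perpendicular to the diagonal so the first-order effect must be computed carefully.
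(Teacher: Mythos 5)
Your skeleton (eliminate $\alpha$ via Lemma \ref{lemma: pin down alpha}, treat the diagonal and the off-diagonal region separately, compare) matches the paper's, and your diagonal analysis and the first part of the unattainability argument are fine. But the two load-bearing steps are asserted rather than proved, and one of them is asserted in a form that is not actually what holds. First, "I expect the two-variable problem to have its optimum on $\partial P$" is not justified, and the correct reduction is to $\partial P \cup \ell$, not $\partial P$: the paper writes $\bar{\mu}^{\Pi}=\max_{p_1'}\pi_1^2(p_1',p_2)+\hat{\Pi}_2(p_1,p_2)$ using the binding IC1, notes the first term is independent of $p_1$, proves $\hat{\Pi}_2$ is concave in $p_1$, and then proves that $\partial\hat{\Pi}_2/\partial p_1\ge 0$ throughout $\bar{P}$ \emph{if and only if} $\bar p\le p^*$ (Lemmas \ref{lemma: 2nd derivative} and \ref{lemma: 1st order partial deriv.}). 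The threshold $\bar p \lessgtr p^*$ therefore enters at the reduction stage, not only in the diagonal comparison as your write-up suggests; when $\bar p> p^*$ the horizontal-shift argument fails and the paper has to argue by a completely different route (ruling out interior and boundary candidates with $\alpha=1/2$ and the symmetric interior candidate via its constrained FOC).

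Second, the comparison between $(\bar p,\bar p)$ and the rest of $\partial P$ cannot be settled by "a local analysis at the symmetric endpoint": you need the global statement that the restricted objective $\zeta(p_1)=\max_{p_1'}\pi_1^2(p_1',g(p_1))+\max_{p_2'}\pi_2^2(p_1,p_2')$ is monotone along the whole arc from $M^{\text{high}}$ to $\bar M$. This is the algebraic crux of the proof and occupies the three-step inequality chain in the appendix (comparing $\hat p_1(p_2)(1-\hat p_1)$ with $\hat p_2(p_1)(1-\hat p_2)$, a mediant inequality, and sign information on $\partial H/\partial p_i$); a first-order check at $(\bar p,\bar p)$, where as you note the tangent has slope $-1$, gives no control over the rest of the arc. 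Relatedly, for $\bar p>p^*$ your claim that "the direction pointing toward the first best is feasible and profitable" conflates the unconstrained objective $J^{\Pi}$ with the constrained value function $\bar\mu^{\Pi}$, in which $\alpha$ is pinned to $\underline\alpha(p_1,p_2)$ or $\bar\alpha(p_1,p_2)$ and moves with the prices; the first best uses $\alpha\in\{0,1\}$, which is generally infeasible at interior points, so profitability of that direction under the constraint is exactly what must be shown (the paper does this by showing $(p^*,p^*,1/2)$ violates the FOC of $\bar\mu^{\Pi}$ and leads to a numerical contradiction). As written, the proposal is a plan with the hardest estimates missing.
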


Recall from the last property from Proposition \ref{prop: P's property}, the set \( P \) expands as the search cost \( s \) increases. Thus, the condition \( p^* = \bar{p} \) identifies a critical value of search cost \( s^* \). Intuitively, this condition determines whether the optimization problem has an interior solution. When the set is small so that the interior is not included, the best boundary point is the highest point on the diagonal. When the interior is in $P$, we can use the same logic as the first best to show it is asymmetric. However, our result is still non-trivial because $p^*$ is not the optimal solution for the first best. In other words, we cannot directly conclude that the interior solution lies in $P$ based on $p^* < p$ alone. We will see this through the sketch of the proof.

Specifically, we prove Theorem \ref{thm1} in the following steps:
first we use Lemma $\ref{lemma: pin down alpha}$ to reduce the optimization problem to the set $P$. This step simplifies the analysis but retains complexity since now $\alpha$ becomes a function of prices. However, notice that $\underline{\alpha}$ and $\bar{\alpha}$ are determined by the equality of IC1 and IC2, respectively. Thus, depending on which IC constraint is binding, we have 
$$\max_{p_1'}\pi_1^2(p_1',p_2) = \Pi_1 \text{ or } \max_{p_2'}\pi_2^2(p_1,p_2') = \Pi_2
.$$
More importantly, on the boundary of $P$, both equality holds. 

Now as an example, consider the case when $p_1 \leq p_2$. We have $\bar{\mu}(p_1,p_2) = J(p_1,p_2,\underline{\alpha})$, which implies $\max_{p_1'}\pi_1^2(p_1',p_2) = \Pi_1$. Thus, 
$$
\bar{\mu}^\Pi = \max_{p_1'}\pi_1^2(p_1',p_2) + \Pi_2(p_1,p_2,\underline{\alpha}).
$$

Notice that \(\max_{p_1'} \pi_1^2(p_1', p_2)\) does not depend on \( p_1 \). This observation enables us to show that when \(\bar{p} \leq p^*\), any point in the interior of \( P^\circ \) can be shifted horizontally to either the boundary \(\partial P\) or the diagonal to achieve higher total profit. This step reduces the optimization problem to the boundary \(\partial P\) and the diagonal. The condition \(\bar{p} \leq p^*\) is critical here because \(\frac{\partial \Pi_2(p_1, p_2, \underline{\alpha})}{\partial p_1}\) is maximized at \((\bar{p}, \bar{p})\), and \( p^* \) is the point where this derivative equals zero. Thus, \( p^* \) determines whether the interior solution is included in \( P \), even though \( p^* \) itself is not the optimal solution, as shown in Proposition \ref{prop:first best}. On the boundary, where both IC constraints bind, the objective function can be analyzed along the curve \(\partial P\). This analysis confirms that the optimal solution is \((\bar{p}, \bar{p})\). Detailed derivations and proofs are provided in the Appendix \ref{appendix: proof of main results}.

For the impossibility result concerning the first best, we observe that the first-order condition for achieving the first best requires \(\frac{\partial (\pi_{2}^{1} + \pi_{1}^{2})}{\partial p_{1}} = 0\) and \(\alpha = 0 \ \ \text{or} \ \ 1\). However, when \(\alpha = 0 \ \ \text{or} \ \ 1\), one of the IC constraints is binding. Suppose \(\alpha = 0\), then IC1 is binding, meaning \(\frac{\partial \pi_{1}^{2}}{\partial p_{1}} = 0\). Since \(\frac{\partial \pi_{2}^{1}}{\partial p_{1}} > 0\), the first-order condition for the first best is not satisfied. The same logic applies when \(\alpha = 1\). Therefore, the first best for a multiproduct setting is unattainable through contract design.

\subsection{Trade Probability and Welfare}

Another common business model for platforms is the volume-based commission fee, where revenue depends on total demand. Under this model, the platform may seek to maximize total demand. The following result shows that, in contrast to maximizing industry profit, total demand is maximized at the lowest symmetric point of $P$, namely, $(\underline{p}, \underline{p})$.

\begin{proposition}
    Suppose that Assumption \ref{assumption: uniform} holds. The trade probability is maximized at contract $(\underline{p}, \underline{p},\frac{1}{2})$. 
    \label{prop:trade probability}
\end{proposition}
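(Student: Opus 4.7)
The key simplification is that $TP(p_1, p_2, \alpha) = 1 - F(p_1) F(p_2) = 1 - p_1 p_2$ under Assumption \ref{assumption: uniform} does not depend on $\alpha$, so the maximization reduces to $\min_{(p_1, p_2) \in P} p_1 p_2$, and any $\alpha \in \varphi(p_1^*, p_2^*)$ works at the optimum. At the candidate point $(\underline{p}, \underline{p}) \in \partial P$, Proposition \ref{prop: P's property}(3) forces $\varphi(\underline{p}, \underline{p})$ to be a singleton, and the symmetric form of IC1 and IC2 under $p_1 = p_2$ and $\alpha \leftrightarrow 1 - \alpha$ then pins this singleton to $\{1/2\}$. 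So the $\alpha = 1/2$ part of the claim is automatic once we establish that $(\underline{p}, \underline{p})$ is the price minimizer.

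The minimum of $p_1 p_2$ must lie on $\partial P$: any interior critical point needs $\nabla(p_1 p_2) = (p_2, p_1) = 0$, i.e., the origin, which clearly violates IC since either seller can deviate to a positive price and earn strictly positive profit whereas $\Pi_i = 0$. On $\partial P$ both IC constraints bind by Observation \ref{cor:virtual demand}, so the boundary coincides with the algebraic curve $\phi_1(p_1, p_2) + \phi_2(p_1, p_2) = 1 - p_1 p_2$. By symmetry of this curve and of the objective under swapping $p_1 \leftrightarrow p_2$, the Lagrangian first-order condition $\nabla(p_1 p_2) \parallel \nabla(\phi_1 + \phi_2 + p_1 p_2)$ is satisfied at $(\underline{p}, \underline{p})$, confirming it as a critical point of the constrained problem.

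The main obstacle is promoting this critical point to a \emph{global} minimum on $\partial P$. Because $p_1 p_2$ has indefinite Hessian (it is Schur-concave on the positive orthant), convexity and symmetry of $P$ alone are not enough: the tangent line to $\partial P$ at $(\underline{p}, \underline{p})$ is $p_1 + p_2 = 2\underline{p}$ by symmetry, which is also the tangent to the level hyperbola $p_1 p_2 = \underline{p}^2$, so the desired inclusion $P \subseteq \{p_1 p_2 \geq \underline{p}^2\}$ reduces, via the local expansion $p_2 = 2\underline{p} - p_1 + a(p_1 - \underline{p})^2 + \cdots$ along $\partial P$, to the curvature inequality $a \geq 1/\underline{p}$. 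My plan is to use the uniform-distribution closed form for $\pi_i^2(p_i', p_{-i})$ (computed in Appendix \ref{appendix: search problem}) to reduce the binding IC conditions to explicit polynomial equations in $(p_1, p_2)$; substitute these into an implicit-function expansion at $(\underline{p}, \underline{p})$ to obtain $a$ and verify the curvature inequality; and finally confirm that the boundary stays on the correct side of the hyperbola along the entire southwest arc of $\partial P$, for instance by fixing $p_2$ and showing that the smallest $p_1$-value $p_1^-(p_2)$ with $(p_1^-(p_2), p_2) \in P$ satisfies $p_1^-(p_2)\, p_2 \geq \underline{p}^2$ with equality only at $p_2 = \underline{p}$.
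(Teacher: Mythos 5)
Your setup is right and matches the paper's: the objective $1-p_1p_2$ is independent of $\alpha$, the $\alpha=1/2$ component follows from $\varphi(\underline{p},\underline{p})$ being a singleton pinned to $\{1/2\}$ by the $p_1\leftrightarrow p_2$, $\alpha\leftrightarrow 1-\alpha$ symmetry of the binding ICs, and you correctly diagnose the real difficulty — the tangency of $\partial P$ and the hyperbola $p_1p_2=\underline{p}^2$ at $(\underline{p},\underline{p})$ is first-order degenerate, so convexity and symmetry of $P$ cannot close the argument and one must extract quantitative information from the IC constraints.

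The genuine gap is that this decisive step is only announced, never executed. Everything after ``My plan is to...'' is a program, not a proof: you do not compute the curvature coefficient $a$, you do not verify $a\ge 1/\underline{p}$, and you do not establish the global inequality $p_1^-(p_2)\,p_2\ge \underline{p}^2$ along the relevant boundary arc. This is precisely the content of the proposition. The paper closes it by a different and fully explicit route: after reducing (via $\partial J^{TP}/\partial p_1=-p_2<0$) to $\ell\cup Arc(\underline{M},M^{\text{low}})$, it parametrizes the arc by the implicit function $g$ from $H=0$ and shows $\frac{\mathrm{d}}{\mathrm{d}p_1}J^{TP}(p_1,g(p_1))\le 0$ there, which amounts to the inequality $p_2/p_1\ge (\partial H/\partial p_1)/(\partial H/\partial p_2)$; this in turn is verified using the closed forms of $\partial H/\partial p_i$ under the uniform distribution together with the facts that the deviation prices satisfy $p^*<1/2$, $\mathrm{d}p^*/\mathrm{d}p<0$ (hence $p_1>p_2\Rightarrow p_1^*<p_2^*$ and $p_1^*(1-p_1^*)<p_2^*(1-p_2^*)$). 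A global monotonicity statement of this kind is also strictly stronger than your purely local curvature check, which by itself would only give a local minimum. Until you carry out the analogue of this computation, the proposal does not establish the result.
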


Intuitively, lower prices lead to higher demand. When Assumption \ref{assumption: uniform} holds, the objective function is given by
\[
J^{TP} = 1 - p_1 p_2,
\]
which represents the number of consumers, subtracting those who opt for the outside option. However, the random search structure still needs proof. To establish this result, it is necessary to analyze the objective function along the curve $\partial P$, which is non-trivial. Details can also be found in the Appendix \ref{appendix: proof of main results}.

The same contract also maximizes social welfare and consumer surplus while minimizing industry profit. 

\begin{proposition}
    Suppose that Assumption \ref{assumption: uniform} and \ref{assumption: A large} hold. At contract $(\underline{p}, \underline{p}, \frac{1}{2})$, the industry profit is minimized, while social welfare and consumer surplus are maximized. 
    \label{prop: welfare max}
\end{proposition}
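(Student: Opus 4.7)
The plan is to prove (i) $J^{\Pi}$ is minimized at $(\underline{p},\underline{p},\tfrac{1}{2})$ and (ii) $J^{SW}$ is maximized there, and then to derive the consumer-surplus claim (iii) from (i) and (ii) via the identity $CS=J^{SW}-J^{\Pi}$.

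For (i), I would apply Lemma \ref{lemma: pin down alpha} to eliminate $\alpha$. Combined with the swap symmetry $J^{\Pi}(p_1,p_2,\alpha)=J^{\Pi}(p_2,p_1,1-\alpha)$ and the symmetry of $P$ from Property~1 of Proposition \ref{prop: P's property}, it suffices to minimize $J^{\Pi}(p_1,p_2,\bar{\alpha})$ over $\{(p_1,p_2)\in P:p_1\le p_2\}$, where IC2 binds. On the diagonal $p_1=p_2=p$, Lemma \ref{lemma: pin down alpha} says the objective is independent of $\alpha$, and Proposition \ref{prop:trade probability} reduces it to $p(1-p^2)$; under Assumption \ref{assumption: A large} (suitably specified) the minimum of this function on $[\underline{p},\bar{p}]$ lies at $\underline{p}$. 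The main work is to rule out off-diagonal minimizers, which I would handle by parameterizing $\partial P$ (on which both ICs bind) using the closed-form $\phi_i$ from Proposition \ref{prop: P's discrip} under the uniform distribution, and showing that moving along $\partial P$ toward the diagonal intersection $(\underline{p},\underline{p})$ strictly lowers $J^{\Pi}(\cdot,\cdot,\bar{\alpha})$. The overall structure mirrors the argument used for Proposition \ref{prop:trade probability}.

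For (ii), I would write
\[
J^{SW}(p_1,p_2,\alpha)=\mathbb{E}\!\left[u_i\,\mathbf{1}\{\text{consumer buys }i\}\right]-s\cdot\Pr(\text{second search}),
\]
which under Assumption \ref{assumption: uniform} and the demand geometry of Figure \ref{Demand} is an explicit polynomial in $(p_1,p_2,\alpha)$. Applying Lemma \ref{lemma: pin down alpha} to the $SW$ objective, for $p_1\le p_2$ the welfare-maximizing allocation is $\bar{\alpha}$ (traffic goes to the cheaper seller as often as possible), and IC2 again binds. On the diagonal, Assumption \ref{assumption: A large} makes the second-search probability $\Pr(u_1<A)$ independent of $p$, while $\mathbb{E}[u_i\mathbf{1}\{\text{buy}\}]$ strictly decreases in $p$; hence $J^{SW}(p,p,\cdot)$ strictly decreases in $p$ and its diagonal maximum is at $\underline{p}$. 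Off the diagonal, a perturbation-along-$\partial P$ argument parallel to (i) shows that no feasible contract improves on $(\underline{p},\underline{p})$.

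For (iii), combining (i) and (ii), for every feasible $(p_1,p_2,\alpha)$,
\[
CS(p_1,p_2,\alpha)=J^{SW}(p_1,p_2,\alpha)-J^{\Pi}(p_1,p_2,\alpha)\;\le\;J^{SW}(\underline{p},\underline{p},\tfrac{1}{2})-J^{\Pi}(\underline{p},\underline{p},\tfrac{1}{2})=CS(\underline{p},\underline{p},\tfrac{1}{2}).
\]
The main obstacle is the off-diagonal analysis in (i) and (ii). Although convexity of $P$ (Property~2) and continuity guarantee existence of optima, the objectives are non-monotone in $(p_1,p_2)$ (e.g., $p(1-p^2)$ peaks at $p^{*}=1/\sqrt{3}$), so monotonicity alone does not pin the extremum to a corner of the diagonal. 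The boundary $\partial P$ is defined implicitly by the two binding ICs and its tractable parameterization relies on the explicit $\phi_i$ available under the uniform distribution. Assumption \ref{assumption: A large} enters exactly here: it keeps the relevant prices in the regime where the Figure \ref{Demand} demand formulas apply uniformly on $P$, and ensures that the diagonal minimizer of $p(1-p^2)$ on $[\underline{p},\bar{p}]$ is $\underline{p}$ rather than $\bar{p}$.
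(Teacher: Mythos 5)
Your architecture matches the paper's (pin down $\alpha$ via Lemma \ref{lemma: pin down alpha}, exploit the symmetry of $P$, reduce to the diagonal plus $\partial P$, and get consumer surplus from $CS=J^{SW}-J^{\Pi}$), but two steps as you describe them would fail. First, the diagonal comparison for profit minimization: you assert that Assumption \ref{assumption: A large} ensures the minimizer of $p(1-p^2)$ on $[\underline{p},\bar{p}]$ is $\underline{p}$. It does not — under the same assumptions Theorem \ref{thm1} explicitly allows $\bar{p}>p^*=1/\sqrt{3}$, in which case $p(1-p^2)$ is non-monotone on the diagonal and you must compare the two endpoints. The paper resolves this with a separate claim using the fact that both $(\underline{p},\underline{p})$ and $(\bar{p},\bar{p})$ lie on $\partial P$, where the binding IC gives $\underline{\mu}^{\Pi}(p,p)=2\max_{p'}\pi_1^2(p',p)$, and the right-hand side is increasing in $p$ by the envelope theorem; hence $\underline{\mu}^{\Pi}(\bar{p},\bar{p})\ge\underline{\mu}^{\Pi}(\underline{p},\underline{p})$. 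Without some such argument your diagonal step is incomplete.

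Second, the social-welfare part is not "parallel to (i)," and the paper itself warns about this. For $J^{\Pi}$ the binding IC makes one seller's profit collapse to $\max_{p_1'}\pi_1^2(p_1',p_2)$, a function of $p_2$ alone, so $\alpha$ disappears from the reduced objective. For $J^{SW}$ this cancellation does not occur: $\bar{\mu}^{SW}(p_1,p_2)=J^{SW}(p_1,p_2,\underline{\alpha}(p_1,p_2))$ (on $p_1\ge p_2$) depends on $\underline{\alpha}$ through the weight on $SW_1-SW_2=-\frac{(\Delta p)^3}{3}$, so any derivative of $\bar{\mu}^{SW}$ involves $\partial\underline{\alpha}/\partial p_i$, which must be bounded by implicitly differentiating the binding IC1. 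The paper's actual route is also structurally different: it projects interior points horizontally onto the \emph{diagonal} (showing $J^{SW}(p_1,p_2,\underline{\alpha})\le J^{SW}(p_2,p_2,\underline{\alpha})$), which fails for the region $G=\{p_2<\underline{p},\,p_1>p_2\}$ whose projection leaves $P$; that region requires a separate sign analysis of $\partial\bar{\mu}^{SW}/\partial p_1$ and of $\bar{\mu}^{SW}$ along $Arc(\underline{M},M^{\text{low}})$, using explicit numerical bounds on the relevant price ranges. Your step (iii) is fine as stated once (i) and (ii) are secured.
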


Proposition \ref{prop: welfare max} reveals that the lowest symmetric prices combined with random search are socially optimal but least favorable for sellers or a platform operating under a proportional commission fee model. This highlights a fundamental misalignment between the incentives of a platform using a proportional fee model and societal welfare. In contrast, a platform that generates revenue based on demand rather than profit aligns its incentives with those of society.

It is worth noting that proving Proposition \ref{prop: welfare max} is challenging, as the functional form of social welfare is more complex, preventing the use of IC constraints to simplify the problem. However, the result can still be established using monotonicity arguments, which allow us to restrict our analysis from the entire domain to the boundary and ultimately to an arc near the lowest price point. Details of the proof can be found in Appendix \ref{appendix: proof of main results}.

\section{Discussions}\label{sec: discussion}
\subsection{Corner Solution}
According to Proposition \ref{prop: P's property}, \( P \) becomes larger as \( A \) decreases. However, this expanded \( P \) may include prices higher than \( A \), leading to two issues. Consider a point \((p_1, p_2) \in P\) where \( p_1 \geq A \) and \( p_2 < A \) (the case where \( p_1 < A \) and \( p_2 \geq A \) is analogous). 

First, when seller 1 is ranked second—which generically occurs since \((p_1, p_2) \in P\)—a consumer will search seller 1 only if 
\[
u_2 - p_2 < A - p_1 \leq 0.
\]
This condition implies that the consumer should instead take the outside option, meaning they will never search seller 1. As a result, seller 1’s demand when ranked second is zero, i.e., \( D_1^2 = 0 \).

Second, if \( u_2 \geq p_2 \), the consumer will immediately buy from seller 2, resulting in 
\[
D_2^1 = 1 - F(p_2),
\]
rather than the previously derived \( D_2^1 \). Consequently, the same IC1 and IC2 conditions cannot be used to delineate the region.

As illustrated in Figures \ref{Corner}(a) and \ref{Corner}(b), the valid region for \( P \) lies strictly below the lines \( p_1 = A \) and \( p_2 = A \). The distinction between these figures lies in whether \( P \) includes points where both \( p_1 \geq A \) and \( p_2 \geq A \). The monotonicity of \( P \) with respect to \( A \) indicates that such points can exist when \( A \) is sufficiently small.

\begin{figure}[h]
    \centering
\subfigure[A = 0.5]{\includegraphics[width = 0.32\textwidth]{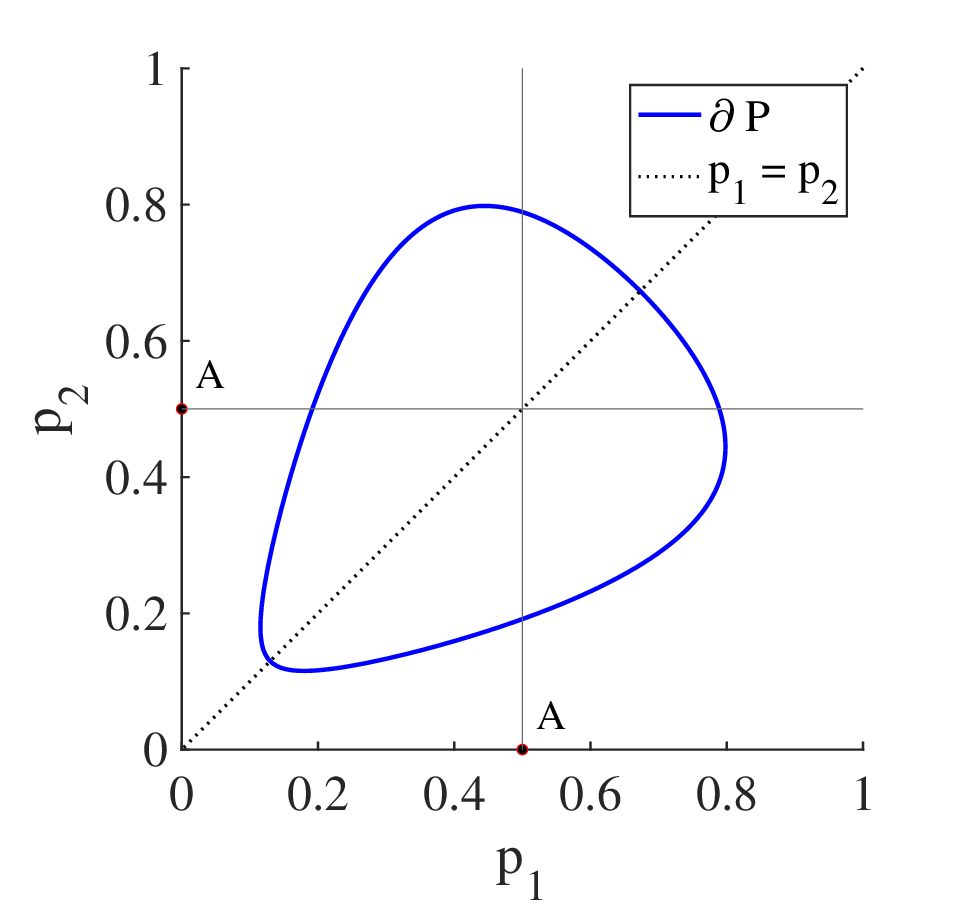}}
\hfill
\subfigure[A = 0.65]{\includegraphics[width = 0.32\textwidth]{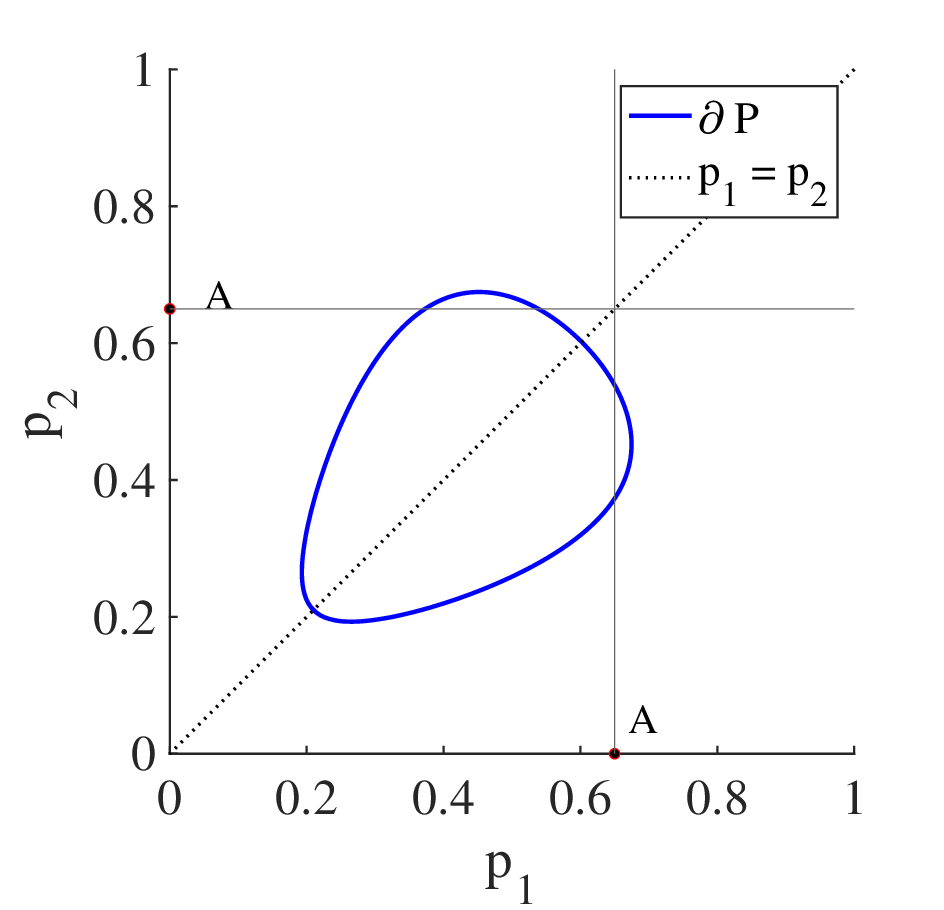}}
\hfill
\subfigure[A = 0.7]{\includegraphics[width = 0.32\textwidth]{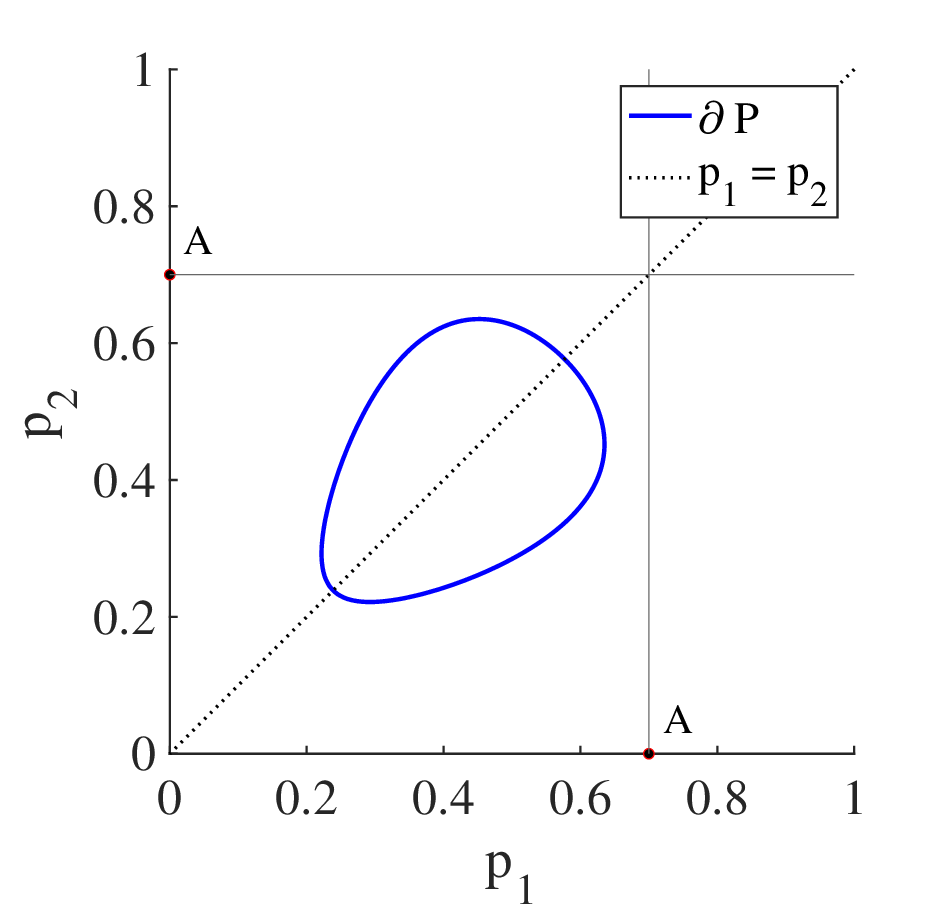}}
\caption{Three Possible Relative Positions for $P$ and $p = A$}
\label{Corner}
\end{figure}

However, some prices larger than $A$ are still implementable with modified IC constraints. Given prices $(p_1,p_2)$, first consider $p_1 \geq A, p_2<A$. Thus, $(p_1,p_2)$ is implementable if for some $\alpha \in [0,1]$ the following IC constraints hold. 

\[
\begin{aligned}
 &\widehat{\text{IC1}}: \max_{p_1'}\pi_1^2(p_1',p_{2}) \leq \alpha \pi_1^1(p_1,p_{2}) \\
 & \widehat{\text{IC2}}: \mathop{\max}_{p_{2}^{\prime}} \pi_{2}^{2}(p_{1},p_{2}')   \leq (1-\alpha) \hat{\pi}_2^1(p_1,p_2) + \alpha \pi_2^2(p_1,p_2) 
\end{aligned}
\]

The modified IC constraints address the problems mentioned above: in $\widehat{\text{IC1}}$, seller 1's profit when ranked second is deleted since the demand $D_1^2 = 0$ and in $\widehat{\text{IC2}}$, we use $\hat{\pi}_2^1:= p_2(1-F(p_2))$ to denote the profit of seller 2 when ranked first. Similarly, when $p_2\geq A$ and $p_1 < A$, define $\hat{\pi}_1^1:=p_1(1-F(p_1))$. The IC constraints are:
\[\widehat{\text{IC1}'}: \mathop{\max}_{p_{1}^{\prime}} \pi_{1}^{2}(p_{1}^{\prime},p_{2})   \leq \alpha \hat{\pi}_1^1 + (1-\alpha)\pi_1^2; \quad \widehat{\text{IC2}'}: \max_{p_2'}\pi_2^2(p_1,p_{2}') \leq (1-\alpha) \pi_2^1.\]

When $p_1,p_2\geq A$, the IC constraints are:
\[\widehat{\text{IC1}''}: \mathop{\max}_{p_{1}^{\prime}} \pi_{1}^{2}(p_{1}^{\prime},p_{2})   \leq \alpha \hat{\pi}_1^1 ; \quad \widehat{\text{IC2}''}: \max_{p_2'}\pi_2^2(p_1,p_{2}') \leq (1-\alpha) \hat{\pi}_2^1.\]

To facilitate the comparison, we suppose $P$ includes the part above $p_1 = A$ and $p_2 = A$. We denote the set of implementable prices under those modified IC constraints by $\hat{P}$ and state the following observations.

\begin{observation}
Suppose that $P$ includes the area when $p_1 \geq A$ or $p_2 \geq A$. We have: 
\begin{itemize}
    \item $(p_1,p_2) \in P$ $(p_2\geq A, p_2<A)$ determined by $\alpha = 0$ satisfies $\widehat{\text{IC1}'}$ and $\widehat{\text{IC2}'}$. $(p_1,p_2) \in P$ $(p_1\geq A, p_2<A)$ determined by $\alpha = 1$ satisfies $\widehat{\text{IC1}}$ and $\widehat{\text{IC2}}$. 
    \item $\hat{P}  \subseteq P$. 
\end{itemize}
\label{observation:corner}
\end{observation}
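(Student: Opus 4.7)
The plan is to prove the two bullets separately by algebraic manipulation of the IC constraints in each corner regime. The first bullet reduces to substituting $\alpha\in\{0,1\}$ into the original ICs defining $P$ and checking that the resulting expressions coincide with the modified ones. The second bullet is proved by deriving the feasibility inequality for $\hat{P}$ from the modified ICs and showing that it algebraically implies $\phi_1+\phi_2+F(p_1)F(p_2)\leq 1$, the inequality defining $P$, where $\phi_i:=\max_{p_i'}\pi_i^2(p_i',p_{-i})/p_i$.

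For the first bullet, consider the regime $p_1\geq A,\ p_2<A$ at $\alpha=1$. The RHS of the original IC1 becomes $\pi_1^2+p_1 B=p_1(D_1^2+B_1)=p_1 D_1^1=\pi_1^1$; since the consumer's stopping rule at the first seller depends only on the second seller's price $p_2<A$, the actual $D_1^1$ coincides with the naive one, so this matches the RHS of $\widehat{\text{IC1}}$ at $\alpha=1$. The original IC2 at $\alpha=1$ simplifies to $\max_{p_2'}\pi_2^2\leq\pi_2^2$, which is exactly $\widehat{\text{IC2}}$ at $\alpha=1$. The symmetric calculation, with the roles of $1$ and $2$ swapped and $\alpha=0$, handles the regime $p_2\geq A,\ p_1<A$.

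For the second bullet, in the regime $p_1\geq A,\ p_2<A$ the modified ICs give $\alpha\geq\phi_1/D_1^1$ and $\alpha\leq 1-(\phi_2-D_2^2)/B_2^{\text{act}}$ with $B_2^{\text{act}}=1-F(p_2)-D_2^2$, so feasibility forces $\phi_1\leq D_1^1$ and the inequality $\phi_1 B_2^{\text{act}}/D_1^1+\phi_2\leq 1-F(p_2)$. Using the identity $D_1^1+D_2^2=1-F(p_1)F(p_2)$, which persists in this regime because the "seller 1 first" configuration only involves $p_2<A$, one computes $1-B_2^{\text{act}}/D_1^1=F(p_2)(1-F(p_1))/D_1^1$. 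Combining with $\phi_1\leq D_1^1$ then yields $\phi_1+\phi_2\leq(1-F(p_2))+F(p_2)(1-F(p_1))=1-F(p_1)F(p_2)$, which is precisely $(p_1,p_2)\in P$. The analogous algebra, relying on the factorization $1-F(p_1)F(p_2)=(1-F(p_2))+F(p_2)(1-F(p_1))$, handles the regimes $p_2\geq A,\ p_1<A$ and $p_1,p_2\geq A$.

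The main subtlety is that Lemma \ref{lem: bonus}'s symmetry $B_1=B_2$ breaks in the corner regimes (for instance $D_1^2=0$ while $D_2^1=1-F(p_2)$ when $p_1\geq A,\ p_2<A$), so the modified ICs must be written with the correct asymmetric bonuses on each side. Once this is accounted for, the factorization of $1-F(p_1)F(p_2)$ together with the bound $\phi_1\leq D_1^1$ that modified $\widehat{\text{IC1}}$ already imposes makes the inclusion $\hat{P}\subseteq P$ follow cleanly in one line.
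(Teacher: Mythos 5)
Your proposal is correct, and while the first bullet is handled exactly as in the paper (direct substitution of $\alpha=1$ and $\alpha=0$ into the original and modified ICs, noting that $D_1^1$ and $D_2^2$ are unchanged in the mixed regime), your proof of $\hat{P}\subseteq P$ takes a genuinely different route. The paper proves the stronger pointwise statement $\hat{\varphi}(p_1,p_2)\subseteq\varphi(p_1,p_2)$ by comparing the two $\alpha$-intervals endpoint by endpoint: the modified lower bound $\max\pi_1^2/\pi_1^1$ dominates $\underline{\alpha}$ via a mediant-type inequality, and the modified upper bound sits below $\bar{\alpha}$ because the truncated first-position profit satisfies $\hat{\pi}_2^1=p_2(1-F(p_2))\leq\pi_2^1$. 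You instead collapse the two modified ICs into a single feasibility inequality and verify directly that it implies the combined inequality $\phi_1+\phi_2+F(p_1)F(p_2)\leq1$ defining $P$, using the total-demand identity $D_1^1+D_2^2=1-F(p_1)F(p_2)$ (which, as you correctly argue, survives when $p_1\geq A$ and $p_2<A$ because the stopping rule at the first seller depends only on $p_2$) together with the bound $\phi_1\leq D_1^1$ and the normalization $1-B_2^{\text{act}}/D_1^1=F(p_2)(1-F(p_1))/D_1^1\geq0$. Notably, you never need the demand comparison $\hat{D}_2^1\leq D_2^1$ that drives the paper's argument. The paper's version buys the stronger set inclusion on feasible search orders and is shorter; yours is more computational but makes transparent why the single-inequality characterization of $P$ from Proposition \ref{prop: P's discrip} is preserved. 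Two minor loose ends worth stating explicitly: the divisions require $D_1^1>0$ and $B_2^{\text{act}}=\hat{D}_2^1-D_2^2>0$ (otherwise $\hat{\varphi}$ is empty and the inclusion is vacuous), and in the regime $p_1,p_2\geq A$ the modified ICs carry no $\pi_i^2$ terms, so the algebra is not literally analogous—though the implication $\phi_1/(1-F(p_1))+\phi_2/(1-F(p_2))\leq1\Rightarrow\phi_1+\phi_2\leq1-F(p_1)F(p_2)$ does follow easily from $\phi_i\leq1-F(p_i)$.
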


The proof is in the Appendix \ref{appendix: proof of main results}. Simply speaking, though some points are still implementable when at least one seller's price is above $A$, the set of implementable prices is smaller compared to the untruncated $P$.

\newpage

\appendix

\section{The Sequential Search Problem}
\label{appendix: search problem}

\subsection{Consumer search problem}

With the sequential search framework, we can calculate the demand for each seller ranked first and second. Let \(\Delta p=p_{i}-p_{j}\). When seller \(i\) is searched first and seller $j$ is searched second,

\[\begin{aligned}
    D_{i}^{1}&= 1-F(A+\Delta p)+\int_{p_{i}}^{A+\Delta p}F(u-\Delta p)f(u)du \\
    &= 1-F(A+\Delta p) + \int_{p_{j}}^{A}F(u)f(u+\Delta p)du\\
    D_{j}^{2}&=(1-F(A))F(A+\Delta p)+ \int_{p_{j}}^{A} F(u+\Delta p)f(u)du\\
    &=(1-F(A))F(A+\Delta p)+\int_{p_{i}}^{A+\Delta p}F(u)f(u-\Delta p)du
\end{aligned}\]
where \(p_j \leq A\), and \(A+\Delta p \leq 1\). 

Similarly, when consumers search \(j\) first, we have
\[\begin{aligned}
D_{j}^{1}&=1-F(A-\Delta p)+ \int_{p_{i}}^{A} F(u)f(u-\Delta p)du \\
D_{i}^{2}&=(1-F(A))F(A-\Delta p)+ \int_{p_{j}}^{A-\Delta p}F(u)f(u+\Delta p)du.
\end{aligned}\]
where \(p_i \leq A\), and \(A-\Delta p \leq 1\). 

By the definition of the bonus $B_i(p_1,p_2)$:
\[B_{i}= 1-F(A+\Delta p) -(1-F(A))F(A-\Delta p)+\int_{A-\Delta p}^{A}F(u)f(u+\Delta p)du\]
\[B_{j}=1-F(A-\Delta p) -(1-F(A))F(A+\Delta p)+\int_{A+\Delta p}^{A}F(u)f(u-\Delta p)du\]

Last, when seller \(i\) is searched first and seller $j$ is searched second, the social values from buying are
\[V_i^1=\int^1_{A+\Delta p}uf(u)du+\int^{A+\Delta p}_{p_i}uF(u-\Delta p)f(u)du\]
\[V_j^2=F(A+\Delta p)\int^1_{A}uf(u)du+\int^{A}_{p_j}uF(u+\Delta p)f(u)du\]

\subsection{Lemmas on Consumer Search Problem} 

\begin{lemma}
    (Properties of Bonus) Given prices $p_1, p_2$,
    \begin{enumerate}
        \item \(B_{i}\)= \(B_{j}\).
        \item \(B_{i}\geq 0 \), the equality holds if and only if  \(s=0\).
        \item \(B_{i}\) strictly increases with \(s\).
    \end{enumerate}
\end{lemma}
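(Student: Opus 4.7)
The plan is to prove the three claims in the order (1), (3), (2), since monotonicity in $s$ combined with a boundary computation at $s=0$ immediately yields non-negativity with the stated equality condition.

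For (1), the cleanest route is a \emph{conservation identity}. When seller $i$ is ranked first, a consumer fails to purchase from either seller iff $u_i \le p_i$ and, upon executing the second search (which she always performs provided $p_j \le A$), $u_j \le p_j$. Hence the total purchase probability satisfies
\begin{equation*}
D_i^1(p_1,p_2) + D_j^2(p_1,p_2) \;=\; 1 - F(p_i)F(p_j).
\end{equation*}
Swapping the roles of $i$ and $j$ yields the same expression for $D_j^1 + D_i^2$. Subtracting the two identities gives $D_i^1 - D_i^2 = D_j^1 - D_j^2$, which is exactly $B_i = B_j$. (A purely algebraic alternative is to substitute $v = u + \Delta p$ in the integral defining $B_i$ and recognize the leftover discrepancy with $B_j$ as the integral of $\tfrac{d}{dv}[F(v-\Delta p)F(v)]$ over $[A,A+\Delta p]$, but the conservation argument is shorter and more transparent.)

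For (3), since $s$ enters $B_i$ only through $A=A(s)$, I differentiate the closed form
\begin{equation*}
B_i \;=\; 1-F(A+\Delta p) - (1-F(A))F(A-\Delta p) + \int_{A-\Delta p}^{A} F(u)\, f(u+\Delta p)\,du
\end{equation*}
with respect to $A$ using Leibniz's rule. The two $f(A)F(A\pm\Delta p)$ contributions cancel, leaving the clean factored form
\begin{equation*}
\frac{\partial B_i}{\partial A} \;=\; -\bigl(1-F(A)\bigr)\bigl[f(A+\Delta p) + f(A-\Delta p)\bigr] \;<\; 0
\end{equation*}
whenever $A<1$. Differentiating $V(A)=s$ gives $A'(s) = 1/V'(A) = 1/(F(A)-1) < 0$, so by the chain rule $dB_i/ds > 0$.

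For (2), the defining equation $V(A)=s$ together with the strict monotonicity $V'<0$ on $[0,1)$ forces $A=1$ precisely when $s=0$. Plugging $A=1$ into the explicit formula for $B_i$ kills every term (the prefactor $1-F(A)$ vanishes, $1-F(A+\Delta p)=0$, and the integrand $f(u+\Delta p)$ is supported on $u\le 1-\Delta p$ so the integral collapses), giving $B_i=0$ at $s=0$. Combined with (3), this yields $B_i>0$ for all $s>0$ and establishes that equality holds iff $s=0$. The main obstacle will be the derivative bookkeeping in (3)---in particular, tracking signs from Leibniz's rule and confirming the cancellation of the $f(A)$ boundary terms; everything else is essentially a direct consequence of the closed-form demand expressions already derived.
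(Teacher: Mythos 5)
Your proposal is correct, and parts (1) and (3) coincide with the paper's own proof: the paper also derives $B_i=B_j$ from the conservation identity $D_i^1+D_j^2=D_j^1+D_i^2=1-F(p_i)F(p_j)$, and also computes $\partial B_i/\partial A=-(1-F(A))\bigl[f(A+\Delta p)+f(A-\Delta p)\bigr]\le 0$ and converts this to monotonicity in $s$. The only genuine divergence is part (2). The paper establishes non-negativity directly by regrouping the closed form into a manifestly non-negative expression,
\[
B_i=(1-F(A+\Delta p))(1-F(A-\Delta p))+\int_{A-\Delta p}^{A}\bigl(F(u)-F(A-\Delta p)\bigr)f(u+\Delta p)\,du,
\]
from which the equality case $A=1$ is read off term by term; you instead deduce (2) from (3) together with the boundary evaluation $B_i\big|_{A=1}=0$. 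Your route is valid, but note that it leans on \emph{strict} monotonicity on $(0,s]$, which in turn requires $f(A+\Delta p)+f(A-\Delta p)>0$ for $A<1$ --- automatic under the uniform assumption but not for an arbitrary density that may vanish on part of its support, whereas the paper's factorization gives $B_i\ge 0$ unconditionally. Also, a small bookkeeping point in your evaluation at $A=1$: since the demand formulas require $A+\Delta p\le 1$, the relevant case there is $\Delta p\le 0$, so the integral $\int_{A-\Delta p}^{A}$ runs backwards over $[1,1-\Delta p]$ and cancels against the $1-F(1+\Delta p)$ term rather than vanishing because the integrand's support is exhausted; the conclusion $B_i=0$ is unaffected.
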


\begin{proof}  
\( D_i^1+D_j^2=D_i^2+D_j^1=1-F(p_i)F(p_j)\), thus \(B_{i} =D_i^1-D_i^2=D_j^1-D_j^2= B_{j}
    \).
\[
\begin{aligned}
       B_{i} &= 1-F(A+\Delta p) -(1-F(A))F(A-\Delta p)+\int_{A-\Delta p}^{A}F(u)f(u+\Delta p)du \\
       &=1-F(A+\Delta p)-(1-F(A+\Delta p))F(A-\Delta p)\\
       &-(F(A+\Delta p)-F(A))F(A-\Delta P) + \int_{A-\Delta p}^{A}F(u)f(u+\Delta p)du\\
       &=(1-F(A+\Delta p))(1-F(A-\Delta p))+\int_{A-\Delta p}^{A}(F(u)-F(A-\Delta p))f(u+\Delta p)du \geq 0
\end{aligned}
\]  

\(B_{i} = 0\) if and only if \(A=1\), that is, \(s=0\).

\(\frac{\partial B_{i}}{\partial A}=-[1-F(A)][f(A+\Delta p)+f(A-\Delta p)] \leq 0\). \(B_{i}\) strictly decreases with \(A\) and thus strictly increases with \(s\).
\end{proof}

\begin{lemma}
    (Properties of Demands and Profits)
    \begin{enumerate}
        \item \(\frac{\partial D_{i}^{1}}{\partial p_{j}}=\frac{\partial D_{j}^{2}}{\partial p_{i}} \geq 0\), the equality holds if and only if \(A=p_j=1\).
        \item \(\frac{\partial D_{i}^{1}}{\partial p_{i}}<0 ,\frac{\partial D_{j}^{2}}{\partial p_{j}} \leq 0\), the equality holds if and only if \(A=p_j=1\) and \(p_i=0\).
        \item  \(\frac{\partial^2 D_{i}^{1}}{{\partial p_{i}}^2} \leq 0 \) when \(F\) is twice differentiable and convex.
        
    \end{enumerate}
        \label{BestResponse}
\end{lemma}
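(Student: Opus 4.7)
The plan is to verify all three claims by direct Leibniz differentiation of the explicit demand formulas stated at the top of this appendix, choosing at each step the integral representation that makes the bookkeeping cleanest.

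For parts 1 and 2, I would start from $D_i^1 = 1 - F(A+\Delta p) + \int_{p_i}^{A+\Delta p} F(u-\Delta p) f(u)\,du$. Differentiating with respect to $p_j$ (using $\partial \Delta p/\partial p_j = -1$), the upper-limit contribution $-F(A) f(A+\Delta p)$ combines with $+f(A+\Delta p)$ from the first term to give $(1-F(A)) f(A+\Delta p)$, while the derivative of the integrand contributes $\int_{p_i}^{A+\Delta p} f(u-\Delta p) f(u)\,du$; both pieces are non-negative. Differentiating $D_j^2$ with respect to $p_i$ in the form $(1-F(A)) F(A+\Delta p) + \int_{p_j}^{A} F(u+\Delta p) f(u)\,du$ and applying the substitution $v = u + \Delta p$ in the surviving integral yields exactly the same expression, establishing the symmetry and non-negativity of part 1; the equality case forces $1-F(A) = 0$ and the integrand to vanish, which collapses the interval of integration and so requires $A = p_j = 1$. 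Differentiating $D_i^1$ with respect to $p_i$ instead (so $\partial \Delta p/\partial p_i = +1$) yields the sum of three non-positive terms
\begin{equation*}
\frac{\partial D_i^1}{\partial p_i} = -(1-F(A)) f(A+\Delta p) - F(p_j) f(p_i) - \int_{p_i}^{A+\Delta p} f(u-\Delta p) f(u)\,du,
\end{equation*}
which is strict whenever $F(p_j) f(p_i) > 0$; the symmetric computation for $\partial D_j^2/\partial p_j$ together with the corner analysis gives part 2.

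For part 3, I would differentiate the display above once more in $p_i$. A direct application of Leibniz to the third term produces an integral $\int_{p_i}^{A+\Delta p} f'(u-\Delta p) f(u)\,du$ together with two awkward cross-pieces $-f(A) f(A+\Delta p) + f(p_j) f(p_i)$ coming from the endpoint evaluations. The decisive step is to integrate the new integral by parts with $U = f(u)$ and $dV = f'(u-\Delta p)\,du$; the boundary contribution $[f(u-\Delta p) f(u)]_{p_i}^{A+\Delta p}$ exactly cancels the two awkward cross-pieces, leaving
\begin{equation*}
\frac{\partial^2 D_i^1}{\partial p_i^2} = -(1-F(A)) f'(A+\Delta p) - F(p_j) f'(p_i) - \int_{p_i}^{A+\Delta p} f(u-\Delta p) f'(u)\,du,
\end{equation*}
each summand of which is non-positive precisely when $f' \geq 0$, i.e., when $F$ is convex.

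The principal obstacle lies in part 3: the naive second differentiation produces roughly half a dozen terms of mixed signs, and the negative-definite structure only emerges after one recognizes that the problematic boundary pieces can be absorbed into the remaining integral via integration by parts. Parts 1 and 2 are essentially mechanical once one commits to the integral representation of $D_i^1$ with $F(u-\Delta p) f(u)$ as integrand, so that differentiation in $p_j$ (respectively $p_i$) cleanly separates positive (respectively negative) contributions.
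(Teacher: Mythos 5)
Your proposal is correct and follows essentially the same route as the paper: direct Leibniz differentiation of the explicit demand formulas, arriving at exactly the same three-term expressions for $\partial D_i^1/\partial p_i$ and $\partial^2 D_i^1/\partial p_i^2$ (the paper merely organizes the computation with the change of variables $v=u+\Delta p$ so that the integration limits are independent of $p_i$, which lets it avoid the endpoint terms you instead cancel by integration by parts in part 3). The sign analysis and equality cases also match the paper's.
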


\begin{proof}
    \[
    \begin{aligned}
            \frac{\partial D_{i}^{1}}{\partial p_{j}} &= f(A+\Delta p)-F(A)f(A+\Delta p)+\int_{p_{i}}^{A+\Delta p}f(u-\Delta p)f(u)du \\
            &=(1-F(A))f(A+\Delta p)+ \int_{p_{j}}^{A}f(u+\Delta p)f(u)du =\frac{\partial D_{j}^{2}}{\partial p_{i}} \geq 0,          
    \end{aligned}
  \]
the inequality becomes equality if and only if \(A=p_j=1\).

  \[
    \begin{aligned}
            \frac{\partial D_{i}^{1}}{\partial p_{i}} &= -f(A+\Delta p)+\int_{p_{j}}^{A}F(u)f^{\prime}(u+\Delta p)du\\
            &=-(1-F(A))f(A+\Delta p)-F(p_j)f(p_i)-\int_{p_{j}}^{A}f(u)f(u+\Delta p)du\\
            &=-\frac{\partial D_{i}^{1}}{\partial p_{j}}-F(p_j)f(p_i)  < 0,
    \end{aligned}
  \]
the equality doesn't hold since \(F(p_j)=1\) when \(p_j=1\).
    \[
    \begin{aligned}
            \frac{\partial D_{j}^{2}}{\partial p_{j}} &= -(1-F(A))f(A+\Delta p)-F(p_i)f(p_j)-\int_{p_{j}}^{A}f(u+\Delta p)f(u)du\\
            &=-\frac{\partial D_{i}^{1}}{\partial p_{j}}-F(p_i)f(p_j)  \leq 0,
    \end{aligned}
  \]
the inequality becomes equality if and only if \(A=p_j=1\) and \(p_i=0\).

  When \(F\) is twice differentiable and convex, we have:
  \[
    \begin{aligned}
            \frac{\partial^2 D_{i}^{1}}{{\partial p_{i}}^2}
            &=-(1-F(A))f^{\prime}(A+\Delta p)-F(p_j)f^{\prime}(p_i)-\int_{p_{j}}^{A}f(u)f^{\prime}(u+\Delta p)du \leq 0
    \end{aligned}
  \]
  \[
    \begin{aligned}
            \frac{\partial^2 D_{i}^{1}}{\partial p_i \partial p_j} = \frac{\partial^2 D_{j}^{2}}{{\partial p_i}^2} 
            &=(1-F(A))f^{\prime}(A+\Delta p)+\int_{p_{j}}^{A}f(u)f^{\prime}(u+\Delta p)du \geq 0
    \end{aligned}
  \]
 \[
    \begin{aligned}
            \frac{\partial^2 D_{i}^{1}}{{\partial p_j}^2} = \frac{\partial^2 D_{j}^{2}}{\partial p_i \partial p_j} 
            &=-(1-F(A))f^{\prime}(A+\Delta p)-f(p_i)f(p_j)-\int_{p_{j}}^{A}f(u)f^{\prime}(u+\Delta p)du \leq 0
    \end{aligned}
  \]
\[
    \begin{aligned}
            \frac{\partial^2 D_{j}^{2}}{{\partial p_{j}}^2}
            &=(1-F(A))f^{\prime}(A+\Delta p)-F(p_i)f^{\prime}(p_j)+f(p_i)f(p_j)+\int_{p_{j}}^{A}f(u)f^{\prime}(u+\Delta p)du
    \end{aligned}
\]
\end{proof}


\section{The Property of the Implementable Prices Set} 
\label{appendix: set P}

\subsection{General Property}

Define a function $H: \mathbb{R}^2 \to \mathbb{R}$, 
    \[H(p_1, p_2) = p_2{\mathop{\max}\limits_{p_1^\prime} \pi_{1}^2(p_{1}^{\prime},p_{2})}  + p_1{\mathop{\max} \limits_{p_2^\prime}\pi_{2}^2(p_{2}^{\prime},p_{1})} + p_1p_2F(p_{1})F(p_{2}) - p_1p_2.\]

Then $P=\{(p_1,p_2):H(p_1,p_2)\le 0\}$. Since $H$ is continuous, $P$ is closed. Since $P$ is obviously bounded, $P$ is compact. The symmetry of $H$ over $p_1$ and $p_2$ contains that $P$ is symmetric about the line $p_1=p_2$.

By the proof of Proposition 2, we can see that, for all $(p_1,p_2) \in P$, $\varphi(p_1,p_2)$ is a compact set $[\underline{\alpha}(p_1,p_2), \bar{\alpha}(p_1,p_2)]$. $\varphi(p_1,p_2)$ reduces to a singleton if and only if $(p_1,p_2) \in \partial P$ where $\underline{\alpha}(p_1,p_2)=\bar{\alpha}(p_1,p_2)$.

As $s$ increases from $s_1$ to $s_2$, $A$ decreases and then $D_1^2$ and $D_2^2$ decreases since $$\frac{\partial D_1^2}{\partial A}=[1-F(A)]f(A-\Delta p)>0, \frac{\partial D_2^2}{\partial A}=[1-F(A)]f(A+\Delta p)>0$$
where $\Delta p=p_1-p_2$. Thus, $\pi_1^2(s_1)$ is higher than $\pi_1^2(s_2)$ given any $p_1^\prime$ and $p_2$, while $\pi_2^2(s_1)$ are higher than $\pi_2^2(s_2)$ given any $p_1$ and $p_2^\prime$. As a result, $\max \pi_1^2$ and $\max \pi_2^2$ is higher when $s=s_1$ than when $s=s_2$. Then, for any $(p_1,p_2)\in P_{s_1}$, $H(p_1,p_2|s_2)\le H(p_1,p_2|s_1)\le 0$, which means  $(p_1,p_2)\in P_{s_2}$. Thus, $P_{s_1}\subseteq P_{s_2}$

\subsection{Results under Uniform Distribution}

Consider an environment where \(F(v) \sim U[0,1]\). All consumers have the same search cost $s$. We thus have \(A = 1- \sqrt{2s}\).

Given $\boldsymbol{p} = (p_1,p_2)$, using the formulas calculated in Appendix A.1,  

\[D_1^1(\boldsymbol{p}) =\frac{1}{2}A^2-\frac{1}{2}p_2^2 -A-p_1+p_2 +1 \]
\[D_1^2(\boldsymbol{p}) = - \frac{1}{2} A^2 + \frac{1}{2}p_1^2 - p_1p_2 + A -p_1 + p_2\]
\[D_2^1(\boldsymbol{p}) = \frac{1}{2}A^2 - \frac{1}{2} p_1^2- A + p_1 - p_2  +1\]
\[D_2^2(\boldsymbol{p}) = -\frac{1}{2}A^2 + \frac{1}{2} p_2^2-p_1p_2 +A+p_1-p_2\]
\[B = D_1^1(\boldsymbol{p}) - D_1^2(\boldsymbol{p}) = D_2^1(\boldsymbol{p}) - D_2^2(\boldsymbol{p}) = (1-A)^2 - \frac{1}{2}(p_1-p_2)^2\]

We thus have the condition of $P$ as 
\[\frac{\max\limits_{p_1^\prime} p_1^\prime D_1^2(p_1^\prime,p_2)}{p_1} + \frac{\max\limits_{p_2^\prime} p_2^\prime D_2^2(p_1,p_2^\prime)}{p_2} + p_1p_2 \leq 1 \]

The first order condition of $\max \pi_2^2$ is
\[ \frac{3}{2}p_2^{\prime 2} -(2+2p_1)p_2^\prime + (p_1+A-\frac{1}{2} A^2)=0\]

Thus the best response of the deviated price $p_2^\prime\triangleq \hat{p}_2(p_1)$ satisfied the above function and we can induce that
\[ \frac{d\hat{p}_2}{dp_1}=\frac{1-2\hat{p}_2}{2+2p_1-3\hat{p}_2}\]

The implicit function $p_2^\prime= \hat{p}_2(p_1)$ have two roots, but only the the lowest root can be the best response by checking the second order condition. Thus, $\hat{p}_2<\frac{2+2p_1}{3}$, or $2+2p_1-3\hat{p}_2>0$.

A sufficient condition for $P$ to be convex is that the function $H$ is convex over $(p_1,p_2)\in P$. By the definition of $H$ and the uniform distribution,
\[ \frac{\partial H}{\partial p_1}= \max \pi_2^2+p_1[\max\pi_2^2]_{p_1}^\prime+2p_1p_2^2-p_2\]
\[ \frac{\partial^2 H}{\partial p_1^2}= 2[\max\pi_2^2]_{p_1}^\prime + p_1[\max\pi_2^2]_{p_1}^{\prime\prime} +2p_2^2\]
\[ \frac{\partial^2 H}{\partial p_1 \partial p_2}= 4p_1p_2-1\]
where by the envelope theorem,
\[ [\max\pi_2^2]_{p_1}^\prime= -\hat{p}_2^2+\hat{p}_2, [\max\pi_2^2]_{p_1}^{\prime\prime}=[1-2\hat{p}_2]\frac{d\hat{p}_2}{dp_1} \]

Thus, on one hand, 
\[ \frac{\partial^2 H}{\partial p_1^2}= 2\hat{p}_2(1-\hat{p}_2)+ p_1\frac{(1-2\hat{p}_2)^2}{2+2p_1-3\hat{p}_2}+2p_2^2>2p_2^2\ge 0 \]

With the same logic, we can prove that $\frac{\partial^2 H}{\partial p_2^2}>2p_1^2\ge 0 $

On the other hand, when $s\to 0$, or equivalently when $A\to 1$, the set $P$ shrinks and all $(p_1,p_2)\in P$ satisfied $1/6<p_1p_2<1/2$. Under this condition, 
\[\frac{\partial^2 H}{\partial p_1^2}\frac{\partial^2 H}{\partial p_2^2} - (\frac{\partial^2 H}{\partial p_1 \partial p_2})^2>2p_2^2\cdot 2p_1^2-(4p_1p_2-1)^2=(1-2p_1p_2)(6p_1p_2-1)>0\]

The social values under uniform distribution are
\[V_1^1=\frac{1}{2}[1-(A+\Delta p)^2]+\frac{1}{3}[(A+\Delta p)^3-p_1^3]-\frac{1}{2}\Delta p [(A+\Delta p)^2-p^2_1]\]
\[V_2^2=(A+\Delta p)\frac{1}{2}[1-A^2]+\frac{1}{3}[A^3-p_2^3]+\frac{1}{2}\Delta p [A^2-p^2_2]\]
\[V_1=\frac{1}{3}[A^3+(A+\Delta p)^3-p^3_1-p^3_2]+\frac{1}{2}[1-(A+\Delta p)^2]+\frac{\Delta p}{2}[1-(A+\Delta p)^2-p_2^2+p_1^2]+\frac{A}{2}[1-A^2]\]
\[V_1-V_2=-\frac{1}{3}\Delta p^3+(A-1)^2\Delta p\]

The social welfare contains the loss in search cost, which means,
\[SW_1=V_1-F(A+\Delta p)s,SW_2=V_2-F(A-\Delta p)s\]

Thus,
\[SW_1-SW_2=-\frac{1}{3}\Delta p^3+(A-1)^2\Delta p-2s\Delta p=-\frac{1}{3}\Delta p^3\]

\section{Proof of Main Results}
\label{appendix: proof of main results}

\paragraph{Proof of Lemma \ref{lemma: pin down alpha}}

Since $J^{\Pi}(p_1, p_2, \alpha) = \pi_{1}^{2}+\pi_{2}^{2}+ \alpha p_{1}B(p_1,p_2)+(1-\alpha) p_{2}B(p_1,p_2)$

\[\frac{\partial J^{\Pi}(p_1, p_2, \alpha)}{\partial \alpha} = (p_1-p_2)B(p_1,p_2)\]

Since $B>0$, when $p_1 > p_2$, $\frac{\partial J^{\Pi}(p_1, p_2, \alpha)}{\partial \alpha} > 0$; when $p_1 < p_2$, $\frac{\partial J^{\Pi}(p_1, p_2, \alpha)}{\partial \alpha} < 0$, and when $p_1 = p_2$, $\frac{\partial J^{\Pi}(p_1, p_2, \alpha)}{\partial \alpha} \equiv 0$. Hence the part for industry profit of Lemma \ref{lemma: pin down alpha} is proved. 

As for social welfare, we use $SW_1(p_1, p_2)$ and $SW_2(p_1, p_2)$ respectively to denote the social welfare when seller $1$ and $2$ are searched first. Under uniform distribution,
\[SW_1(p_1, p_2)  - SW_2(p_1, p_2) = -\frac{(p_1 - p_2)^3}{3}\]

Since $J^{SW} = \alpha SW_1(p_1, p_2) + (1 - \alpha) SW_2(p_1, p_2)$ 

\[\frac{\partial J^{SW}}{\partial \alpha} = SW_1(p_1, p_2)  - SW_2(p_1, p_2)= -\frac{(p_1 - p_2)^3}{3}\]

When $p_1 > p_2$, $\frac{\partial J^{SW}}{\partial \alpha} < 0$; when $p_1 < p_2$, $\frac{\partial J^{SW}}{\partial \alpha} > 0$, and when $p_1 < p_2$, $\frac{\partial J^{SW}}{\partial \alpha} \equiv 0$. Then the results of the lemma follows. $\blacksquare$

\paragraph{Proof of Proposition \ref{prop:first best}}

We prove this result under uniform distribution. 

    According to Lemma \ref{lemma: pin down alpha}, when $p_1 > p_2$,
    \[\sup_{\alpha \in [0,1]}J^{\Pi}(p_1, p_2, \alpha) = J^{\Pi}(p_1, p_2, 1) = \pi_1^1 + \pi_2^2\]

    The first order condition of the maximization problem is

    \begin{equation}
\frac{\partial(\pi_1^1+\pi_2^2)}{\partial p_1} = 0, \frac{\partial(\pi_1^1+\pi_2^2)}{\partial p_2} = 0
\label{FOCasymmetric}
\end{equation}

Under uniform distribution, the above can be calculated
\begin{equation}
    \frac{\partial(\pi_1^1+\pi_2^2)}{\partial p_1} = \frac{A^2}{2} - A + 1- 2p_1 + 2p_2 - \frac{3p_2^2}{2} =0
    \label{multi_foc1}
\end{equation}

\begin{equation}
   \frac{\partial(\pi_1^1+\pi_2^2)}{\partial p_2} = - \frac{A^2}{2}+ A + 2p_1 - 2p_2 - 3p_1 p_2  + \frac{3p_2^2}{2}=0 
\end{equation}

Taking the sum of these equations gives:

\[1 - 3p_1p_2 = 0\]

Plug $p_1 = \frac{1}{3p_2}$ into equation (\ref{multi_foc1}), we have

\[\gamma(p_2) = \frac{A^2}{2} - A + 1- \frac{2}{3p_2} + 2p_2 - \frac{3p_2^2}{2} =0\]

Since \( \gamma''(p_2) < 0 \) and \( \gamma'(\sqrt{3}/3) > 0 \), it follows that \( \gamma'(p_2) > 0 \) for \( p_2 \in (0, \sqrt{3}/3] \).

Additionally, since \( \gamma(\sqrt{3}/3) = (A - 1)^2/2 > 0 \) and \( \gamma(0.1) < 0 \), there exists a \( p_2^* \in (0.1, \sqrt{3}/3) \) such that $\gamma(p_2)=0$ holds. Moreover, $\gamma(1) = (A - 1)^2/2 + \frac{1}{3} > 0$, so $p_2^*$ is the only solution of $\gamma(p_2)=0$. Then by $3p_1p_2=1$, \( p^*_1 > \sqrt{3}/3 > p^*_2 \).

Similarly, when $p_1 < p_2$, the first order condition will yield $p^*_1 < \sqrt{3}/3 < p^*_2$. 

When \( p_1 = p_2 \), \( J^{\Pi} \) does not vary with \( \alpha \), so we can set \( \alpha = 1 \) ($\alpha = 0$), and the total profit remains \( \pi_1^1 + \pi_2^2 \) (or $\pi_2^1 + \pi_1^2$). However, since \( p_1 = p_2 \) does not satisfy the first-order conditions, the profit is lower than when \( p_1 > p_2 \) and when $p_1 < p_2$. Hence, it is optimal for a multi-product seller to adopt a contract with asymmetric prices. 

Finally, under constraint $p_1=p_2=p$, we can get the total profit $J^{\Pi}(p, p, 1)=\pi_1^1+\pi^2_2=p(1-p^2)$ and the first order condition tells that the optimal uniform price $p^*=\sqrt{3}/3$. $\blacksquare$

\subsection*{Proof of Theorem 1}
The line $p_1 = p_2$ divides $P$ into two regions. We define $\bar{P}:= \{(p_1, p_2) \in P: p_1 \leq p_2 \}$ and $\underline{P}:= \{(p_1, p_2) \in P: p_1 \geq p_2 \}$. Define two points $\bar{M}:= (\bar{p}, \bar{p})$ and $\underline{M} := (\underline{p}, \underline{p})$ and the line $\ell = \{ (p_1, p_2) \in P : p_1 = p_2 \}$ which segments $P$ into $\bar{P}$ and $\underline{P}$.

\begin{definition}
    
For the curve $\partial P$, we define the following extreme points: 
\begin{itemize}
    \item Leftmost point $M^L = (p_1^L, p_2^L)$, where $p_1^L = \min\{p_1: (p_1, p_2) \in \partial P\}$.
    \item Rightmost point $M^R = (p_1^R, p_2^R)$, where $p_1^R = \max\{p_1: (p_1, p_2) \in \partial P\}$.
    \item Lowest point $M^{\text{low}} = (p_1^{\text{low}}, p_2^{\text{low}})$, where $p_2^{\text{low}} = \min\{p_2: (p_1, p_2) \in \partial P\}$.
    \item Highest point $M^{\text{high}} = (p_1^{\text{high}}, p_2^{\text{high}})$, where $p_2^{\text{low}} = \max\{p_2: (p_1, p_2) \in \partial P\}$.
\end{itemize}
    
\end{definition}

When $P$ is strictly convex, the above four extreme points are uniquely defined. Considering the function $H$ we defined in Appendix B.1, Notice that $\partial P$ is characterized by the implicit function $H(p_1,p_2)=0$. The implicit function theorem states that

    \begin{itemize}
        \item Except for the four extreme points, for any point \( (p_1, p_2) \) where \( H(p_1, p_2) = 0 \), there exists an open set \( U \subset \mathbb{R} \) containing \( p_1 \) and a unique function \( g: U \to \mathbb{R} \) such that for all \( p_1 \in U \), \( p_2 = g(p_1) \) and \( H(p_1, g(p_1)) = 0 \). Similarly, there exists an open set \( V \subset \mathbb{R} \) containing \( p_2 \), along with a unique function \( h: V \to \mathbb{R} \), such that for any \( p_2 \in V \), \( p_1 = h(p_2) \) and \( H(h(p_2), p_2) = 0 \).
\begin{equation}
    \mathrm{d}p_2 = -\frac{\partial H/\partial p_{1}}
    {\partial H/\partial p_{2}} \mathrm{d}p_1
    \label{dp2}
\end{equation}

\item At $M^{\text{low}}$ and $M^{\text{high}}$, $\frac{\mathrm{d}p_2}{\mathrm{d}p_1} = 0$ and $\frac{\mathrm{d}p_1}{\mathrm{d}p_2}$ does not exist. At $M^L$ and $M^R$, $\frac{\mathrm{d}p_1}{\mathrm{d}p_2} = 0$ and $\frac{\mathrm{d}p_2}{\mathrm{d}p_1}$ does not exist.
    \end{itemize}

\begin{lemma}
    $\hat{\Pi}_2(p_1,p_2)\triangleq\Pi_2(p_1,p_2,\underline{\alpha}(p_1,p_2))$ is concave in $p_1$, i.e., $\frac{\partial^2 \hat{\Pi}_2(p_1,p_2)}{\partial p_1^2} < 0$ 
    \label{lemma: 2nd derivative}
\end{lemma}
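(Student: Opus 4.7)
The plan is to exploit a structural simplification that occurs precisely at $\alpha = \underline{\alpha}(p_1,p_2)$: by construction IC1 binds there, so by Observation \ref{cor:virtual demand} seller $1$'s virtual demand coincides with its actual demand, giving
\[
D_1(p_1,p_2,\underline{\alpha}) \;=\; \phi_1(p_1,p_2) \;=\; \frac{M(p_2)}{p_1}, \qquad M(p_2) \;:=\; \max_{p_1'} \pi_1^2(p_1',p_2).
\]
The decisive point is that $M(p_2)$ depends only on $p_2$, not on $p_1$: the deviation price $p_1'$ is integrated out by the maximum. This envelope-style observation is what will drive the whole argument.

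The next step is to convert this into a closed form for $\hat{\Pi}_2$ via the accounting identity $D_1 + D_2 = 1 - F(p_1)F(p_2)$, which follows directly from $D_i^1 + D_j^2 = 1 - F(p_i)F(p_j)$ (Appendix \ref{appendix: search problem}) and holds for every $\alpha$. Substituting the binding-IC1 expression for $D_1$ at $\underline{\alpha}$ yields
\[
\hat{\Pi}_2(p_1,p_2) \;=\; p_2\,D_2 \;=\; p_2\bigl(1 - F(p_1)F(p_2)\bigr) \;-\; \frac{p_2\,M(p_2)}{p_1}.
\]
Under Assumption \ref{assumption: uniform}, $F$ is the identity on $[0,1]$, so $F(p_1)F(p_2) = p_1 p_2$ and the expression simplifies to $\hat{\Pi}_2 = p_2 - p_1 p_2^2 - p_2 M(p_2)/p_1$.

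Because $M(p_2)$ carries no $p_1$-dependence, the second derivative is a one-line calculation:
\[
\frac{\partial^2 \hat{\Pi}_2}{\partial p_1^2} \;=\; -\,\frac{2\,p_2\,M(p_2)}{p_1^3},
\]
which is strictly negative whenever $M(p_2) > 0$. Positivity of $M(p_2)$ is immediate: evaluating the deviation profit at $p_1' = p_2$ (or any interior price) gives $\pi_1^2(p_1',p_2) > 0$ from the explicit uniform demand formulas in Appendix \ref{appendix: search problem}, so a fortiori $M(p_2) > 0$.

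I anticipate no genuine obstacle. The only conceptual subtlety is that the cleanness of the calculation hinges entirely on recognizing that $M(p_2)$ is independent of $p_1$ once IC1 is forced to bind. A direct attack --- differentiating $\Pi_2(p_1,p_2,\underline{\alpha})$ and computing $\partial \underline{\alpha}/\partial p_1$ through the bonus $B(p_1,p_2)$ and the implicit IC1 equation --- would produce a considerably messier expression whose sign would be far from obvious. Characterizing $\hat{\Pi}_2$ via the virtual-demand identity up front bypasses this entirely.
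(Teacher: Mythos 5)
Your proposal is correct and follows essentially the same route as the paper: the paper also reduces $\hat{\Pi}_2$ to $p_2\left(1-p_1p_2-\frac{\max_{p_1'}\pi_1^2(p_1',p_2)}{p_1}\right)$ (by substituting the binding-IC1 expression for $\underline{\alpha}$ into $\pi_2^2+(1-\underline{\alpha})p_2B$ and using $D_1^1+D_2^2=1-F(p_1)F(p_2)$) and then observes that $\max_{p_1'}\pi_1^2$ depends only on $p_2$, yielding the same second derivative $-2p_1^{-3}p_2\max\pi_1^2<0$. Your framing via virtual demand and the total-demand identity is a cosmetic repackaging of the identical key step.
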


\begin{proof}
Recall that $\underline{\alpha}=\frac{\max\pi_1^2 - \pi_1^2}{p_1 B}$ is obtained from IC1. Then,
\[
\begin{aligned}
       \hat{\Pi}_2(p_1,p_2) &= \pi_2^2+(1-\underline{\alpha}) p_2B\\
       &= \pi_2^2(p_1,p_2) + p_2 \frac{p_1B - \max\pi_1^2 + \pi_1^2}{p_1}\\
       & = p_2D_2^2+ p_2D_1^1 - \frac{p_2 \max\pi_1^2}{p_1}\\
       & =p_2(1-F(p_1)F(p_2))  - \frac{p_2 \max\pi_1^2}{p_1}\\
       & =p_2\left(1-p_1p_2  - \frac{ \max\pi_1^2}{p_1}\right)
\end{aligned}
\]

Thus, $\frac{\partial^2 \hat{\Pi}_2(p_1,p_2)}{\partial p_1^2}=- 2p_1^{-3}p_2 \max\pi_1^2 < 0$, where $\max\pi_1^2=\max_{p_1^\prime}\pi_1^2(p_1^\prime,p_2)$ is a function only with respect to $p_2$.
\end{proof}

\begin{lemma}
     $\frac{\partial \hat{\Pi}_2(p_1,p_2)}{\partial p_1} \geq 0$ for any $(p_1, p_2) \in \bar{P}$ if and only if $\bar{p} \leq p^*$.
    \label{lemma: 1st order partial deriv.}
\end{lemma}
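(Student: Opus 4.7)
The plan is to leverage the closed-form expression for $\hat{\Pi}_2$ derived in the preceding lemma, which gives
\[
\frac{\partial \hat{\Pi}_2}{\partial p_1}(p_1, p_2) = p_2 \left( \frac{\max\pi_1^2(p_2)}{p_1^2} - p_2 \right),
\]
so that the inequality of interest is equivalent to $p_1^2 p_2 \leq \max\pi_1^2(p_2)$ (where $\max\pi_1^2$ depends only on $p_2$). For necessity, I would evaluate at $(\bar{p},\bar{p}) \in \bar{P} \cap \partial P$: both IC constraints bind there (Property 3 of Proposition \ref{prop: P's property}), so by symmetry and the boundary identity from Proposition \ref{prop: P's discrip} we have $\max\pi_1^2(\bar{p}) = \bar{p}(1-\bar{p}^2)/2$. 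Substituting yields $\frac{\partial \hat{\Pi}_2}{\partial p_1}(\bar{p},\bar{p}) = (1-3\bar{p}^2)/2$, which is non-negative if and only if $\bar{p} \leq 1/\sqrt{3} = p^*$.

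For sufficiency, assume $\bar{p} \leq p^*$. By Lemma \ref{lemma: 2nd derivative}, $\frac{\partial \hat{\Pi}_2}{\partial p_1}$ is strictly decreasing in $p_1$ for fixed $p_2$, so on $\bar{P}$ the minimum for each fixed $p_2$ is attained at the largest admissible $p_1$. By convexity and symmetry of $P$ about the diagonal (Proposition \ref{prop: P's property}), this largest $p_1$ equals $p_2$ for $p_2 \in [\underline{p}, \bar{p}]$ (the diagonal segment) and equals $h(p_2) < p_2$ for $p_2 \in (\bar{p}, p_2^{\text{high}}]$ (the upper boundary arc of $\partial P$). It therefore suffices to verify $p_1^2 p_2 \leq \max\pi_1^2(p_2)$ along (i) the diagonal segment and (ii) this upper arc.

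For case (i), I would reduce the task to showing $g(p) := \max\pi_1^2(p)/p - p^2 \geq 0$ on $[\underline{p}, \bar{p}]$. Using the envelope theorem together with the FOC for $\tilde{p}_1(p) := \arg\max_{p_1'}\pi_1^2(p_1', p)$ from Appendix B.2 gives $(\max\pi_1^2)'(p) = \tilde{p}_1(p)(1 - \tilde{p}_1(p))$; combining this with the explicit quadratic form of $\tilde{p}_1(p)$ should let me show $g$ is decreasing, so its minimum on $[\underline{p}, \bar{p}]$ is at $p = \bar{p}$ and equals $(1-3\bar{p}^2)/2 \geq 0$. For case (ii), I would use the binding boundary identity $\max\pi_1^2(p_2)/p_1 + \max\pi_2^2(p_1)/p_2 + p_1 p_2 = 1$ to rewrite the target as $2 p_1 p_2^2 + \max\pi_2^2(p_1) \leq p_2$, and verify it along the arc using the same envelope/FOC tools plus the fact that $p_1 = h(p_2) \leq \bar{p}$ on this sub-arc (by convexity/symmetry).

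The main obstacle will be case (i) — establishing monotonicity of $g$ on the diagonal — together with its arc-analog in case (ii); both will require combining $(\max\pi_1^2)'(p) = \tilde{p}_1(p)(1-\tilde{p}_1(p))$ with the quadratic FOC from Appendix B.2, and exploiting the hypothesis $\bar{p} \leq p^* = 1/\sqrt{3}$ to control the resulting cubic-in-$p$ inequality. Tractability hinges on the explicit quadratic closed form for $\tilde{p}_1(p)$ made possible by Assumption \ref{assumption: uniform}.
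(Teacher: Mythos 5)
Your proposal follows essentially the same route as the paper: reduce via the concavity of $\hat{\Pi}_2$ in $p_1$ (Lemma \ref{lemma: 2nd derivative}) to the diagonal segment and the arc $Arc(M^{\text{high}},\bar{M})$, use the envelope theorem to show the relevant quantity is decreasing along each, and evaluate at $(\bar{p},\bar{p})$ where the binding, symmetric IC constraints give $\max\pi_1^2(\bar p)=\bar p(1-\bar p^2)/2$ and hence the threshold $(1-3\bar p^2)/2\geq 0 \Leftrightarrow \bar p\leq p^*=1/\sqrt{3}$. The only divergence is cosmetic: on the arc the paper shows $\eta(p_1)=\max\pi_1^2 - p_1^2 p_2$ is decreasing directly using $-1\le g'(p_1)\le 0$, whereas you renormalize the target through the boundary identity into a condition on $\max\pi_2^2(p_1)$ and leave its verification sketched; that step is workable but is where the remaining effort sits.
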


\begin{proof}
According to Lemma \ref{lemma: 2nd derivative}, for every point $(p_1, p_2) \in \bar{P}$, there exists a point $(\tilde{p}_1, p_2) \in l \cup Arc(M^{\text{high}}, \bar{M})$ with $\tilde{p}_1>p_1$ such that $\frac{\partial \hat{\Pi}_2(p_1,p_2)}{\partial p_1} \geq \frac{\partial \hat{\Pi}_2(\tilde{p}_1,p_2))}{\partial p_1}$. In general, we have 
\[\frac{\partial \hat{\Pi}_2(p_1,p_2)}{\partial p_1}= p_2(-p_2+\frac{\max\pi_1^2}{p_1^2})=\frac{p_2}{p_1^2}(\max\pi_1^2-p^2_1p_2)\]

In $\ell$, since $p_1 = p_2 = p$, the value of $\hat{\Pi}_2$ can be reduced to 
\[\lambda(p)=\frac{\partial \hat{\Pi}_2(p_1,p_2)}{\partial p_1}|_{p_1=p_2=p}= -p^2+\frac{\max_{p_1^\prime}\pi_1^2(p_1^\prime,p)}{p}=-p^2+\max_{p_1^\prime}\frac{\pi_1^2(p_1^\prime,p)}{p}\]
where
\[\frac{\pi_1^2(p_1^\prime,p)}{p}=p_1^\prime(\frac{(A-\frac{1}{2}A^2)-(p^\prime_1-\frac{1}{2}p^{\prime2}_1)}{p}+1-p_1^\prime)\]
according to the formula of $D_1^2$ under uniform distribution.

Then by the envelop theorem,
\[\lambda^\prime(p)=-2p-\frac{\hat{p}_1(p)}{p^2}[(A-\frac{1}{2}A^2)-(\hat{p}_1(p)-\frac{1}{2}\hat{p}_1(p))]<0\]
where $\hat{p}_1(p)=\arg \max_{p^\prime_1}\pi_1^2(p_1^\prime,p)<A<1$. As a result, $\lambda(p)$ is decreasing with respect to $p$, which means $\frac{\partial \hat{\Pi}_2(p_1,p_2)}{\partial p_1}$ attains its minimum at $(\bar{p}, \bar{p})$ within $l$. Now in the whole $\bar{P}$, $\frac{\partial \hat{\Pi}_2(p_1,p_2)}{\partial p_1}$ can attain its minimum only at some points on $Arc(M^{\text{high}},\bar{M})$.

Next we consider points in $Arc(M^{\text{high}},\bar{M})$. According to implicit function theorem, any point in $Arc(M^{\text{high}},\bar{M})$ can be locally expressed by $(p_1, g(p_1))$. Notice that in $Arc(M^{\text{high}},\bar{M})$, $-1=g^\prime (\bar{p}) \le g^\prime (p_1)\le g^\prime(p_1^{\text{high}})=0$ due to the convexity of $P$. Consider $\eta (p_1)\triangleq \max_{p_1^\prime}\pi_1^2(p_1^\prime,g(p_1))-p^2_1g(p_1)$, $p_1^{\text{high}} \le p_1\le \bar{p}$. Then, by envelop theorem and noticing $g(p_1)=p_2\ge p_1$ within $\bar{P}$,
\[\eta^\prime (p_1)=(1-\hat{p}_1(g(p_1)))\hat{p}_1(g(p_1))g^\prime(p_1)-2p_1g(p_1)-p_1^2g^\prime(p_1)\le -p_1^2(1+g^\prime(p_1))\le 0\]
where the two inequality cannot be equality simultaneously. Thus, $\eta^\prime(p_1)<0$, which means $\eta (p_1)$ attains its minimum at $p_1=\bar{p}$. At $p_1=\bar{p}$,
\[\eta (\bar{p})=\max_{p_1^\prime}\pi_1^2(p_1^\prime,\bar{p})-\bar{p}^3=\frac{\bar{p}(1-\bar{p}^2)}{2} - \bar{p}^3  = \frac{\bar{p} - 3\bar{p}^3}{2}\]

Here we use the relationship $\max_{p_1^\prime}\pi_1^2(p_1^\prime,\bar{p})=\frac{\bar{p}(1-\bar{p}^2)}{2}$. Notice that at point $\bar{M}$, IC1 and IC2 are binding and $\varphi(\bar{p},\bar{p})$ is a singleton. Since IC1 and IC2 are symmetric when $p_1=p_2$, if $\alpha$ is feasible, $1-\alpha$ either, which means the singleton must be $\varphi(\bar{p},\bar{p})=\{\frac{1}{2}\}$. By binding IC1, 
\[\max_{p_1^\prime}\pi_1^2(p_1^\prime,\bar{p})=\frac{1}{2}\pi_1^1(\bar{p},\bar{p})+\frac{1}{2}\pi_1^2(\bar{p},\bar{p})=\frac{1}{2}J^\Pi(\bar{p},\bar{p},1)=\frac{\bar{p}(1-\bar{p}^2)}{2}\]

Then, we can see that $\eta(\bar{p})\ge 0$ if and only if $\bar{p} \leq p^*$. Thus, for all $(p_1,p_2)\in Arc(M^{\text{high}},\bar{M})$, 
\[\frac{\partial \hat{\Pi}_2(p_1,p_2)}{\partial p_1}=\frac{p_2}{p_1^2}\eta(p_1)\ge 0\]
if and only if $\bar{p} \leq p^*$. Then, for all $(p_1,p_2)\in \bar{P}$, $\frac{\partial \hat{\Pi}_2(p_1,p_2)}{\partial p_1}\ge 0$
if and only if $\bar{p} \leq p^*$.
\end{proof}

\paragraph{Proof of Theorem \ref{thm1} (\(\bar{p} \leq p^*\))} 

Consider the case where \(\bar{p} \leq p^*\) when \(p_1 \leq p_2\). According to Lemma \ref{lemma: pin down alpha}, we have \(\bar{\mu}^{\Pi}(p_1, p_2) = J^{\Pi}(p_1, p_2, \underline{\alpha})\) where
\[J^{\Pi}(p_1, p_2, \underline{\alpha})=\Pi_1(p_1,p_2,\underline{\alpha}(p_1,p_2))+\Pi_2(p_1,p_2,\underline{\alpha}(p_1,p_2))=\max\pi_1^2+\hat{\Pi}_2(p_1,p_2)\]
where we use the binding IC1 at $\underline{\alpha}$. Then, 
\[\frac{\partial \bar{\mu}^{\Pi}(p_1, p_2)}{\partial p_1}  = \frac{\partial \hat{\Pi}_2(p_1,p_2)}{\partial p_1}\]

Using Lemma \ref{lemma: 1st order partial deriv.}, for any point \((p_1, p_2) \in \bar{P}\), we know that $\frac{\partial \bar{\mu}^{\Pi}(p_1, p_2)}{\partial p_1}  \geq 0$. This means that the value of $\bar{\mu}^{\Pi}(p_1, p_2)$ increases monotonically in the horizontal direction. Consequently, for every \((p_1, p_2) \in \bar{P}\), there exists a point \((\tilde{p}_1, p_2) \in  Arc(M^{\text{high}},\bar{M}) \cup \ell\) such that \(\bar{\mu}^{\Pi}(\tilde{p}_1, p_2) \geq \bar{\mu}^{\Pi}(p_1, p_2)\).

In $\ell$, $p_1=p_2=p$, then
\[\bar{\mu}^{\Pi}(p, p)=[\max\pi_1^2+p_2(1-p_1p_2-\frac{\max\pi_1^2}{p_1})]_{p_1=p_2=p}=p-p^3\]
reaches its maximum at $p=\bar{p}$ as $0<\underline{p} \le p\le \bar{p}\le p^*$.

In $Arc(M^{\text{high}},\bar{M})$, by the second property of Proposition \ref{prop: P's property}, both IC1 and IC2 are binding on the boundary. Hence, the value of $\bar{\mu}^{\Pi}(p_1, p_2)$ can be reduced to 
\[
\zeta(p_1) = \max_{p_1'} \pi_1^2(p_1', g(p_1)) + \max_{p_2'} \pi_2^2(p_1, p_2'),p_1^{\text{high}}\le p_1\le\bar{p}_1.
\]

Thus by the envelop theorem, 
\[\zeta^\prime(p_1) = \hat{p}_1(g(p_1))(1-\hat{p}_1(g(p_1)))g^\prime(p_1)+\hat{p}_2(p_1)(1-\hat{p}_2(p_1)) = \frac{\partial \bar{\mu}^\Pi}{\partial p_2}g^\prime(p_1)+\frac{\partial \bar{\mu}^\Pi}{\partial p_1}
\]
where \(g^\prime(p_1)=-\frac{\partial H/\partial p_1}{\partial H/\partial p_2}\), $\hat{p}_1(p_2)=\arg\max_{p_1^\prime}\pi_1^2(p_1^\prime,p_2)$ and $\hat{p}_2(p_1)=\arg\max_{p_2^\prime}\pi_2^2(p_1,p_2^\prime)$.

We now aim to prove that $\zeta^\prime(p_1)\ge 0$ for every $(p_1,g(p_1))$ within $Arc(M^{\text{high}},\bar{M})$, i.e. 
\[\frac{\partial \bar{\mu}^\Pi/\partial p_1}{\partial \bar{\mu}^\Pi/\partial p_2} \ge \frac{\partial H/\partial p_1}{\partial H/\partial p_2}\]
\[\Leftrightarrow \frac{\hat{p}_2(p_1)(1-\hat{p}_2(p_1))}{\hat{p}_1(g(p_1))(1-\hat{p}_1(g(p_1)))} \ge \frac{\max\pi_2^2+p_1\hat{p}_2(p_1)(1-\hat{p}_2(p_1))+2 p_1 g(p_1)^2-g(p_1)}{\max\pi_1^2+g(p_1)\hat{p}_1(g(p_1))(1-\hat{p}_1(g(p_1)))+2 p_1^2 g(p_1)-p_1}\]
or to express it in a reduced form, for every $(p_1,p_2)\in Arc(M^{\text{high}},\bar{M})$, let $p_1^*=\hat{p}_1(p_2)$ and $p_2^*=\hat{p}_2(p_1)$, then we aim to prove,

\begin{equation}
  \frac{p_2^*(1-p_2^*)}{p_1^*(1-p_1^*)} \geq \frac{\max \pi_2^2 + p_1p_2^*(1-p_2^*)-p_2(1-2p_1p_2)}{\max \pi_1^2 + p_2p_1^*(1-p_1^*)-p_1(1-2p_1p_2)} 
  \label{inequ of zeta}
\end{equation}

We use the following three steps to prove the equation (\ref{inequ of zeta}).

\textbf{Step 1}. Consider the specific form of the maximization problem $\max_{p_2} \pi_2^2(p_1, p_2) = p_2(A - \frac{A^2}{2} + p_1 - p_2 - p_1 p_2 + \frac{p_2^2}{2})$, the first order condition tells,
\begin{equation}
  \frac{\partial \pi_2^2}{\partial p_2}|_{p_2=p_2^*} = A - \frac{A^2}{2} + p_1 - 2p_2^* - 2p_1 p_2^*  + \frac{3p_2^{*2}}{2} = 0 
  \label{eq:first_deriv_pi_2_2}
\end{equation}
and plug it into the maximization problem, yielding
\[\max\pi_2^2 = p_2^*(p_1p_2^* + p_2^* - p_2^{*2})\]

Differential equation (\ref{eq:first_deriv_pi_2_2}) with respect to $p_1$ to get
\[\frac{dp_2^*}{dp_1}=\frac{1-2p_2^*}{2+2p_1-3p_2^*}\]

Since $\frac{\partial \pi_2^2}{\partial p_2}|_{p_2=1/2}=A-\frac{A^2}{2}-\frac{5}{8}<0$, we have $p_2^*<\frac{1}{2}$. Thus, $\frac{d p_2^*}{d p_1}>0$. Then, by $p_1<p_2$ and the symmetric formulas of $p_1^*(p_2)$ and $p_2^*(p_1)$, we know that $p_1^*>p^*_2$.

As a result,
\[
\begin{aligned}
& \frac{p_2^*(1-p_2^*)}{p_1^*(1-p_1^*)} \geq \frac{\max \pi_2^2}{\max \pi_1^2} = \frac{p_2^*(p_1p_2^* + p_2^* - p_2^{*2})}
{p_1^*(p_2p_1^* + p_1^* - p_1^{*2})}\\
\Leftrightarrow & (1-p_2^*)(p_2p_1^* + p_1^* - p_1^{*2})  \geq (1-p_1^*)(p_1p_2^* + p_2^* - p_2^{*2})\\
\Leftrightarrow & (p_1^*-p_2^*)(1-p_1^*)(1-p_2^*)+(p_2-p_1)(1-p_1^*p_2^*)+p_2(p^*_1-p_2^*)\ge 0\\
\end{aligned}
\]
which is true by the relationships $p_1<p_2$ and $p_1^*>p_2^*$.


\textbf{Step 2}. Since $p_1<p_2$, we have
\[\frac{p_2^*(1-p_2^*)}{p_1^*(1-p_1^*)} \geq \frac{p_1p_2^*(1-p_2^*)}{p_2p_1^*(1-p_1^*)}\]

Now use the mediant inequality, we have
\[\frac{p_2^*(1-p_2^*)}{p_1^*(1-p_1^*)} \geq \frac{\max \pi_2^2+p_1p_2^*(1-p_2^*)}{\max \pi_1^2+p_2p_1^*(1-p_1^*)}\]

\textbf{Step 3}. Because in $Arc(M^{\text{high}},\bar{M})$, both $\partial H /\partial p_1$ and $\partial H / \partial p_2$ are positive, we have
\[
\begin{aligned}
    \frac{\max \pi_2^2+p_1p_2^*(1-p_2^*)}{\max \pi_1^2+p_2p_1^*(1-p_1^*)} &\geq \frac{\max \pi_2^2 + p_1p_2^*(1-p_2^*)-p_1(1-2p_1p_2)}
    {\max \pi_1^2 + p_2p_1^*(1-p_1^*)-p_1(1-2p_1p_2)} \\
    &\geq \frac{\max \pi_2^2 + p_1p_2^*(1-p_2^*)-p_2(1-2p_1p_2)}
    {\max \pi_1^2 + p_2p_1^*(1-p_1^*)-p_1(1-2p_1p_2)}
    \end{aligned}
\]

We thus have proved that $\zeta^\prime(p_1)\ge 0$. When $p_1<p_2$, we have $\zeta^\prime(p_1)>0$. Thus, $\bar{\mu}(p_1,p_2)$ can reach its maximum only at point $(\bar{p},\bar{p})$ within $Arc(M^{\text{high}},\bar{M})$, and then within $\bar{P}$.

A similar proof holds for \(\underline{P}\) due to the symmetry of \(P\). Therefore, the proof of the statement when $\bar{p}\leq p^*$ in Theorem \ref{thm1} is finished. $\blacksquare$

\paragraph{Proof of Theorem \ref{thm1}($\bar{p} > p^*$)}
To show the optimal contract is asymmetric, we need to show two things: first, the optimal contract cannot take a form of $\alpha = 1/2$; second, the optimal contract does not take the form $p_1 = p_2$, where $\alpha$ is irrelevant and can be set to be $1/2$. 

\textbf{Step 1. interior $(p_1,p_2,1/2),p_1\ne p_2$ cannot be optimal.} Consider the case $p_1<p_2$, suppose $\underline{\alpha}=\frac{1}{2}$ and $(p_1,p_2,\underline{\alpha})$ is optimal, by the first order condition, a necessary condition for an interior price vector $(p_1,p_2)$ to be optimal is \(\frac{\partial \bar{\mu}^\Pi}{\partial p_1} =0\) and \(\frac{\partial \bar{\mu}^\Pi}{\partial p_2} =0\). Since $\bar{\mu}^\Pi=\max\pi_1^2+\hat{\Pi}_2$, then
\[\frac{\partial \bar{\mu}^\Pi}{\partial p_1}=\frac{\partial \hat{\Pi}_2}{\partial p_1}=\frac{p_2}{p_1^2}(\max\pi_1^2-p^2_1p_2)=0\]
\[\frac{\partial \bar{\mu}^\Pi}{\partial p_2} =[\max\pi^2_1]^\prime_{p_2}+1-2p_1p_2-\frac{\max\pi^2_1}{p_1}=0\]

Thus,
\[x\triangleq 3p_1p_2-1=[\max\pi^2_1]^\prime_{p_2}=(1-p_1^*)p_1^*>0\]

Meanwhile, $\underline{\alpha}=\frac{1}{2}$ means IC1 binding for $\alpha=\frac{1}{2}$, i.e. 
\[\max\pi_1^2=\frac{1}{2}(\pi_1^1+\pi_1^2)=\frac{p_1}{2}(\frac{p_1^2}{2}-\frac{p_2^2}{2}-p_1p_2+1-2p_1+2p_2)\]

Substitute it into $\max\pi_1^2-p^2_1p_2=0$ to get
\[\delta (p_1,p_2)=\frac{p_1^2}{4}-\frac{p_2^2}{4}-\frac{3}{2}p_1p_2+\frac{1}{2}-p_1+p_2=0\]

Since $(p_1,p_2)$ satisfied $\delta (p_1,p_2)=0$, we can differential it to get 
\[\frac{dp_2}{dp_1}=\frac{p_1-3p_2-2}{p_2+3p_1-2}=\frac{2p_2(p_1-3p_2-2)}{p_1^2+p_2^2+2(1-2p_1)}<0, \forall p_1\in [1/2,p^*]\]

One can verify that $(p^*,p^*)$ satisfied $\delta (p_1,p_2)=0$. Now consider the part of $\delta(p_1,p_2)$ possibly inside $\bar{P}$. By Figure \ref{fig:boundary}, when $A=0.7$, $p_2^{\text{high}}<0.7$. Thus, for all possible $A$, $p_2\le \min\{A,p_2^{\text{high}}(A)\} <0.7$ as $P$ is shrinking with respect to the increase of $A$. As a result, we only need to pay attention to $p_2\in [p^*,0.7]$ within $\delta=0$ and then $p_1\in [0.56,p^*]$ where $\frac{dp_2}{dp_1}>0$. Within $\delta=0$, we have
\[x(p_1)=3p_1p_2-1=\frac{1}{2}(p_1^2-p_2^2)+2(p_2-p_1)\]
\[x^\prime(p_1)=(p_1-2)+(2-p_2)\frac{dp_2}{dp_1}<0\]
which means $x\in [0,x(0.56)]\subset [0,0.2]$. Then by $x=(1-p^*_1)p^*_1$, we can induce that $p_1^*\in [0,0.3]$. Thus, $\max\pi_1^2=p_1^{*2}(p_2+1-p_1^*)\le 0.13$ since $\frac{\partial \max\pi_1^2}{\partial p_1^*}=2p_1^*(p_2+1)-3p_1^{*2}>p_1^*(2-3p_1^*)>0$.

On the other hand, $y(p_2)=\max\pi_1^2=p_1^2p_2$ defined within $\delta (p_1,p_2)=0$ satisfied
\[
\begin{aligned}
    y^\prime(p_2)& =\frac{1}{3}p_1x^\prime(p_1)+p_1p_2=\frac{p_1}{3}[(3p_2+p_1-2)+(2-p_2)\frac{dp_2}{dp_1}] \\
    & =\frac{p_1}{3}[(p_2-3p_1-2)\frac{1}{dp_1/dp_2}+(2-p_2)\frac{dp_2}{dp_1}]=-p_1^2\frac{dp_2}{dp_1}>0
    \end{aligned}
\]

Thus, $\max\pi_1^2\ge p^{*3}\approx 0.19$, a contradiction.

\textbf{Step 2. boundary $(p_1,p_2,1/2),p_1\ne p_2$ cannot be optimal.} We only need to show that for all $(p_1,p_2)\in \partial \bar{P}-\ell$, only $\bar{M}$ can reach the maximum of $\bar{\mu}^\Pi(p_1,p_2)$. We have proved in the case $\bar{p}<p^*$ that  $\bar{\mu}(p_1,p_2)$ can reach its maximum only at point $(\bar{p},\bar{p})$ within $Arc(M^{\text{high}},\bar{M})$. Since this proof is not based on $\bar{p}<p^*$ and $\frac{\partial \bar{\mu}^\Pi/\partial p_1}{\partial \bar{\mu}^\Pi/\partial p_2} \ge \frac{\partial H/\partial p_1}{\partial H/\partial p_2}$ holds obviously since $\partial H/\partial p_1<0$ and $\partial H/\partial p_1>0$ within $Arc(M^L,M^{\text{high}})$ and have the same monotonicity within $Arc(\underline{M},M^L)$ by a similar proof within $Arc(M^{\text{high}},\bar{M})$, we can learn that $\bar{\mu}(p_1,p_2)$ can reach its maximum only at point $(\bar{p},\bar{p})$ within $\partial \bar{P}-\ell$.


\textbf{Step 3. $(p_1,p_2,1/2),p_1=p_2$ cannot be optimal.} It is easy to know that $\bar{\mu}^\Pi(p,p)=p-p^3$ reaches its maximum at $p=p^*$ in $\ell$ as $0<p^*<\bar{p}$. If $(p^*,p^*,\frac{1}{2})$ is also the optimal contract, then since its an interior point, we have
\[\frac{\partial \bar{\mu}^\Pi}{\partial p_1}=\frac{p_2}{p_1^2}(\max\pi_1^2-p^2_1p_2)=\frac{\pi_1^2(p_1^*,p^*)}{p^*}-p^{*2}=0\]
\[\frac{\partial \bar{\mu}^\Pi}{\partial p_2} =[\max\pi^2_1]^\prime_{p_2}+1-2p_1p_2-\frac{\max\pi^2_1}{p_1}=(p^*_1-p_1^{*2})+1-2p^{*2}-\frac{\pi_1^2(p_1^*,p^*)}{p^*}=0\]

Take the sum to get
\[p_1^*-p_1^{*2}=3p^{*2}-1=0\Rightarrow p_1^*=0 ~\text{or}~ p_1^*=1, \]
which induce a contradiction since $\pi_1^2(0,p^*)=\pi_1^2(1,p^*)=0<\pi_1^2(p^*,p^*)$.

\subsection*{Proof of Proposition \ref{prop:trade probability} and \ref{prop: welfare max}}

\paragraph{Proof of Proposition \ref{prop:trade probability}}

Under the uniform distribution, the trade probability (i.e., total demand) is given by
\[
J^{TP}(p_1, p_2) = 1 - p_1 p_2.
\]

We first consider $\underline{P}$. Since \(\frac{\partial J^{TP}(p_1, p_2)}{p_1} = -p_2 < 0\), for every point $(p_1, p_2) \in \underline{P}$, there exists a point $(\tilde{p}_1, p_2) \in \ell \cup Arc(\underline{M}, M^{\text{low}})$ such that $J^{TP}(\tilde{p}_1, p_2) \geq J^{TP}(p_1, p_2)$ 

In $\ell$, $J^{TP} = 1-p^2$, thus $J^{TP}$ attains its maximum at $(\underline{p},\underline{p})$.

Similar to the process in the proof of Theorem \ref{thm1}, we have 
\[\frac{\mathrm{d}J^{TP}(p_1, g(p_1))}{\mathrm{d} p_1} = \frac{\partial J^{TP}}{\partial p_1} + \frac{\partial J^{TP}}{\partial p_2}g^\prime(p_1) = -p_2 -p_1\left(-\frac{\partial H / \partial p_1}{\partial H / \partial p_2}\right).\]

It suffices to show for every point $(p_1, p_2) \in Arc(\underline{M}, M^{\text{low}})$

\[\frac{\mathrm{d}J^{TP}(p_1, g(p_1))}{\mathrm{d} p_1} = -p_2 -p_1\left(-\frac{\partial H / \partial p_1}{\partial H / \partial p_2}\right) \leq 0\]

It is implied by

\[\frac{p_2}{p_1} \geq \frac{\partial H/\partial p_1}{\partial H/\partial p_2} = \frac{\max \pi_2^2 + p_1p_2^*(1-p_2^*)-p_2(1-2p_1p_2)}
    {\max \pi_1^2 + p_2p_1^*(1-p_1^*)-p_1(1-2p_1p_2)}\]

Note that in $Arc(\underline{M}, M^{\text{low}})$, $\partial H / \partial p_1$ and $\partial H / \partial p_2$ are both negative. It suffices to show that
$$
\begin{aligned}
    & p_2 (\max \pi_1^2 + p_2p_1^*(1-p_1^*)-p_1(1-2p_1p_2)) \leq p_1(\max \pi_2^2 + p_1p_2^*(1-p_2^*)-p_2(1-2p_1p_2))\\
    \Leftrightarrow ~ & [p_2p_1^{*2}(p_2+1-p_1^*)-p_1p_2^{*2}(p_1+1-p_2^*)] + [p_2^2p_1^*(1-p_1^*)- p_1^2p_2^*(1-p_2^*)]\le 0
\end{aligned}
$$

Notice that we have proved $\frac{dp^*}{dp}<0$ and $p^*<\frac{1}{2}$, which induces that $p^*(1-p^*)$ is increasing with respect to $p^*$. Then, by $p_1>p_2$, we learn that $p_1^*<p_2^*$ and $p_1^*(1-p_1^*)<p_2^*(1-p_2^*)$. Thus,
\[(p_2^2p_1^{*2}-p_1^2p_2^{*2})+[p_2p_1^*\cdot p_1^*(1-p_1^*)-p_1p_2^*\cdot p_2^*(1-p_2^*)]+[p_2^2\cdot p_1^*(1-p_1^*)-p_1^2\cdot p_2^*(1-p_2^*)]\le 0,\]
which means $\frac{p_2}{p_1} \geq \frac{\partial H/\partial p_1}{\partial H/\partial p_2}$ holds. And if  $p_1 > p_2$, the inequality above will become strict. Thus, we proved trade probability attains its maximum only at $(\underline{p}, \underline{p})$ in $\underline{P}$. By symmetry of $P$, we finish the proof.

\paragraph{Proof of Proposition \ref{prop: welfare max}}

\paragraph{Industry profit.} We first prove that the industry profit is minimized at point $(\underline{p}, \underline{p})$. According to Lemma \ref{lemma: pin down alpha}, when $p_1 \geq p_2$, $\underline{\mu}^{\Pi}(p_1, p_2) = J^{\Pi}(p_1, p_2, \underline{\alpha})$. Same as proving Theorem \ref{thm1}, 

\[
\frac{\partial \underline{\mu}^{\Pi}(p_1, p_2)}{\partial p_1}  = \frac{\partial (\Pi_1 + \Pi_2)}{\partial p_1} = \frac{\partial \hat{\Pi}_2(p_1, p_2)}{\partial p_1}.
\]

According to Lemma \ref{lemma: 2nd derivative}, we have for every point $(p_1, p_2) \in \underline{P}$, there exists a point $(\tilde{p}_1, p_2)$ in $\partial \underline{P}$ such that $\hat{\Pi}_2(p_1, p_2) \geq \hat{\Pi}_2(\tilde{p}_1, p_2)$, which implies $\underline{\mu}^{\Pi}(p_1, p_2) \geq \underline{\mu}^{\Pi}(\tilde{p}_1, p_2)$. Next we prove that $\underline{\mu}^{\Pi}(p_1, p_2)$ reaches its minimum on $\partial \underline{P} \cup \ell$ at the point $(\underline{p}, \underline{p})$.

In $\ell$, the only candidate that might be the minimizer industrial profit than $(\underline{p}, \underline{p})$ is $(\bar{p}, \bar{p})$. To see this, notice that $\underline{\mu}^{\Pi} = p(1-p^2)$ when $p_1 = p_2 = p$. When $p^* \geq \bar{p}$, $\frac{\partial \underline{\mu}^{\Pi}}{\partial p} = 1 - 3p^2 \geq 0$, $\underline{\mu}^{\Pi}$ attains its minimum at $(\underline{p}, \underline{p})$. When $p^* > \bar{p}$, we need to check industry profit at point $(\bar{p}, \bar{p})$. For this, we make the following claim: 

\begin{claim}
   $\underline{\mu}^{\Pi}(\bar{p},\bar{p}) \geq \underline{\mu}^{\Pi}(\underline{p},\underline{p})$, the equation holds if and only if $s=0$.
   \end{claim}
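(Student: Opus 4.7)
The plan is to reduce the inequality to a comparison of a single increasing function at two arguments. First, on the diagonal $\ell$ the industry profit does not depend on $\alpha$ (Lemma \ref{lemma: pin down alpha}), and the symmetry relations $D_1^1 = D_2^1$, $D_1^2 = D_2^2$, together with $D_1^1 + D_2^2 = 1 - p^2$, yield
$$\underline{\mu}^{\Pi}(p,p) \;=\; \pi_1^1(p,p) + \pi_1^2(p,p) \;=\; p(1-p^2)$$
for every $(p,p) \in P$.

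Next I would exploit that at any diagonal boundary point $(p,p) \in \partial P \cap \ell$, the singleton $\varphi(p,p)$ must equal $\{1/2\}$ by the symmetry of IC1 and IC2 when $p_1 = p_2$ — the same symmetry argument used inside the proof of Lemma \ref{lemma: 1st order partial deriv.}. Hence both IC constraints bind at $\alpha = 1/2$, and binding IC1 gives
$$\max_{p_1'}\pi_1^2(p_1',p) \;=\; \pi_1^2(p,p) + \tfrac{1}{2} p B(p,p) \;=\; \tfrac{1}{2}\bigl(\pi_1^1(p,p) + \pi_1^2(p,p)\bigr) \;=\; \tfrac{1}{2}\,p(1-p^2).$$
Defining $\pi(p) := \max_{p_1'}\pi_1^2(p_1',p)$, every diagonal boundary point therefore satisfies $\underline{\mu}^{\Pi}(p,p) = 2\pi(p)$, so it suffices to show $\pi(\bar{p}) \geq \pi(\underline{p})$.

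I would close this via the envelope theorem. From the uniform-distribution formula for $D_1^2$ in Appendix \ref{appendix: set P}, $\partial_{p_2}\pi_1^2(p_1,p_2) = p_1(1-p_1)$, hence
$$\pi'(p) \;=\; \hat{p}_1(p)\bigl(1-\hat{p}_1(p)\bigr) \;>\; 0,$$
since $\hat{p}_1(p) \in (0,1)$. Combined with $\bar{p} \geq \underline{p}$ this delivers $\underline{\mu}^{\Pi}(\bar{p},\bar{p}) = 2\pi(\bar{p}) \geq 2\pi(\underline{p}) = \underline{\mu}^{\Pi}(\underline{p},\underline{p})$.

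For the equality statement: when $s=0$, Lemma \ref{lem: bonus} gives $B \equiv 0$, so the diagonal IC reduces to $\max_{p_1'}\pi_1^2(p_1',p) = \pi_1^2(p,p)$, which admits a unique symmetric best-response price and thus forces $\bar{p} = \underline{p}$. When $s>0$, $B(p,p) > 0$ on $(0,1)$, and by Property 4 of Proposition \ref{prop: P's property} the diagonal slice of $P$ strictly enlarges from the $s=0$ case, so $\bar{p} > \underline{p}$ and the inequality is strict. The only delicate step is the symmetry argument pinning down $\alpha = 1/2$ at $\partial P \cap \ell$; once that is in hand, the remainder is a clean one-variable monotonicity check.
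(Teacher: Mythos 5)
Your proof is correct and follows essentially the same route as the paper's: both reduce the diagonal boundary points to the identity $\underline{\mu}^{\Pi}(p,p)=2\max_{p'}\pi_1^2(p',p)$ (you via the binding ICs at $\alpha=1/2$, the paper via the boundary equation of Proposition \ref{prop: P's discrip}, which are the same fact) and then apply the envelope theorem to get $\pi'(p)=\hat{p}_1(p)(1-\hat{p}_1(p))>0$. Your explicit treatment of the equality case $s=0$ is a welcome addition that the paper's proof omits.
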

\begin{proof}
     For $p \in \partial P \cup \ell$, since the point is on the boundary, we have $\frac{2\max_{p^\prime}\pi(p^\prime,p)}{p}+F(p)^2=1$ and $\underline{\mu}^{\Pi}(p,p)=p(1-F(p)^2)$, then $\underline{\mu}^{\Pi}(p,p)=2\max_{p^\prime}\pi(p^\prime,p)$. We have $\Pi(\bar{p},\bar{p}) \geq \Pi(\underline{p},\underline{p})$ since $\max_{p_1^\prime}\pi(p_1^\prime,p)$ increases with $p$ (the derivative is $p_1^*(1-p_1^*)>0$). 
\end{proof}

Therefore, in $\ell$, $\underline{\mu}^{\Pi}$ attains its minimum at $(\underline{p}, \underline{p})$.

Next we consider $\partial P$. In $\partial P$, we need to consider three cases: $Arc(M^r, \bar{M})$, $Arc(M^{\text{low}}, M^r)$ and $Arc( \underline{M}, M^{\text{low}})$. Our goal is to find out the minimizer of $\underline{\mu}^{\Pi}(p_1, p_2)$ in each arc. According to the implicit function theorem, points in each arc can be expressed as $(p_1(p_2), p_2)$ with $\frac{dp_1}{dp_2}=-\frac{\partial H/\partial p_2}{\partial H / \partial p_1}$ or $(p_1, p_2(p_1))$ with $\frac{dp_2}{dp_1}=-\frac{\partial H/\partial p_1}{\partial H / \partial p_2}$. Then, 
\[\frac{\mathrm{d}\underline{\mu}^{\Pi}}{\mathrm{d}p_2} = \frac{\partial \underline{\mu}^{\Pi}}{\partial p_1} \left(-\frac{\partial H / \partial p_2}{\partial H / \partial p_1}\right) + \frac{\partial \underline{\mu}^{\Pi}}{\partial p_2} \ge 0 \Leftrightarrow \frac{\partial \underline{\mu}^{\Pi}/\partial p_2}{\partial \underline{\mu}^{\Pi}/\partial p_1} \ge \frac{\partial H/\partial p_{2}}{\partial H/\partial p_{1}}\]
\[\frac{\mathrm{d}\underline{\mu}^{\Pi}}{\mathrm{d}p_1} = \frac{\partial \underline{\mu}^{\Pi}}{\partial p_1}  + \frac{\partial \underline{\mu}^{\Pi}}{\partial p_2}\left(-\frac{\partial H / \partial p_1}{\partial H / \partial p_2}\right) \ge 0 \Leftrightarrow \frac{\partial \underline{\mu}^{\Pi}/\partial p_1}{\partial \underline{\mu}^{\Pi}/\partial p_2} \ge \frac{\partial H/\partial p_{1}}{\partial H/\partial p_{2}}\]

In $Arc(M^r, \bar{M})$, $p_1\ge p_2$, , $\partial H / \partial p_1 > 0$ and $\partial H / \partial p_2 > 0$, thus $\frac{\mathrm{d}\underline{\mu}^{\Pi}}{\mathrm{d}p_2}\ge 0$ is equivalent to
\[
\frac{p_1^*(1-p_1^*)}{p_2^*(1-p_2^*)} \geq \frac{\max \pi_1^2 + p_2p_1^*(1-p_1^*)-p_1(1-2p_2p_1)}
    {\max \pi_2^2 + p_1p_2^*(1-p_2^*)-p_2(1-2p_2p_1)}
\]
which holds due to the similar method as in proving Theorem $\ref{thm1}$. Thus, in $Arc(M^r, \bar{M})$, $\underline{\mu}^{\Pi}$ attains its minimum at point $M^r$.

In $Arc(M^{\text{low}}, M^r)$, since $\frac{dp_1}{dp_2}=-\frac{\partial H/\partial p_2}{\partial H / \partial p_1}>0$, $\partial \underline{\mu}^{\Pi}/\partial p_1=p_2^*(1-p_2^*)>0$ and $\partial \underline{\mu}^{\Pi}/\partial p_2=p_1^*(1-p_1^*)>0$, we learn that $\frac{\mathrm{d}\underline{\mu}^{\Pi}}{\mathrm{d}p_2}>0$, which means $\underline{\mu}^{\Pi}$ attains its minimum at point $M^{\text{low}}$.

In $Arc(M^r, \bar{M})$, $p_1\ge p_2$, , $\partial H / \partial p_1 < 0$ and $\partial H / \partial p_2 < 0$, thus $\frac{\mathrm{d}\underline{\mu}^{\Pi}}{\mathrm{d}p_1}\ge 0$ is equivalent to
\[
\frac{p_2^*(1-p_2^*)}{p_1^*(1-p_1^*)} \geq \frac{-\max\pi_2^2 - p_1p_2^*(1-p_2^*)+p_2(1-2p_2p_1)}
    { -\max \pi_1^2 - p_2p_1^*(1-p_1^*)+ p_1(1-2p_2p_1)}
\]
which holds by noticing that
\[p_2^*(1-p_2^*)\ge p_1^*(1-p_1^*)\]
\[(p_1^{*2}p_2-p_2^{*2}p_1)+[p_1^*\cdot p_1^*(1-p_1^*)-p_2^*\cdot p_2^*(1-p_2^*)] + [p_2\cdot p_1^*(1-p_1^*)-p_1\cdot p_2^*(1-p_2^*)]+ (p_2-p_1)(1-2p_2p_1)\le 0\]

Thus, in $Arc(M^r, \bar{M})$, $\underline{\mu}^{\Pi}$ attains its minimum at point $\underline{M}$.

Finally, it is easy to show the equality holds only at point $(\underline{p}, \underline{p})$. Thus, in $\partial \underline{P}$, $J^{\Pi}$ attains its minimum at $(\underline{p}, \underline{p})$. By the symmetry of $P$, this result can also be proved for $\partial \bar{P}$, which completes the proof. $\blacksquare$

\paragraph{Social Welfare}

\begin{claim}
    For every $(p_1, p_2) \in \underline{P}$, we have $J^{SW}(p_1, p_2, \underline{\alpha}) \leq J^{SW}(p_2, p_2, \underline{\alpha})$
\end{claim}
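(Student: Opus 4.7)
The plan is to use the identity $SW_1(p_1, p_2) - SW_2(p_1, p_2) = -(p_1 - p_2)^3 / 3$, derived under Assumption \ref{assumption: uniform} in the appendix, to reduce the claim to a monotonicity property of $SW_2$ alone. Writing $J^{SW}(p_1,p_2,\alpha) = SW_2(p_1, p_2) + \alpha\bigl(SW_1(p_1, p_2) - SW_2(p_1, p_2)\bigr)$ and substituting $\alpha = \underline{\alpha}(p_1, p_2)$ gives
\[
J^{SW}(p_1, p_2, \underline{\alpha}) = SW_2(p_1, p_2) - \underline{\alpha}(p_1, p_2)\cdot\frac{(p_1 - p_2)^3}{3}.
\]
On $\underline{P}$ we have $p_1 \geq p_2$, and the expression for $\underline{\alpha}$ in the proof of Proposition \ref{prop: P's discrip} shows $\underline{\alpha} \geq 0$, so the subtracted term is non-negative and hence $J^{SW}(p_1, p_2, \underline{\alpha}) \leq SW_2(p_1, p_2)$. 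On the diagonal, $SW_1(p_2, p_2) = SW_2(p_2, p_2)$ makes $J^{SW}(p_2, p_2, \cdot)$ independent of $\alpha$, so in particular $J^{SW}(p_2, p_2, \underline{\alpha}) = SW_2(p_2, p_2)$. It therefore suffices to prove the purely geometric inequality $SW_2(p_1, p_2) \leq SW_2(p_2, p_2)$ for all $(p_1, p_2) \in \underline{P}$.

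I would establish this by showing $\partial SW_2 / \partial p_1 \leq 0$ whenever $p_1 \geq p_2$ and then integrating in $p_1$ from $p_2$ upward. Using the closed forms for $V_2^1$ and $V_1^2$ (the analogues of the formulas in Appendix A with the roles of the two sellers swapped, so $\Delta p = p_1 - p_2$ becomes the price gap between the second and the first searched seller) together with the search-cost correction $-s(A - \Delta p) = -\tfrac{(1 - A)^2}{2}(A - \Delta p)$, direct Leibniz-rule differentiation yields
\[
\frac{\partial SW_2}{\partial p_1} = (A - \Delta p)(1 - A) + \frac{(A - \Delta p)^2 - p_2^2}{2} - \frac{1 - p_1^2}{2} - p_1 p_2 + \frac{(1 - A)^2}{2}.
\]
The key algebraic observation is that $(A - \Delta p)^2 + 2(A - \Delta p)(1 - A) + (1 - A)^2 = [(A - \Delta p) + (1 - A)]^2 = (1 - \Delta p)^2$, which together with $p_1^2 - 2 p_1 p_2 - p_2^2 = \Delta p^2 - 2 p_2^2$ collapses the expression to
\[
\frac{\partial SW_2}{\partial p_1} = (p_1 - p_2)^2 - (p_1 - p_2) - p_2^2.
\]

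To finish, I would verify that this is non-positive on $\underline{P}$. Setting $x = p_1 - p_2 \geq 0$, the inequality becomes $x(x - 1) \leq p_2^2$. Under Assumption \ref{assumption: A large} we are in the non-degenerate case $p_1 \leq A < 1$, so $x \in [0, 1]$ and therefore $x(x - 1) \leq 0 \leq p_2^2$. Integrating $\partial SW_2 / \partial p_1 \leq 0$ in $p_1$ from $p_2$ to the given value gives $SW_2(p_1, p_2) \leq SW_2(p_2, p_2)$, which combined with the reduction in the first paragraph yields the claim.

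The main obstacle is the algebraic simplification of $\partial SW_2 / \partial p_1$: spotting that three of the post-differentiation terms assemble into the perfect square $(1 - \Delta p)^2$ is what makes the cancellation telescope cleanly, and without that observation the six-term expression looks intractable. Everything else is routine differentiation and an elementary check on the range of $x$.
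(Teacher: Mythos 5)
Your proof is correct, and I checked the key computation independently: differentiating the closed form of $SW_2$ (seller 2 searched first, seller 1 second, with the search-cost term $-sF(A-\Delta p)$) term by term does give
\[
\frac{\partial SW_2}{\partial p_1} \;=\; p_1^2 - 2p_1p_2 - p_1 + p_2 \;=\; (p_1-p_2)^2 - (p_1-p_2) - p_2^2,
\]
which is non-positive whenever $0 \leq p_1 - p_2 \leq 1$. Your argument shares the paper's opening move---writing $J^{SW} = SW_2 + \underline{\alpha}(SW_1 - SW_2)$ and using $SW_1 - SW_2 = -\Delta p^3/3$---but then diverges. The paper keeps everything together, computes the full difference $\delta = J^{SW}(p_2,p_2,\underline{\alpha}) - J^{SW}(p_1,p_2,\underline{\alpha})$ in closed form as $\frac{\Delta p}{6}\bigl[2\underline{\alpha}\Delta p^2 + 4p_2^2 - 2p_1^2 + 4p_1p_2 + 3\Delta p\bigr]$, and establishes non-negativity of the bracket by bounding the price ratio $e = p_1/p_2 < 3.5$, which in turn invokes numerical bounds on the extreme points of $P$ (namely $p_1^R < 0.7$ and $p_2^{\text{low}} > 0.2$, valid only for sufficiently low $s$). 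You instead discard the $\underline{\alpha}\Delta p^3/3$ term immediately by sign and prove the purely geometric statement $SW_2(p_1,p_2) \leq SW_2(p_2,p_2)$ by monotonicity in $p_1$. The two arguments control the same quantity---integrating your derivative gives $SW_2(p_2,p_2) - SW_2(p_1,p_2) = -\Delta p^3/3 + \Delta p^2/2 + p_2^2\Delta p$, which is exactly the paper's $\delta$ with the $\underline{\alpha}$ term dropped---but your final verification, $x(x-1) \leq 0 \leq p_2^2$ for $x = \Delta p \in [0,1]$, needs nothing beyond $0 \leq p_2 \leq p_1 \leq A \leq 1$. It therefore dispenses with the ratio bound and the accompanying ``sufficiently low $s$'' restriction in this step, which is a genuine, if modest, tightening of the paper's argument.
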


\begin{proof}
We first give the reduced form of $\delta(p_1,p_2)\triangleq J^{SW}(p_2, p_2, \underline{\alpha})-J^{SW}(p_1, p_2, \underline{\alpha})$. By Appendix B.2, under uniform distribution, we learn that,
\[
\begin{aligned}
    J^{SW}(p_1, p_2, \underline{\alpha})= ~ &  \underline{\alpha}SW_1+(1-\underline{\alpha})SW_2=SW_2+\underline{\alpha}(SW_1-SW_2)\\
    = ~& \frac{1}{3}[A^3+(A-\Delta p)^3-p^3_1-p^3_2]+\frac{1}{2}[1-(A-\Delta p)^2]-\frac{\Delta p}{2}[1-(A-\Delta p)^2+p_2^2-p_1^2]\\
    & +\frac{A}{2}[1-A^2]-(A-\Delta p)\frac{1}{2}(1-A)^2-\frac{\underline{\alpha}}{3}\Delta p^3\\
\end{aligned}
\]
where $\Delta p =p_1-p_2>0$.

Thus, set $p_1=p_2$ to get
\[
J^{SW}(p_2, p_2, \underline{\alpha})=\frac{1}{3}[A^3+A^3-p^3_2-p^3_2]+\frac{1}{2}[1-A^2]+\frac{A}{2}[1-A^2]-A\cdot\frac{1}{2}(1-A)^2
\]   

Then, we have
\[
\begin{aligned}
    \delta (p_1,p_2)= ~ &  \frac{1}{3}[A^3-(A-\Delta p)^3-p^3_2+p_1^3]+\frac{1}{2}[(A-\Delta p)^2-A^2] -\Delta p\cdot \frac{1}{2}(1-A)^2 \\
    & + \frac{\Delta p}{2}[1-(A-\Delta p)^2+p_2^2-p_1^2]+ \frac{\underline{\alpha}}{3}\Delta p^3\\
    = ~ & \frac{\Delta p}{6}[2\underline{\alpha}\Delta p^2+4p_2^2-2p_1^2+4p_1p_2+3\Delta p]
\end{aligned}
\]

Our aim is to prove that $\delta(p_1,p_2)\ge0$. Since $p_1>p_2$, define $e=p_1/p_2>1$, then
\[
4p_2^2-2p_1^2+4p_1p_2+3\Delta p=(-2e^2+4+4e)p^2_2+3(e-1)p_2 \ge (-2e^2+7e+1)p^2_2 \geq 0.
\]
holds for $0<e<3.5$. For sufficient low $s$, we have $p_1^R<0.7$ and $p_2^{\text{low}}>0.2$. Thus, $e=p_1/p_2<p_1^R/p_2^{\text{low}}=3.5$, which completes the proof. 
\end{proof}

Now we need to address the points where \( p_2 < \underline{p} \) and \( p_1 > p_2 \). This region is defined as
\[
G := \{(p_1, p_2) \in P : p_2 < \underline{p} \text{ and } p_1 > p_2 \}.
\]
The next result shows that for any point in this region, we can move every point leftward to the boundary.

\begin{claim}
Suppose that Assumption \ref{assumption: uniform} and \ref{assumption: A large} hold. For any point \( (p_1, p_2) \in G \), we have \( \frac{\partial \bar{\mu}^{SW}}{\partial p_1} \leq 0 \).
\label{negative partial derivative}
\end{claim}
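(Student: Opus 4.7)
The plan is to compute $\partial \bar{\mu}^{SW}/\partial p_1$ explicitly and show it is non-positive on $G$. By Lemma \ref{lemma: pin down alpha}, for $p_1 > p_2$ we have $\bar{\mu}^{SW}(p_1,p_2) = J^{SW}(p_1,p_2,\underline{\alpha}(p_1,p_2))$. Using the identity $SW_1 - SW_2 = -\Delta p^3/3$ from Appendix B.2 (with $\Delta p = p_1 - p_2 > 0$ on $G$), I would write
\[\bar{\mu}^{SW}(p_1,p_2) = SW_2(p_1,p_2) - \frac{\underline{\alpha}(p_1,p_2)\,\Delta p^3}{3},\]
where $\underline{\alpha}(p_1,p_2) = (\max_{p_1'}\pi_1^2(p_1',p_2) - \pi_1^2(p_1,p_2))/(p_1 B(p_1,p_2))$ comes from binding IC1. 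Since $\max_{p_1'}\pi_1^2(p_1',p_2)$ does not depend on $p_1$, differentiating in $p_1$ yields
\[\frac{\partial \bar{\mu}^{SW}}{\partial p_1} = \frac{\partial SW_2}{\partial p_1} - \underline{\alpha}\,\Delta p^2 - \frac{\partial \underline{\alpha}}{\partial p_1}\cdot\frac{\Delta p^3}{3}.\]

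The next step is to substitute the uniform-distribution closed forms from Appendix B.2 for $SW_2$, $\pi_1^2$ and $B$, together with the first-order-condition characterization of $\hat{p}_1(p_2) = \arg\max_{p_1'}\pi_1^2(p_1',p_2)$, turning the right-hand side into a rational expression in $p_1, p_2, A$. The role of the hypothesis $p_2 < \underline{p}$ is geometric: since $(\underline{p},\underline{p})$ is the lowest point of $\partial P$ on the diagonal and $P$ is convex and symmetric about $p_1=p_2$ by Proposition \ref{prop: P's property}, any $(p_1,p_2) \in P$ with $p_2 < \underline{p}$ is forced to sit well above the diagonal, so $\Delta p$ is bounded below by a positive quantity. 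This geometric bound both forces $\underline{\alpha}$ close to $1$ (intuitively, when the competitor's price $p_2$ is small, $\pi_1^2$ is small, so IC1 can bind only if seller~$1$ is given the bulk of the traffic) and makes $\Delta p^2$ non-negligible, so the $-\underline{\alpha}\Delta p^2$ contribution becomes the dominant one.

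I expect the main obstacle to be controlling the $(\partial \underline{\alpha}/\partial p_1)\Delta p^3/3$ term, since $\underline{\alpha}$ is a ratio whose numerator and denominator both move with $p_1$, making its sign non-obvious. My intended workaround is to replace $\underline{\alpha}$ by its explicit rational form before differentiating, giving
\[\frac{\partial \underline{\alpha}}{\partial p_1}\,\Delta p^3 = -\frac{\Delta p^3}{p_1 B}\,\frac{\partial \pi_1^2}{\partial p_1} - \underline{\alpha}\,\Delta p^3\,\frac{\partial \log(p_1 B)}{\partial p_1},\]
and then recombining with $-\underline{\alpha}\Delta p^2$ so that the $\underline{\alpha}$ pieces collect into a single factor multiplied by a polynomial whose sign can be checked term by term, using $\partial \pi_1^2/\partial p_1$ and $\partial B/\partial p_1$ from Appendix A. If the resulting polynomial still resists a direct sign check, I would fall back to bounding $\underline{\alpha}$ and $\partial \underline{\alpha}/\partial p_1$ by their values on $\partial P$ (where IC1 and IC2 both bind and the boundary parametrization from Lemma \ref{lemma: 1st order partial deriv.} simplifies the expression), exploiting horizontal-slice monotonicity arguments analogous to those used in the proofs of Theorem \ref{thm1} and Proposition \ref{prop:trade probability} to reduce the interior case to the already-understood boundary case.
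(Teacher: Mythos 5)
Your setup is correct and matches the paper's: for $p_1>p_2$ Lemma \ref{lemma: pin down alpha} gives $\bar{\mu}^{SW}=J^{SW}(p_1,p_2,\underline{\alpha})=SW_2-\underline{\alpha}\Delta p^3/3$, and differentiating yields exactly $\partial SW_2/\partial p_1-\underline{\alpha}\Delta p^2-\tfrac{1}{3}\Delta p^3\,\partial\underline{\alpha}/\partial p_1$, with the $\partial\underline{\alpha}/\partial p_1$ term being the hard one. But the plan for closing the argument rests on a false geometric premise and is too vague at the decisive step. First, it is not true that $p_2<\underline{p}$ forces $\Delta p$ to be bounded below by a positive quantity: near $(\underline{p},\underline{p})$ the boundary $\partial P$ has slope $-1$, so $G$ contains points with $p_2=\underline{p}-\epsilon$ and $\Delta p$ of order $\epsilon$. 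Consequently $-\underline{\alpha}\Delta p^2$ is \emph{not} the dominant negative contribution; in the paper that term is simply discarded (bounded above by $0$), and the negativity is carried by $\partial SW_2/\partial p_1=(p_1-1)\Delta p-p_1p_2<0$, which must absorb the possibly positive term $-\tfrac{1}{3}\Delta p^3\,\partial\underline{\alpha}/\partial p_1$.

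Second, the sign of the resulting expression cannot be "checked term by term." The paper's route is: implicitly differentiate the binding IC1 to get a \emph{lower} bound on $\partial\underline{\alpha}/\partial p_1$ of the form $(\cdot)/(-p_1B)$ using $\underline{\alpha}\le 1$ and $D_1^2\le D_1^1$; bound the denominator via $|\Delta p|\le 1-A$ (from the validity constraint $A+\Delta p\le 1$, which keeps $B$ bounded away from $0$ relative to $\Delta p^2$); and then show the resulting polynomial $\xi=(3-2A)\Delta p+5p_1^3-8p_1^2p_2-7p_1^2+9p_1p_2-2p_2^2$ is nonpositive on a box $Q\supset G$ (e.g.\ $[0.2,0.5]^2$ for $A\ge 0.7$) by locating its critical point outside $Q$ and checking the edges. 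This polynomial has positive terms ($(3-2A)\Delta p$, $9p_1p_2$, $5p_1^3$) that are only dominated because of these region-specific numerical bounds, none of which your proposal identifies. Finally, your fallback of "reducing the interior case to the boundary via horizontal-slice monotonicity" is circular here: the claim \emph{is} the horizontal monotonicity statement, so you cannot assume a slice-monotonicity reduction to prove the pointwise sign of $\partial\bar{\mu}^{SW}/\partial p_1$. (That reduction is legitimate in Theorem \ref{thm1} only because there the relevant second derivative in $p_1$ has a known sign; no analogous concavity is established for $\bar{\mu}^{SW}$.)
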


\begin{proof}
We compute the partial derivative of \( \underline{\mu}^{SW} \) with respect to \( p_1 \) as
\[
\begin{aligned}
\frac{\partial \bar{\mu}^{SW}}{\partial p_1}=~& -[(A-\Delta p)^2+p_1^2]+(A-\Delta p)-\frac{1}{2}[1-(A-\Delta p)^2+p_2^2-p_1^2] \\
& -\Delta p(A-\Delta p-p_1)+\frac{1}{2}(1-A)^2-\underline{\alpha}\Delta p^2-\frac{1}{3}\Delta p^3\frac{\partial \underline{\alpha}}{\partial p_1}\\
=~& p_1^2-2p_1p_2-p_1+p_2-\underline{\alpha}\Delta p^2-\frac{1}{3}\Delta p^3\frac{\partial \underline{\alpha}}{\partial p_1} \\
\le ~& p_1^2-2p_1p_2-p_1+p_2-\frac{1}{3}\Delta p^3\frac{\partial \underline{\alpha}}{\partial p_1}
\end{aligned}
\]

Given \( (p_1, p_2) \), the function \( \underline{\alpha} \) is determined by the implicit condition IC1 as
\[
\text{IC1: } \max_{p_1'} p_1' D_1^2(p_1', p_2) = p_1 D_1^2(p_1, p_2) + \underline{\alpha}(p_1, p_2)p_1 B(p_1, p_2)).
\]

Taking partial derivatives with respect to \( p_1 \) from both sides, we get
\[
0 = D_1^2 + p_1 \frac{\partial D_1^2}{\partial p_1} + \frac{\partial \underline{\alpha}}{\partial p_1} p_1 B+ \underline{\alpha}B + \underline{\alpha} p_1 \frac{\partial B}{\partial p_1}.
\]

Thus,
\[
\begin{aligned}
\frac{\partial \underline{\alpha}}{\partial p_1}&= \frac{-1}{p_1B}[D_1^2 + p_1 \frac{\partial D_1^2}{\partial p_1} + \underline{\alpha}B + \underline{\alpha} p_1 \frac{\partial B}{\partial p_1}] \\
&\ge \frac{D_1^2 + p_1 \frac{\partial D_1^2}{\partial p_1} + (D_1^1 - D_1^2)}{-p_1B} \quad (\underline{\alpha} \leq 1, D_1^2 \leq D_1^1) \\
&= \frac{\frac{A^2}{2} - A + p_1^2 - p_1 p_2 - 2 p_1 - \frac{p_2^2}{2} + p_2 + 1}{-p_1B} \\
&\geq \frac{\frac{A^2}{2} - A -2p_1+ p_2 + 1 + p_1^2}{-p_1B} \\
&\geq \frac{\frac{3}{2} - A -2p_1+ p_2 + p_1^2}{-p_1B} \quad (A \leq 1, p_1 \leq 1).\\
& = \frac{\frac{3}{2} - A -2p_1+ p_2 + p_1^2}{p_1[\frac{1}{2}(p_1-p_2)^2 - (1-A^2)]}\\
& \geq \frac{\frac{3}{2} - A -2p_1+ p_2 + p_1^2}{-\frac{1}{2}p_1(1-A^2)} \quad (|p_1 - p_2| \leq 1-A)\\
& =  \frac{3 - 2A -4p_1+ 2p_2 + 2p_1^2}{-p_1(1-A^2)} 
\end{aligned}
\]


Therefore,
\[
\begin{aligned}
\frac{\partial \bar{\mu}^{SW}}{\partial p_1} 
&\leq \frac{3 - 2A -4p_1+ 2p_2 + 2p_1^2}{-p_1(1-A^2)} \left( \frac{-(\Delta p)^3}{3} \right) + p_2 - p_1 - 2 p_1 p_2 + p_1^2 \\
&= \frac{(3 - 2A -4p_1+ 2p_2 + 2p_1^2)\Delta p}{3p_1}  + p_2 - p_1 - 2 p_1 p_2 + p_1^2  \quad (|p_1 - p_2| \leq 1-A)\\
& = \frac{(3-2A)\Delta p + 5p_1^3 - 8p_1^2p_2 - 7p_1^2 + 9p_1p_2 - 2p_2^2}{3p_1}
\end{aligned}
\]

Let $\xi = (3-2A)\Delta p + 5p_1^3 - 8p_1^2p_2 - 7p_1^2 + 9p_1p_2 - 2p_2^2$. To prove the claim it suffices to show $\xi \leq 0$ for every point in $G$ when $A$ is large enough. Specifically, we show $\xi \leq 0$ when $A \geq 0.7$. Note that $\frac{\partial \xi}{\partial A} = -2\Delta p$, which is non-positive when $p_1\geq p_2$. It suffices to show that $\xi \leq 0$ for any $(p_1,p_2) \in G$ when $A = 0.7$. According to Proposition \ref{prop: P's property}, the set $P$ shrinks as $A$ increases, we have $G \subset Q:=\{(p_1,p_2) \in \underline{P}: 0.2 \leq p_1 \leq 0.5, 0.2 \leq p_2 \leq 0.5\}$.


From the above, it suffices to prove that $\xi \leq 0$ for any $(p_1,p_2)\in Q$ when $A = 0.7$. For $\xi$ we have the following first-order conditions:
\[
\begin{aligned}
    \frac{\partial \xi}{\partial p_1} &= 9p_2 - 14p_1 - 2A - 16p_1p_2 + 15p_1^2 + 3 = 0\\
    \frac{\partial \xi}{\partial p_2} &= - 8p_1^2 + 9p_1 + 2A - 4p_2 - 3 = 0
\end{aligned}
\]

When $A = 0.7$, the only real solution is $(p_1,p_2) \approx (0.8349,0.8436) \not \in Q$, which suggests the $\xi$ attains maximum on the boundary of $Q$. When $p_1= p_2$, on the edge between $(0.2,0.2)$ and $(0.5,0.5)$, $\xi = -3p_1^3 < 0$. 

When $p_2 = 0.2$, on the edge between $(0.2,0.2)$ and $(0.5,0.2)$, $\xi' = 
5p_1^3 - \frac{43}{5}p_1^2 + \frac{17}{5}p_1 - \frac{2}{5}$ and $\frac{d \xi'}{d p_1} = 15p_1^2 - \frac{83}{5}p_1 + \frac{17}{5}$. There are two roots for $\frac{d \xi'}{d p_1} = 0$: $p_1' = \frac{83}{150} \pm \frac{\sqrt{1789}}{150}$. Only the smaller root is on the edge, at which $\xi' \approx -0.107$. It implies $\xi$ is negative on the edge between $(0.2,0.2)$ and $(0.5,0.2)$.

When $p_1 = 0.5$, on the edge between $(0.5,0.2)$ and $(0.5,0.5)$, $\xi'' = 
-2p_2^2 + \frac{9}{10}p_2 - \frac{13}{40}
$. $\xi''$ is maximized at $p_2 =\frac{9}{40}$, at which $\xi'' = -\frac{179}{800} < 0$. 

The above analysis demonstrates that $\xi < 0$ in $Q$ and thus in $G$ when $A \geq 0.7$, which finishes our proof.
\end{proof}




\begin{claim}
    For any point $(p_1, p_2)$ in the $Arc(\underline{M}, M^{\text{low}})$, $\bar{\mu}^{SW}(\underline{p}, \underline{p})  \geq \bar{\mu}^{SW}(p_1, p_2)$.
\end{claim}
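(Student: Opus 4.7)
The plan is to parameterize the arc and show that $\bar{\mu}^{SW}$ is monotonically non-decreasing in $p_2$ along it, so that its maximum is attained at the endpoint $p_2 = \underline{p}$, corresponding to $\underline{M}$. Note that the natural comparison bound from the first claim ($\bar{\mu}^{SW}(p_1,p_2) \leq SW(p_2)$) is too weak here, because $SW(p_2) > SW(\underline{p})$ when $p_2 < \underline{p}$. A sharper, direction-specific argument is needed.

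By the implicit function theorem applied to $H(p_1,p_2)=0$ and the convexity of $P$ from Proposition \ref{prop: P's property}, I parameterize $Arc(\underline{M}, M^{\text{low}})$ as $(p_1(p_2), p_2)$ for $p_2 \in [p_2^{\text{low}}, \underline{p}]$. On this boundary arc both IC constraints bind, so $\underline{\alpha} = \bar{\alpha}$ is uniquely determined. Since $p_1 \geq p_2$ on the arc, Lemma \ref{lemma: pin down alpha} gives $\bar{\mu}^{SW}(p_1,p_2) = J^{SW}(p_1,p_2,\underline{\alpha})$, which under the uniform distribution reduces to
$$\bar{\mu}^{SW}(p_1,p_2) = SW_2(p_1,p_2) - \frac{\underline{\alpha}(p_1-p_2)^3}{3},$$
using the identity $SW_1 - SW_2 = -(p_1-p_2)^3/3$ from Appendix \ref{appendix: set P}. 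At $\underline{M}$ the correction term vanishes and $\bar{\mu}^{SW}(\underline{p},\underline{p}) = SW(\underline{p})$.

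The core step is to compute
$$\frac{d\bar{\mu}^{SW}}{dp_2} = \frac{\partial \bar{\mu}^{SW}}{\partial p_1}\, p_1'(p_2) + \frac{\partial \bar{\mu}^{SW}}{\partial p_2}, \qquad p_1'(p_2) = -\frac{\partial H/\partial p_2}{\partial H/\partial p_1},$$
and show it is non-negative. On $Arc(\underline{M}, M^{\text{low}})$ the outward normal to $P$ points to the lower-left, so $\partial H/\partial p_1 < 0$ and $\partial H/\partial p_2 < 0$, giving $p_1'(p_2) < 0$. By the previous claim, $\partial \bar{\mu}^{SW}/\partial p_1 \leq 0$ on $G$, which extends to the closure by continuity; hence the first summand is non-negative. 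The second summand $\partial \bar{\mu}^{SW}/\partial p_2$ is not sign-definite — at $\underline{M}$ it equals $SW'(\underline{p}) < 0$ — so one must show that the first (non-negative) contribution dominates.

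To close this, I differentiate the binding IC1 implicitly with respect to $p_2$ to obtain a bound on $\partial \underline{\alpha}/\partial p_2$, exactly analogous to the bound on $\partial \underline{\alpha}/\partial p_1$ derived in the proof of the previous claim. Substituting this bound together with the closed-form uniform-distribution expressions for $D_i^n$, $B$, and $SW_2$ from Appendices \ref{appendix: search problem} and \ref{appendix: set P} turns $d\bar{\mu}^{SW}/dp_2 \geq 0$ into a polynomial inequality in $(p_1,p_2,A)$. Under Assumption \ref{assumption: A large}, the arc is confined to a neighborhood of $\underline{M}$ (e.g.\ for $A \geq 0.7$ one has $p_2^{\text{low}} > 0.2$ and $p_1^R < 0.7$ as used in the previous claim), so the inequality need only be verified on a compact region, which can be done by locating the first-order critical points and checking boundary values, mirroring the region-restriction technique of the previous claim. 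The main obstacle is precisely this quantitative dominance estimate: without Assumption \ref{assumption: A large}, there is no clean qualitative sign argument, and the competing signs of the two partials force a careful numerical comparison. Once established, monotonicity implies $\bar{\mu}^{SW}$ on the arc is maximized at $p_2 = \underline{p}$, i.e.\ at $\underline{M}$, proving the claim.
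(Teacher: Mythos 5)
Your skeleton matches the paper's: parameterize the arc, use the previous claim to sign $\partial\bar{\mu}^{SW}/\partial p_1 \leq 0$, and handle the non-sign-definite partial in $p_2$ by implicitly differentiating the binding IC1 to bound $\partial\underline{\alpha}/\partial p_2$ and then verifying a polynomial inequality on a compact region under Assumption \ref{assumption: A large}. (The paper runs the arc in $p_1$ and shows $\mathrm{d}\bar{\mu}^{SW}/\mathrm{d}p_1 \leq 0$; your $\mathrm{d}\bar{\mu}^{SW}/\mathrm{d}p_2 \geq 0$ is the same statement.)

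The gap is exactly the step you flag as ``the main obstacle,'' and it is resolved by one idea you do not name. As written, your plan cannot deliver ``a polynomial inequality in $(p_1,p_2,A)$'': the total derivative contains $p_1'(p_2) = -(\partial H/\partial p_2)/(\partial H/\partial p_1)$, and the $H$-partials involve $\max\pi_i^2$ and the deviation prices $\hat{p}_i$, which are roots of quadratics, so substituting closed forms does not produce a polynomial in the prices alone. The paper sidesteps the ratio entirely by using the geometry of the arc: by convexity of $P$ and its symmetry about the diagonal, the boundary slope satisfies $-1 \leq \mathrm{d}p_2/\mathrm{d}p_1 \leq 0$ on $Arc(\underline{M}, M^{\text{low}})$ (slope $-1$ at $\underline{M}$, slope $0$ at $M^{\text{low}}$). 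It then splits cases: if $\partial\bar{\mu}^{SW}/\partial p_2 \geq 0$ the total derivative is at most $\partial\bar{\mu}^{SW}/\partial p_1 \leq 0$ and you are done; if $\partial\bar{\mu}^{SW}/\partial p_2 < 0$, the worst-case slope $-1$ gives $\mathrm{d}\bar{\mu}^{SW}/\mathrm{d}p_1 \leq \partial\bar{\mu}^{SW}/\partial p_1 - \partial\bar{\mu}^{SW}/\partial p_2$, and it is this \emph{difference of partials} that gets bounded via the estimates on $\partial\underline{\alpha}/\partial p_1 - \partial\underline{\alpha}/\partial p_2$ and checked as a polynomial inequality on the box containing the arc. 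In your parameterization the same bound reads $p_1'(p_2) \leq -1$, which combined with $\partial\bar{\mu}^{SW}/\partial p_1 \leq 0$ yields $\mathrm{d}\bar{\mu}^{SW}/\mathrm{d}p_2 \geq -\left(\partial\bar{\mu}^{SW}/\partial p_1 - \partial\bar{\mu}^{SW}/\partial p_2\right)$, landing on exactly the inequality the paper proves. Insert that slope bound and your plan closes; without it, the dominance estimate has no tractable form.
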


\begin{proof}
In $Arc(\underline{M}, M^{\text{low}})$, $-1 \leq \frac{\mathrm{d}p_2}{\mathrm{d}p_1} \leq 0$.

\[
\mathrm{d} \bar{\mu}^{SW}(p_1, p_2) = \frac{\partial \bar{\mu}^{SW}}{\partial p_1}\mathrm{d}p_1 + \frac{\partial \bar{\mu}^{SW}}{\partial p_2} \mathrm{d}p_2
\]
where the following cases apply:
\[
\frac{\mathrm{d}\bar{\mu}^{SW}(p_1, p_2)}{\mathrm{d}p_1}
\begin{cases}
  \leq \frac{\partial \bar{\mu}^{SW}}{\partial p_1}, \quad \text{if } \frac{\partial \bar{\mu}^{SW}}{\partial p_2} \geq 0, \\
  <  \frac{\partial \bar{\mu}^{SW}}{\partial p_1} - \frac{\partial \bar{\mu}^{SW}}{\partial p_2}, \quad \text{if } \frac{\partial \bar{\mu}^{SW}}{\partial p_2} < 0.
\end{cases}
\]

As a direct consequence of Claim \ref{negative partial derivative}, we have \( \frac{\mathrm{d}\bar{\mu}^{SW}(p_1, p_2)}{\mathrm{d}p_1} \leq 0 \) when \( \frac{\partial J^{SW}}{\partial p_2} \geq 0 \).

When \( \frac{\partial \bar{\mu}^{SW}}{\partial p_2} < 0 \), it suffices to show that for any point \( (p_1, p_2) \in Arc(\underline{M}, M^{\text{low}}) \backslash (\underline{p}, \underline{p}) \),
\[
\frac{\partial \bar{\mu}^{SW}}{\partial p_1} - \frac{\partial \bar{\mu}^{SW}}{\partial p_2} < 0.
\]

Note that
\[
\begin{aligned}
    \frac{\partial \bar{\mu}^{SW}}{\partial p_1} - \frac{\partial \bar{\mu}^{SW}}{\partial p_2} &= \left(\frac{\partial \underline{\alpha}}{\partial p_1} - \frac{\partial \underline{\alpha}}{\partial p_2}\right)(SW_1 - SW_2) + 2(p_1 - p_2)(p_1 - \underline{\alpha} p_1 + \underline{\alpha} p_2 - 1)\\
    &\leq \left(\frac{\partial \underline{\alpha}}{\partial p_1} - \frac{\partial \underline{\alpha}}{\partial p_2}\right)(SW_1 - SW_2) + 2(p_1 - p_2)(p_1 - 1).
\end{aligned}
\]

Taking the partial derivative of IC1 with respect to \(p_2\):
\[
\begin{aligned}
    \frac{\partial \max_{p_1'} p_1' D_1^2(p_1', p_2) }{\partial p_2} &= p_1\frac{\partial D_1^2}{\partial p_2} + \frac{\partial \underline{\alpha}}{\partial p_2} p_1 (D_1^1 - D_1^2) + \underline{\alpha} p_1 \left(\frac{\partial D_1^1}{\partial p_2} - \frac{\partial D_1^2}{\partial p_2}\right).
\end{aligned}
\]

Let \(p_1^*\) be the maximizer of \(p_1^* D_1^2(p_1^*, p_2)\). By the envelope theorem,
\[
\frac{\partial \max_{p_1'} p_1' D_1^2(p_1', p_2)}{\partial p_2} = \frac{\partial p_1 D_1^2(p_1, p_2)}{\partial p_2} \Big|_{p_1 = p_1^*} = p_1^*(1 - p_1^*).
\]

    After some algebra, we obtain:
\[
\begin{aligned}
    \frac{\partial \underline{\alpha}}{\partial p_2} &= \frac{p_1(1 - p_1) + \underline{\alpha} p_1 (p_1 - p_2) - p_1^*(1 - p_1^*)}{p_1\left(\frac{1}{2}(p_1 - p_2)^2 - (1 - A)^2\right)}\\
    &\leq \frac{p_1(1 - p_1) - \frac{1}{4}}{p_1\left(\frac{1}{2}(p_1 - p_2)^2 - (1 - A)^2\right)} \quad (\underline{\alpha} \geq 0, 0 \leq p_1^* \leq \frac{1}{2})\\
\end{aligned}
\]

According to the previous analysis we already have 

\[\frac{\partial \underline{\alpha}}{\partial p_1} \geq \frac{\frac{A^2}{2} - A + p_2 + p_1^2-2p_1 + 1}{p_1[\frac{1}{2}(p_1 - p_2)^2 - (1 - A)^2]}\]

Thus,
\[
\begin{aligned}
    \frac{\partial \underline{\alpha}}{\partial p_1} - \frac{\partial \underline{\alpha}}{\partial p_2} &\geq \frac{\frac{A^2}{2} - A + p_2 + p_1^2 + 1 - 2p_1 - p_1(1 - p_1) + \frac{1}{4}}{p_1(\frac{1}{2}(p_1 - p_2)^2 - (1 - A)^2)}\\
    &\geq \frac{\frac{A^2}{2} - A + 2p_1^2 - 3p_1 + p_2 + \frac{5}{4}}{- \frac{1}{2} p_1 (1 - A)^2}. \quad (|p_1 - p_2| \leq 1-A)\\
    & = \frac{A^2 - 2A + 4p_1^2 - 6p_1 + 2p_2 + \frac{5}{2}}{-p_1(1-A)^2}
\end{aligned}
\]

The second inequality comes from the fact that $A^2 - 2A + 4p_1^2 - 6p_1 + 2p_2 + \frac{5}{2}:=\tau > 0$ in $Arc(\underline{M}, M^{\text{low}})$ when $A \geq 0.7$. To see why, notice that the $\frac{\partial \tau}{\partial A} = 2A - 2\leq 0$. It suffices to show $\tau > 0$ when $A = 0.7$. Since in $Arc(\underline{M}, M^{\text{low}})$, $0.2 < p_1,p_2 < 0.45$, we have $\tau \geq 0.05 > 0$. 

Therefore we have
\[
\begin{aligned}
    \frac{\partial \bar{\mu}^{SW}}{\partial p_1} - \frac{\partial \bar{\mu}^{SW}}{\partial p_2} &\leq   \frac{A^2 - 2A + 4p_1^2 - 6p_1 + 2p_2 + \frac{5}{2}}{-p_1(1-A)^2} \left(-\frac{(p_1 - p_2)^3}{3}\right) + 2(p_1 - p_2)(p_1 - 1) \\
    &\leq \frac{(A^2 - 2A + 4p_1^2 - 6p_1 + 2p_2 + \frac{5}{2})(p_1 - p_2)}{3p_1} + 2(p_1 - p_2)(p_1 - 1) \quad (|p_1 - p_2| \leq 1 - A)\\
    &= \frac{\Delta p(A^2 -2A+10p_1^2-12p_1 + 2p_2 + \frac{5}{2})}{3p_1}\\
\end{aligned}
\]

Define $\omega: =A^2 -2A+10p_1^2-12p_1 + 2p_2 + \frac{5}{2}$. Since $\frac{\partial \omega}{\partial A} = (2A-2)\Delta p \leq 0$ when $p_1\geq p_2$, it suffices to show that $\omega \leq 0$ when $A = 0.7$. When $0.2 < p_2 \leq p_1 < 0.45$,
\[\left(\frac{\partial \omega}{\partial p_1}, \frac{\partial \omega}{\partial p_2}\right) = (20p_1-12, 2) \neq 0.\]

Thus, $\omega$ attains its maximum on the boundary of $Q' = \{(p_1,p_2): 0.2 \leq p_2\leq p_1\leq 0.45\}$. When $p_1 = p_2$, on the edge between $(0.2,0.2)$ and $(0.45,0.45)$, 
$$
\omega' =A^2 -2A+10p_1^2-10p_1 + \frac{5}{2} = 10p_1^2 - 10p_1 + 1.59 < 0.
$$

When $p_2 = 0.2$, on the edge between $(0.2,0.2)$ and $(0.45,0.2)$, 
$$
\omega'' = A^2 -2A+10p_1^2-12p_1 + 2.9
$$

Since $\frac{\partial \omega}{\partial p_1} = 20p_1-12 < 0$, $\omega''$ is maximized at $p_1 = 0.2$, where $\omega'' = 0.01 < 0$. 

When $p_1 = 0.45$, on the edge between $(0.45,0.2)$ and $(0.45,0.45)$, we have

\[\omega''' = 2p_2 - \frac{357}{200} < 0
\]

We thus conclude $\omega < 0$, which implies $\frac{\partial \bar{\mu}^{SW}}{\partial p_1} - \frac{\partial \bar{\mu}^{SW}}{\partial p_2} < 0$. Hence the claim is proved. 

The above analysis suggests that in $\underline{P}$, the social welfare is maximized at point $(\underline{p},\underline{p})$. Due to the symmetry of our model we can obtain the same result for $\bar{P}$ by changing the role of $p_1$ and $p_2$. Therefore, we conclude that the social welfare is maximized at point $(\underline{p},\underline{p})$. 

Because $J^{CS} = J^{SW} - J^{\Pi}$, we immediately know that the consumer surplus is maximized at point $(\underline{p}, \underline{p})$.
\end{proof}

\subsection*{Proof of Observation \ref{observation:corner}}

The first item is immediate by plugging $\alpha = 1$ into $\widehat{\text{IC}1}$ and $\widehat{\text{IC}2}$ and plugging $\alpha = 0$ into $\widehat{\text{IC}1'}$ and $\widehat{\text{IC}2'}$. 

For the second item, we define a new correspondence $\hat{\varphi}(p_1,p_2)$: 

\[
\hat{\varphi}(p_1,p_2) = \left\{\alpha \in [0,1]: 
\begin{cases}
    \text{IC1 and IC2}, &p_1 < A, p_2<A\\ 
    \widehat{\text{IC}1} \text{ and } \widehat{\text{IC}2}, &p_1 \geq A, p_2 < A\\
    \widehat{\text{IC}1'} \text{ and } \widehat{\text{IC}2'}, &p_1 < A, p_2 \geq A\\
    \widehat{\text{IC}1''} \text{ and } \widehat{\text{IC}2''}, &p_1 \geq A, p_2 \geq A\\
\end{cases}\right\}.
\]

Now we show that for any $(p_1,p_2)\in P$, $\hat{\varphi}(p_1,p_2) \subseteq {\varphi}(p_1,p_2)$. It suffices to consider the nondegenerate cases, i.e., when at least one of $p_1$ and $p_2$ is no less than A. 

When $p_1 \geq A, p_2<A$, we have 

\[\frac{\max_{p_1'}\pi_1^2(p_1',p_2)}{\pi_1^1} \leq \hat{\alpha} \leq 1- \frac{\max_{p_2'} \pi_2^2(p_1,p_2') -\pi_2^2}{\hat{\pi}_2^1-\pi_2^2} \]

We first notice that 
$$
\frac{\max_{p_1'}\pi_1^2}{\pi_1^1} 
\geq \frac{\max_{p_1^\prime} \pi_{1}^2-\pi_{1}^2}{p_1 B} = \frac{\max_{p_1^\prime} \pi_{1}^2-\pi_{1}^2}{\pi_1^1 - \pi_1^2}.
$$

Also, since 
$$
\begin{aligned}
    D_2^1 &=1-F(A-\Delta p)+ \int_{p_{1}}^{A} F(u)f(u-\Delta p)du \\
    &\geq 1 - F(A-p_1+p_2)\\
    & \geq 1 - F(p_2) = \hat{D}_2^1 \quad (p_1\geq A),
\end{aligned}
$$
we have $\hat \pi_2^1 \leq \pi_2^1$. Therefore, 

\[1- \frac{\max_{p_2'} \pi_2^2(p_1,p_2') -\pi_2^2}{\hat{\pi}_2^1-\pi_2^2} 
\leq 1- \frac{\max_{p_2'} \pi_2^2(p_1,p_2') -\pi_2^2}{{\pi}_2^1-\pi_2^2} = \bar{\alpha}. \]

Thus, when $p_1 \geq A$, $p_2 < A$, $\hat{\varphi}(p_1,p_2) \subseteq \varphi(p_1,p_2)$. We can obtain two other inequalities by changing the role of $p_1$ and $p_2$, which prove the case when $p_1< A$ and $p_2\geq A$.

When $p_1,p_2\geq A$, we have

\[\frac{\max_{p_1'}\pi_1^2}{\hat{\pi}_1^1}\leq \hat{\alpha} \leq 1 - \frac{\max_{p_2'}\pi_2^2}{\hat{\pi}_2^1}\]

We know that  
\[\frac{\max_{p_1'}\pi_1^2}{\hat{\pi}_1^1} \geq \frac{\max_{p_1'}\pi_1^2}{{\pi}_1^1} \geq \frac{\max_{p_1^\prime} \pi_{1}^2-\pi_{1}^2}{p_1 B} = \frac{\max_{p_1^\prime} \pi_{1}^2-\pi_{1}^2}{\pi_1^1 - \pi_1^2},\]
\[\frac{\max_{p_2'}\pi_2^2}{\hat{\pi}_2^1} \geq \frac{\max_{p_2'}\pi_2^2}{{\pi}_2^1} \geq \frac{\max_{p_2^\prime} \pi_{2}^2-\pi_{2}^2}{p_2 B} = \frac{\max_{p_2^\prime} \pi_{2}^2-\pi_{2}^2}{\pi_2^1 - \pi_2^2}.\]

We thus proved that for any $(p_1,p_2) \in P$, $\hat{\varphi}(p_1,p_2) \subseteq \varphi(p_1,p_2)$, which implies $\hat{P}\subseteq P$.

\bibliographystyle{aea}
\bibliography{Search}

\end{document}